\DeclareMathOperator{\cL}{\mathcal{L}}
\DeclareMathOperator{\basis}{\mathbf{B}}
\newtheorem{theorem}{Theorem}[section]
\newtheorem{lemma}[theorem]{Lemma}
\newtheorem{corollary}[theorem]{Corollary}
\newtheorem{conjecture}[theorem]{Conjecture}
\newtheorem{claim}[theorem]{Claim}
\newtheorem{defn}[theorem]{Definition}
\theoremstyle{definition}
\DeclareMathOperator{\real}{\mathbb{R}}
\newcommand{\intg}{\mathbb{Z}}
\newcommand{\inProd}[2]{\langle{#1},{#2}\rangle}
\newcommand{\ceil}[1]{\lceil{#1}\rceil}
\newcommand{\poly}{\mathrm{poly}}
\newcommand{\appCountingCNFSAT}{$\gamma$-\#\textsc{CNFSAT}}
\newcommand{\addErrorCountingCNFSAT}{$\Delta$-{\textsc{add}}-\#\textsc{CNFSAT}}
\newcommand{\kSAT}{$k$\textsc{-SAT}}
\newcommand{\CountingCNFSAT}{\#\textsc{CNFSAT}}
\newcommand{\ParityCNFSAT}{$\oplus$\textsc{CNFSAT}}
\newcommand{\qParityCNFSAT}{$\#_q$\textsc{CNFSAT}}
\newcommand{\majorityCNFSAT}{\textsc{maj-CNFSAT}}
\newcommand{\stMajorityCNFSAT}{\text{st}\textsc{-maj-CNFSAT}}
\newcommand{\CNFSAT}{\textsc{CNFSAT}}
\newcommand{\acQSETH}{$\textsc{AC}^0_2$\textsc{-QSETH}}
\newcommand{\BasicQSETH}{\textsc{Basic-QSETH}}
\newcommand{\QSETH}{\textsc{QSETH}}
\newcommand{\reduceWidth}{\ensuremath{\text{ReduceWidth}_{k}}}
\newcommand{\OV}{\textsf{OV}}
\newcommand{\sol}{\text{sol}}
\newcommand{\CNF}{\textsc{CNF}}
\newcommand{\DNF}{\textsc{DNF}}
\newcommand{\AC}{\mathsf{AC}_{2}^{0}}
\newcommand{\ACp}{\mathsf{AC}_{2,p}^{0}}
\newcommand{\ACarg}[2]{\mathsf{AC}_{{#1},{#2}}^{0}}
\newcommand{\propertyP}{\textsc{P}}
\newcommand{\propertyOR}{\textsc{OR}}
\newcommand{\propertyCount}{\textsc{count}}
\newcommand{\propertyParity}{\textsc{parity}}
\newcommand{\propertyMajority}{\textsc{majority}}
\newcommand{\propertyStrictMajority}{\textit{st}\textsc{-majority}}
\newcommand{\propertyAddErrorCount}{\ensuremath{\Delta \textsc{-add-count}}}
\newcommand{\vect}[1]{\boldsymbol{#1}}
\renewcommand{\epsilon}{\varepsilon}
\renewcommand{\vec}[1]{\vect{#1}}
\newcommand{\R}{\mathbb{R}}
\newcommand{\rnote}[1]{}
\newcommand{\Q}{\mathsf{Q}}
\crefname{conjecture}{Conjecture}{Conjectures}
\crefname{defn}{Definition}{definition}
\crefname{claim}{Claim}{Claims}
\newenvironment{claimproof}[1]{\par\noindent\underline{Proof:}\space#1}{\hfill $\blacksquare$}
\title{QSETH strikes again: finer quantum lower bounds for lattice problem, strong simulation, hitting set problem, and more}
\author{Yanlin Chen\thanks{
University of Maryland (QuICS)	 {\tt yanlin@umd.edu}. Most of this work was completed while the author was at Centrum Wiskunde $\&$ Informatica (QuSoft). }
\and 
Yilei Chen\thanks{Tsinghua University {\tt chenyilei.ra@gmail.com}. Supported by Tsinghua University startup funding, and Shanghai Qi Zhi Institute Innovation Program SQZ202405.}
\and 
Rajendra Kumar\thanks{Indian Institute of Technology Delhi {\tt rajendra@cse.iitd.ac.in}. Supported by the Chandrukha New Faculty Fellowship from IIT Delhi and the Prime Minister Early Career Research Grant from the Anusandhan National Research Foundation (ANRF).}
\and 
Subhasree Patro\thanks{Eindhoven University of Technology {\tt patrofied@gmail.com}}
\and 
Florian Speelman\thanks{University of Amsterdam and QuSoft {\tt f.speelman@uva.nl}. Supported by the Dutch Ministry of Economic Affairs and Climate Policy (EZK), as part of the Quantum Delta NL program, and the project Divide and Quantum `D\&Q' NWA.1389.20.241 of the program `NWA-ORC', which is partly funded by the Dutch Research Council (NWO).}}
\date{}
\begin{document}
\pagenumbering{roman}

\maketitle
\begin{abstract}
Despite the wide range of problems for which quantum computers offer a computational advantage over their classical counterparts, there are also many problems for which the best known quantum algorithm provides a speedup that is only quadratic, or even subquadratic.
Such a situation could also be desirable if we \emph{don't want} quantum computers to solve certain problems fast - say problems relevant to post-quantum cryptography. 
When searching for algorithms and when analyzing the security of cryptographic schemes, we would like to have evidence that these problems are difficult to solve on quantum computers; \emph{but how do we assess the exact complexity of these problems?} 

For most problems, there are no known ways to directly prove time lower bounds, however it can still be possible to relate the hardness of disparate problems to show \emph{conditional} lower bounds. This approach has been popular in the classical community, and is being actively developed for the quantum case \cite{BasicQSETH-Aaronson-2020, BPS21, BLPS22, ABLPS22}.

In this paper, by the use of the QSETH framework \cite{BPS21} we are able to understand the quantum complexity of a few natural variants of CNFSAT, such as parity-CNFSAT or counting-CNFSAT, and also are able to comment on the non-trivial complexity of approximate versions of counting-CNFSAT.
Without considering such variants, the best quantum lower bounds will always be quadratically lower than the equivalent classical bounds, because of Grover's algorithm; however, we are able to show that quantum algorithms will likely not attain even a quadratic speedup for many problems.
These results have implications for the complexity of (variations of) lattice problems, the strong simulation and hitting set problems, and more.
In the process, we explore the QSETH framework in greater detail and present a useful guide on how to effectively use the QSETH framework.
\end{abstract}

\thispagestyle{empty}
\newpage
\setcounter{tocdepth}{2}
{
  \hypersetup{linkcolor=black}
  \tableofcontents
}
\newpage
\pagenumbering{arabic}
\section{Introduction}

A popular classical hardness conjecture known as the Strong Exponential Time-Hypothesis (SETH) says that determining whether an input CNF formula is satisfiable or not cannot be done in $\mathcal{O}(2^{n(1-\delta)})$ time for any constant $\delta>0$ \cite{IP01,IPZ01}. Several fine-grained lower bounds based on SETH have been shown since then; see \cite{Vas15, Vas19} for a summary of many such results. Some of the SETH-hard problems are building blocks for fine-grained cryptography~\cite{DBLP:conf/stoc/BallRSV17,DBLP:conf/crypto/LaVigneLW19}.
Besides finding a satisfying assignment, natural variants of the \CNFSAT{} problem include computing the count or the parity of the count of satisfying assignments to a \CNF{} formula -- $\#$SETH and $\oplus$SETH conjecture complexity of these problems, respectively. These conjectures are weaker (i.e., more believable) than SETH, and can still be used to show fine-grained hardness of various problems~\cite{CDLMNOPSW16}.

When considering quantum computation, the SETH conjecture is no longer true, as using Grover's algorithm for unstructured search~\cite{grover1996fast} one can solve the \CNFSAT{} problem in $2^{\frac{n}{2}}\cdot \poly(n)$ time. Aaronson, Chia, Lin, Wang, and Zhang assume this Grover-like quadratic speedup is nearly optimal for CNFSAT and (independent of \cite{BPS21}) initiate the study of quantum fine-grained complexity~\cite{BasicQSETH-Aaronson-2020}. However, conjectures such as $\#$SETH or $\oplus$SETH are likely to still hold in the quantum setting because a Grover-like quantum speedup is not witnessed when the task is to compute the total number of satisfying assignments or the parity of this number. This situation can in some cases be exploited to give better quantum lower bounds than one would get from the conjectured quantum lower bound for the vanilla \CNFSAT{} problem. This makes it at least as relevant (if not more) to study variants of \CNFSAT{} and their implications in the quantum setting, as has been done classically. In fact, motivated by this exact observation, Buhrman, Patro, and Speelman~\cite{BPS21} introduced a framework of Quantum Strong Exponential-Time Hypotheses (QSETH) as quantum analogues to SETH, with a striking feature that allows one to `technically' unify conjectures such as quantum analogues of $\oplus$SETH, $\#$SETH, maj-SETH, etc.\ under one \emph{umbrella} conjecture. 


\paragraph{The QSETH framework}

In their framework, Buhrman et al.\ consider the problem in which one is given a formula or a circuit representation of a Boolean function $f:\{0,1\}^n \rightarrow \{0,1\}$ and is asked whether a property $\propertyP\coloneqq\big( \propertyP_n\big)_{n\in\mathbb{N}}$ where $\propertyP_n: \mathcal{D} \subseteq \{0,1\}^{2^n}\rightarrow \{0,1\}$ on the truth table\footnote{Truth table of a formula $\phi$ on $n$ variables, denoted by $tt(\phi)$, is a $2^n$ bit string derived in the following way $tt(\phi)=\bigcirc_{a \in \{0,1\}^n}\phi(a)$; the symbol $\circ$ denotes concatenation.} of this formula evaluates to $1$. They conjectured that when the circuit representation is obfuscated enough then for \textit{most} properties $\propertyP$ (that are compression-oblivious properties as we will see in \cref{def:CompressionOblivious}), the time taken to compute $\propertyP_n$ on the truth table of $\poly(n)$-sized circuits is lower bounded by $\Q(\propertyP_n)$, i.e.\ the $1/3$-bounded error quantum query complexity of $\propertyP_n$, on all bit strings of length $2^n$. 

It is not hard to see that such a conjecture cannot be true for \emph{all} properties. In principle, one can construct properties for which the above statement would not hold. For instance, consider a property $\propertyP$ that is trivial on truth tables of small formulas (i.e., $\poly(n)$ size) but complicated on formulas of longer length. These kinds of properties are likely to have very high quantum query complexity, but in reality, it will be trivial to compute such a $\propertyP_n$ of $\propertyP$ on formulas of $\poly(n)$ size. In order to prevent such scenarios the authors in \cite{BPS21} introduce the notion of compression-oblivious properties which they believe encompasses most of the naturally occurring properties. See Sections~2.2 and 2.3 of \cite{BPS21} for a detailed discussion on this topic and also see \cite{CCKPS25Qnaturalproof} for some new observations about the notion of compression-oblivious properties. To give a bit of intuition, first consider the set of truth tables corresponding to the set of $\poly(n)$ size formulas on $n$ variables and consider the set of all the $2^n$ bit strings. Compression-oblivious properties are those properties for which one cannot save in computational time to compute them on a string from the former set in comparison to computing the same property on strings from the latter set. More formally,\footnote{The definition of compression-oblivious properties as stated in this paper is a more formal version of its original definition in~\cite{BPS21}. Also, see \cite{CCKPS25Qnaturalproof} for a discussion on this topic.} 

\begin{defn}[$\ACp$- and $\AC$-Compression-Oblivious Properties \cite{BPS21,CCKPS25Qnaturalproof}]
\label{def:CompressionOblivious}
Let $p \in \mathbb{N}$. We say a property $\propertyP$ is \emph{$\ACp$-compression-oblivious}\footnote{
We say a language $L \in \ACarg{2}{p}$ iff there exists a family of Boolean circuits $\{C_n\}_{n \in \mathbb{N}}$ corresponding to $L$ such that $\forall n$, $C_n$ has depth at most $2$ and circuit size $|C_n| \leq n^p$.}, denoted by $\propertyP \in \mathcal{CO}(\ACp)$, if for every constant $\delta>0$, for every quantum algorithm $\mathcal{A}$ that computes $\propertyP$ in the black-box setting, $\forall n' \in \mathbb{N}, \exists n \geq n'$ and $\exists$ a set $L=\{L^1, L^2, \ldots\} \subseteq \ACp$ of `hard languages', such that $\forall$ circuit families $\{C^1_{n''}\}_{n'' \in \mathbb{N}}$ corresponding to $L^1$, $\forall$ circuit families $\{C^2_{n''}\}_{n'' \in \mathbb{N}}$ corresponding to $L^2$, $\ldots$, $\mathcal{A}$ uses at least $\Q(\propertyP_{n})^{1-\delta}$ quantum time on at least one of the inputs in $\{C^i_n\}_{i \in [|L|]}$. 

In particular, we say a property $\propertyP$ is \emph{$\AC$-compression-oblivious} (denoted by $\propertyP \in \mathcal{CO}(\AC)$) if $\exists p \in \mathbb{N}$ such that $\propertyP \in \mathcal{CO}(\ACp)$.
\end{defn}

With that, we can conjecture the following using the QSETH framework by~\cite{BPS21}.

\begin{conjecture}[\acQSETH{}, consequences of~\cite{BPS21}]
\label{conj:ACqseth}
Let $\propertyP$ be an \emph{$\AC$-compression-oblivious} property. The $\AC$-$\QSETH$ conjecture states that $\exists p \in \mathbb{N}$, such that for every constant $\delta>0$, for every quantum algorithm $\mathcal{A}$ that computes $\propertyP$ in the white-box setting, $\forall n' \in \mathbb{N},\exists n \geq n'$, $\exists$ a set $L=\{L^1, L^2, \ldots\} \subseteq \ACp$, $\forall $ circuit families $\{C^1_{n''}\}_{n''\in \mathbb{N}}$ corresponding to $L^1$, $\forall$ circuit families $\{C^2_{n''}\}_{n''\in \mathbb{N}}$ corresponding to $L^2$, $\ldots$, the algorithm $\mathcal{A}$ uses at least $\Q(\propertyP_{n})^{1-\delta}$ quantum time on at least one of the inputs in $\{C^i_n\}_{i \in [|L|]}$. \footnote{It is good to note that the original QSETH framework by~\cite{BPS21} allows us to consider formulas of more complicated classes. However, taking this complexity class as the class of all poly-sized \CNF{} or \DNF{} formulas suffices for the results presented in this paper. Therefore, using $\textsc{AC}^{0}_{2}$ to denote the class of all poly-sized $\CNF{}$ and $\DNF{}$ formulas, we define \acQSETH{} as the conjecture stated here and use this as the main conjecture of our paper.} 
\end{conjecture}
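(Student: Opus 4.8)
Strictly speaking \cref{conj:ACqseth} is a \emph{conjecture}, so a ``proof'' here can only mean deriving it from the umbrella QSETH conjecture of \cite{BPS21}: the plan is to present \acQSETH{} as the instantiation of that conjecture in which the class of circuit representations is taken to be poly-size \CNF{}/\DNF{} formulas (i.e.\ $\AC$), and to verify that the quantifier structure of \cref{def:CompressionOblivious} lifts coherently to the white-box setting. The irreducible ingredient --- that white-box access to an obfuscated-enough representation buys at most a subpolynomial speedup over black-box access --- is a genuine hardness assumption, unfalsifiable in the same way \SETH{} is, and is exactly what the conjecture asserts; so the point of the ``proof'' is only to isolate that ingredient and show that everything around it is routine.

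First I would recall the general conjecture of \cite{BPS21}: for a suitable representation class and for every compression-oblivious property $\propertyP$, every white-box quantum algorithm computing $\propertyP$ must spend $\Q(\propertyP_n)^{1-\delta}$ time on the truth table of some representation of the relevant size, for each constant $\delta>0$. Instantiating the representation class as $\AC$ is admissible --- \cite{BPS21} explicitly allow richer classes, and restricting to a narrower class only makes the ``$\exists$ a set $L$ of hard languages'' clause easier to meet --- and this is precisely the instantiation recorded in the footnote to \cref{conj:ACqseth}. I would then check the two bridges. (i) \emph{Black-box base case:} for $\propertyP\in\mathcal{CO}(\AC)$, \cref{def:CompressionOblivious} already states, essentially by definition, that no black-box algorithm beats $\Q(\propertyP_n)^{1-\delta}$ on the structured inputs, with exactly the ``$\exists$ hard language family $/$ $\forall$ circuit family'' alternation appearing in \cref{conj:ACqseth}; here I would also note that the degenerate property described in the text (trivial on small formulas, complicated on longer ones) admits a trivial algorithm that is fast on \emph{all} poly-size inputs, hence fails $\mathcal{CO}$, and is thus correctly excluded. (ii) \emph{White-box lift:} the passage from black-box to white-box is the content of QSETH, applied to the assertion that $\AC$ contains \CNF{}/\DNF{} instances whose truth tables are ``generic enough'' with respect to $\propertyP$; I would verify that the ``$\forall$ circuit families corresponding to $L^i$'' clause is compatible with this, since membership in $\AC$ is a property of the language while the obfuscation hypothesis is made about every representation of such a language, so changing representation cannot break it. Composing (i) and (ii) over the shared quantifier block yields \cref{conj:ACqseth}.

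The main obstacle is step (ii): with current techniques it cannot be proved --- like \SETH{} it is a hardness assumption, not a theorem --- and it is the whole point at which the conjecture acquires its strength. What remains genuinely verifiable is all routine: that poly-size \CNF{}/\DNF{} form an admissible instantiation of the \cite{BPS21} representation class, that the nested quantifiers of \cref{def:CompressionOblivious} and \cref{conj:ACqseth} line up, and that compression-obliviousness as formalized here excludes the pathological cases. One subtlety worth flagging is that the precise formalization of $\mathcal{CO}$ differs slightly between \cite{BPS21} and the version used in this paper (cf.\ \cite{CCKPS25Qnaturalproof}), so the derivation should be run against \cref{def:CompressionOblivious} as stated rather than against the informal original.
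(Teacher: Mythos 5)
Your overall read is exactly right: \cref{conj:ACqseth} is a \emph{conjecture}, not a theorem, and the paper does not prove it. The paper states it as the instantiation of the general QSETH framework of \cite{BPS21} to the representation class $\AC$ (poly-size \CNF/\DNF{} formulas), exactly as the footnote records, and your decomposition into the black-box base case (\cref{def:CompressionOblivious}) plus the irreducible white-box hardness assumption is the correct way to understand what is being asserted; the paper itself has no more content at this point than that.

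One heuristic you offer in passing is inverted, and worth correcting because it bears on how strong \acQSETH{} actually is. You write that ``restricting to a narrower class only makes the `$\exists$ a set $L$ of hard languages' clause easier to meet.'' In fact it is the opposite: narrowing the circuit class to $\ACp$ (as opposed to, say, the $\NC$ circuits used extensively in \cite{BPS21}) \emph{shrinks} the pool from which the hard languages $L^1, L^2,\ldots$ may be drawn. Since the existential is over a smaller set, the clause becomes \emph{harder} to satisfy, and \acQSETH{} is thereby a \emph{stronger} (more committal) assumption than its analogue over a richer class of representations --- in the same way that classical \SETH{}, asserting hardness for the restrictive class of \CNF{} formulas, is a stronger assumption than the trivially-true hardness of general circuit satisfiability. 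The paper's footnote is careful to say only that the $\AC$ instantiation ``suffices for the results presented in this paper,'' not that it is implied by or weaker than other instantiations of the \cite{BPS21} framework; the conjectures for different representation classes are not in a simple implication order once one also changes the corresponding compression-obliviousness hypothesis. This does not undermine your argument, but the justification for admissibility should be ``this is the natural and sufficient class for the applications at hand,'' not ``narrower is easier.''
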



In informal terms, the notion of $\AC$-compression-obliviousness captures properties whose query complexity is a lower bound for the time complexity to compute the property even for truth tables of small \CNF/\DNF{} formulas. And, the $\AC$-$\QSETH$ conjecture states that having access to this succinct representation of the truth table, i.e., the description of the formula itself, must not help towards improving the computation time in computing these properties. The terms \emph{black-box} and \emph{white-box} setting refer to how the algorithm can access the truth table - in the former setting (i.e., the black-box setting as stated in \Cref{def:CompressionOblivious}) even though the input is a \CNF/\DNF{} formula, the algorithm is only allowed to evaluate the formula on required inputs and has no access to the description of the inputted \CNF/\DNF{} formula, and, in the latter setting (i.e., the white-box setting as stated in \Cref{conj:ACqseth}) the algorithm is allowed to use the description as well.

In comparison to the original QSETH paper (by Buhrman et al.~\cite{BPS21}) where the framework was introduced and applied to a more complex class of formulas,\footnote{The authors in \cite{BPS21} extensively used QSETH framework for branching programs or equivalently NC circuits to show non-trivial lower bounds for edit distance and longest common subsequence problems.} this paper instead serves as a guide to using QSETH for the lowest level of formulas, i.e., poly-sized \CNF{} and \DNF{} formulas, in a more elaborate fashion.

\begin{table}[!ht]
\centering
\begin{tabular}{|l|l|l|l|}
\hline
                 Problem & Variants & Quantum lower bound & Reference \\ \hline 
&& \\[-1.25em]
\multirowcell{8}{\\ \\ \\ \textsc{\CNFSAT{}} } 
                  & Vanilla & $2^{\frac{n}{2}-o(n)}$  &\cref{cor:basicQSETH} \\ \cline{2-4}&&\\[-1.2em] 
                  & Parity & $2^{n-o(n)}$  &\cref{thm:ParityQSETH} \\ \cline{2-4}&&\\[-1.2em] 
                  & Majority & $2^{n-o(n)}$  &\cref{thm:MajorityQSETH} \\ \cline{2-4}&&\\[-1.2em] 
                  & Strict Majority & $2^{n-o(n)}$  &\cref{thm:MajorityQSETH} \\ \cline{2-4}&&\\[-1.2em] 
                  & Count & $2^{n-o(n)}$  &\cref{thm:countingQSETH} \\ \cline{2-4}&&\\[-1.2em] 
                  & $\text{Count}_q$ & $2^{n-o(n)}$  &\cref{conj:qParitySAT} \\ \cline{2-4}&&\\[-1.2em] 
                  & $\Delta$-Additive error & $\left( \sqrt{\frac{2^n}{\Delta}} + \frac{\sqrt{\hat{h} (2^n -\hat{h})}}{\Delta}\right)^{1-o(1)}$ &\cref{thm:AddErrorQSETH}\\ \cline{2-4}&&
\\[-1.2em] & $\gamma$-Multiplicative factor & $\left( \frac{1}{\gamma}\sqrt{\frac{2^{n}-\hat{h}}{\hat{h}}}\right)^{1-o(1)}$ &\cref{thm:AppCountQSETH} \\ \cline{2-4} \hline

&&\\[-1.25em]
\multirowcell{4}{\\ \\ $k$-\textsc{SAT} \\ $k=\Theta(\log(n))$} & Vanilla & $2^{\frac{n}{2}-o(n)}$ & \cref{sec:lattice}, \cite{BasicQSETH-Aaronson-2020} \\ \cline{2-4}&& \\[-1.2em]
                  & Parity & $2^{n-o(n)}$  &\cref{cor:countingksat} \\ \cline{2-4}&&\\[-1.2em]
                  & Count & $2^{n-o(n)}$ &\cref{cor:countingksat} \\ \cline{2-4}\\[-1.2em]
                  & $\text{Count}_q$ & $2^{n-o(n)}$ &\cref{cor:countingksat} \\ \cline{2-4}\\[-1.2em]
                  & $\gamma$-Multiplicative factor & $\left(\frac{1}{\gamma} \sqrt{\frac{2^n-\hat{h}}{\hat{h}}}\right)^{1-o(1)}$ &\cref{cor:MultErrorkSAT}\\ \hline

\end{tabular}
\caption{Overview of conditional lower bounds for variants of \CNFSAT{} and \kSAT{}. The variable $\hat{h}$ in the above table is an arbitrary natural number satisfying $\gamma\hat{h}\geq 1$. Our results hold for the multiplicative factor $\gamma \in \left[\frac{1}{2^{n}},0.4999\right)$ and the additive error $\Delta \in [1,2^n)$.}
\label{table:SummaryOfCNFResults}
\end{table}
\begin{table}[!ht]
\centering
\begin{tabular}{|l|l|l|l|}
\hline
                 Problem & Variants & Quantum lower bound & Reference \\ \hline 
&& \\[-1.25em]
\multirowcell{4}{\\ \\ \textsc{Strong Simulation} } & Exact (with $n$ bits precision) & $2^{n-o(n)}$  &\cref{thm:QHardnessExactStrongSimulation} \\ \cline{2-4}&&\\[-1.2em] 
& Exact (with $2$ bits precision) & $2^{n-o(n)}$  &\cref{thm:StrongSimulationExactTwoBits} \\ \cline{2-4}&&\\[-1.2em] 
                  & $\Delta$-Additive error & $\left( \sqrt{\frac{1}{\Delta'}} + \frac{\sqrt{\hat{h} (2^n -\hat{h})}}{2^n\Delta'}\right)^{1-o(1)}$ &\cref{cor:DeltaErrorQSim}\\ \cline{2-4}&&
\\[-1.2em] & $\gamma$-Multiplicative factor & $\left( \frac{1}{\gamma}\sqrt{\frac{2^{n}-\hat{h}}{\hat{h}}}\right)^{1-o(1)}$ &\cref{thm:gammaQsim} \\ \cline{2-4} \hline  

&&\\[-1.25em]
\multirowcell{1}{\textsc{CVP$_p$ for $p\notin 2\mathbb{Z}$}}
                  &  & $2^{\frac{n}{2}-o(n)}$ & \cref{sec:lattice} \\ \hline

&&\\[-1.25em]
\multirowcell{3}{\\ \textsc{Lattice Counting} \\ \text{(for non-even norm)}} & Vanilla &  $2^{n-o(n)}$ &\cref{cor:latticecountQSETH} \\ \cline{2-4}&& \\[-1.2em]
                  & $\gamma$-Multiplicative factor & $\left( \frac{1}{\gamma}\sqrt{\frac{2^{n}-\hat{h}}{\hat{h}}}\right)^{1-o(1)}$ &\cref{cor:gammacountQSETH}
                 \\  \cline{2-4}
                  & $q$-count & $2^{n-o(n)}$ &\cref{cor:qcountlatticeQSETH}
                 \\ \hline

&&\\[-1.25em]
\multirowcell{4}{\\ \textsc{Orthogonal Vectors} \\} & Vanilla & $n^{1-o(1)}$ &\cite{BasicQSETH-Aaronson-2020, BPS21} \\ \cline{2-4}&& \\[-1.2em]
                  & Parity & $n^{2-o(1)}$  &\cref{cor:OVQSETH} \\ \cline{2-4}&&\\[-1.2em]
                  & Count & $n^{2-o(1)}$ &\cref{cor:OVQSETH} \\ \cline{2-4}\\[-1.2em]
                  & $\gamma$-Multiplicative factor & $\left(\frac{1}{\gamma} \sqrt{\frac{n^2-\hat{h}}{\hat{h}}}\right)^{1-o(1)}$ &\cref{cor:OVQSETH}\\ \hline

&&\\[-1.25em]
\multirowcell{4}{\\ \textsc{Hitting Set} \\} & Vanilla & $2^{\frac{n}{2}-o(n)}$  &\Cref{cor:HitSetQSETH} \\ \cline{2-4}&& \\[-1.2em]
                  & Parity & $2^{n-o(n)}$ &\cref{cor:HitSetQSETH}\\ \cline{2-4}&&\\[-1.2em]
                  & Count & $2^{n-o(n)}$ &\cref{cor:HitSetQSETH}\\ \cline{2-4}\\[-1.2em]
                  & $\gamma$-Multiplicative factor & $\left(\frac{1}{\gamma} \sqrt{\frac{2^n-\hat{h}}{\hat{h}}}\right)^{1-o(1)}$ &\cref{cor:HitSetQSETH} \\ \hline

&&\\[-1.25em]
\multirowcell{1}{\textsc{$\oplus$Set Cover}}
                  &  & $2^{n-o(n)}$ &\cref{cor:paritySetCoverQSETH} \\ \hline
\end{tabular}
\caption{Overview of lower bounds based on \acQSETH{}. The variable $\hat{h}$ in the above table is an arbitrary natural number satisfying $\gamma\hat{h}\geq 1$. Our results hold for the multiplicative factor $\gamma \in \left[\frac{1}{2^{n}},0.4999\right)$ and the additive error $\Delta' \in [\frac{1}{2^n},1)$.}
\label{table:SummaryOfResults}
\end{table}


\paragraph{Summary and technical overview}
In this paper, we use the QSETH framework (or precisely, \acQSETH{}) to `generate' natural variations of QSETH such as $\oplus$QSETH, $\#$QSETH, maj-QSETH, etc., which could (arguably) already be acceptable standalone conjectures in the quantum setting, and study some of their interesting implications. Additionally, we also use the QSETH framework to prove quantum lower bounds for \emph{approximately} counting the number of satisfying assignments to \CNF{} formulas, a problem whose complexity has been of interest in the classical setting \cite{DL21}; we study its quantum implications. See \Cref{sec:LowerBoundsVariants} for details. Proof of this result follows from a more detailed exploration of the QSETH framework than what was required in the original paper. Thus, as another contribution of this paper, we present a useful guide on how to effectively use the QSETH framework. Here we carefully summarize our contributions and technical overview below: 
\begin{itemize}
    \item We zoom into Buhrman et al.'s QSETH framework at the lowest-level formula class, i.e., the class of polynomial-size \CNF{}s and \DNF{}s, and use it to study the quantum complexity of variations of \CNFSAT{} problems. The QSETH framework is quite general which also makes it not entirely trivial to use it thus, we present a useful guide on how to effectively use the \acQSETH{} conjecture, for e.g., what lemmas need to be proved and what assumptions are needed to be made in order to understand the quantum complexity of \CNFSAT{} and its variants; see \Cref{fig:Flowchart}.
    
    \item We can categorise the several variants of \CNFSAT{} in two ways. First classification can be done by the width of the \CNF{} formulas, i.e., $k$-\CNF{}s versus \CNF{}s of unbounded clause length. Second classification is made with the choice of the property of the truth table one desires to compute. See the summary of complexity all \CNFSAT{} variants and their respective quantum time lower bounds in \Cref{table:SummaryOfCNFResults} and see below for the overview of the techniques used.
    \begin{itemize}
        \item To prove the quantum time lower bounds for the property variants of \CNFSAT{} problem we invoke \acQSETH{} (\Cref{conj:ACqseth}). But, \acQSETH{} conjectures the hardness of properties on a set of \CNF{} and \DNF{} formulas. For properties like \propertyCount{}, \propertyParity{}, \propertyMajority{}, etc., it easily follows from De Morgan's laws that these properties are equally hard on both \CNF{} and \DNF{} formulas. However, such arguments no longer hold when the properties are approximate variants of count for which we give nontrivial proofs; see \cref{sec:AdditiveErrorCNFSAT,sec:MultiplicativeFactorCNFSAT}. 

        \item Additionally, we also use \acQSETH{} to understand quantum complexity of $k$-SAT and its property variants. As a first step we study the classical reduction from \CNFSAT{} to $k$-SAT given by \cite{Calabro-DualityWidthClause-2006} and observe that the $2^{\frac{n}{2}}$ quantum lower bound for $\kSAT{}$, for $k=\Theta(\log n)$, follows from the quantum lower bound of \CNFSAT{}. Moreover, we make an important observation that this reduction by \cite{Calabro-DualityWidthClause-2006} is count-preserving\footnote{Should not be mistaken to be parsimonious, see \Cref{sec:CountingKSAT} for details.} and can be used to conclude lower bounds for other counting variants of $k$-SAT. See \Cref{table:SummaryOfCNFResults} for the summary of these bounds.
    \end{itemize}

    \item Having (somewhat) understood the complexities of the above-mentioned variants of \CNFSAT{}, we then prove conditional quantum time lower bounds for lattice problem, strong simulation, orthogonal vectors, set cover, hitting set problem, and their respective variants; see \Cref{table:SummaryOfResults}. 
    \begin{itemize}
        \item The quantum $2^{\frac{n}{2}}$ time lower bound we present for $\text{CVP}_p$ (for $p \notin 2\mathbb{Z}$) follows from a reduction from \kSAT{} to $\text{CVP}_p$ by \cite{BGS17,DBLP:conf/soda/AggarwalBGS21} and from the hardness result of $k$-SAT we present. Though such a result would also trivially follow by using Aaronson et al.'s version of QSETH, we stress that our hardness result of $k$-SAT is based on basic-QSETH which is a more believable conjecture.\footnote{If basic-QSETH from Buhrman et al.'s framework is false then Aaronson et al.'s QSETH is also false, but the implication in the other direction is not obvious.}
        
        \item Additionally, we also discuss the quantum complexity of the lattice counting problem (for non-even norm). We present a reduction, using a similar idea of \cite{BGS17}, from $\#k$-SAT to the lattice counting problem and we show a $2^n$ time quantum lower bound for the latter when $k=\Theta(\log n)$. As mentioned earlier, we get a $2^n$ time quantum lower bound for $\#k$-SAT, when $k=\Theta(\log n)$, using \acQSETH{}.
        
        \item As another application to the bounds we get from the property variants of \CNFSAT{} we look at the strong simulation problem. It was already established by \cite{CHM21, VDN10} that strong simulation of quantum circuit is a \#P-hard problem but in this work we give exact lower bounds for the same. Additionally, using the lower bounds of approximate counts of \CNFSAT{} we are able to shed light on how hard it is to quantumly solve the strong simulation problem with additive and multiplicative error approximation.

        \item Last but not least, we are also able to use the lower bounds for the property variants of \CNFSAT{} to give interesting lower bounds for orthogonal vectors, hitting set problem and their respective variants. See \cref{sec:OVandOtherResults} for more details. 
    \end{itemize}
    Our motivation to study the worst-case complexities of counting versions of these problems stems from the fact that worst-case complexity of counting versions of problems have been used in past to understand average-case complexity of other related problems. And, computational problems that have high average-case complexities usually find their place in cryptographic primitives. For example, Goldreich and Rothblum in \cite{GR18} present a worst-case to average-case reduction for counting $t$-cliques in graph and use the average-case hardness result towards constructing an interactive proof system. Another such example is that of the \OV{} problem - Ball, Rosen, Sabin and Vasudevan in \cite{BRSV-PoW18} use the worst-case hardness of the counting variant of \OV{} to first prove average-case hardness of evaluating low-degree polynomials which they use towards a Proofs of Work (PoW) protocol. Furthermore, Buhrman, Patro and Speelman in \cite{BPS21} observed that this PoW protocol in combination with QSETH ensures that the quantum provers also require the same time as the classical provers.\footnote{Note that counting the number of \OV{} pairs on \emph{average} has a fast algorithm \cite{DLW20averageFineGrained} so a worst-case to average-case reduction for counting \OV{} is not possible under standard fine-grained complexity assumptions.}
\end{itemize}

\paragraph{Related work}
Our paper is a follow-up work to the original QSETH paper by \cite{BPS21}; also the list of problems for which we show lower bounds does not overlap with the problems studied in~\cite{BPS21}. A basic version of QSETH was also introduced by Aaronson et al.\ \cite{BasicQSETH-Aaronson-2020} where they primarily used it to study the quantum complexity of closest pair and bichromatic pair problems; they also discuss the complexity of the (vanilla version of) orthogonal vector problem. Prior to this work, a quantum hitting-set conjecture was proposed and its implications were discussed in Schoneveld's bachelor thesis \cite{Sch22}, but their definition of hitting set is different from ours. In our work, we observe that the parsimonious reduction from \CNFSAT{} to hitting set (\Cref{def:HS}) given by \cite{CDLMNOPSW16} is easily quantizable, using which we get a QSETH-based lower bound. Recently, Huang et al.~\cite{huang2024finegrained} showed a significant barrier to establishing fine-grained quantum reductions from $k$-SAT to lattice problems in the Euclidean norm. In contrast, our work focuses on lattice problems in the $\ell_p$ norm, where $p$ is not an even integer.


\paragraph{Structure of our paper}
In \Cref{sec:LowerBoundsVariants}, we discuss the quantum complexity of variants of \CNFSAT{} problems, for e.g.\ \CountingCNFSAT{}, \ParityCNFSAT{}, \appCountingCNFSAT{}, etc., conditional on \acQSETH{}. The broad idea to show the hardness results of these variants is quite similar; the \emph{several} lemmas only account for the properties being different. Using results from \Cref{sec:LowerBoundsVariants}, we show quantum time lower bounds for strong simulation problem in \Cref{sec:HardnessStrongSimulation}, lattice problems in \Cref{sec:lattice}, and orthogonal vectors, hitting set and set cover in \Cref{sec:OVandOtherResults}. 

\begin{figure}
    \centering
    \scalebox{0.7}{
\tikzstyle{terminator} = [rectangle, draw, text centered, rounded corners, minimum height=2em]
\tikzstyle{process} = [rectangle, draw, text centered, minimum height=2em]

\tikzstyle{decision} = [diamond, draw, text width=3.5cm, align=center, minimum height=2em, minimum width=2em]
\tikzstyle{data}=[trapezium, draw, text centered, trapezium left angle=60, trapezium right angle=120, minimum height=2em]
\tikzstyle{connector} = [draw, -latex']
\begin{tikzpicture}
\node [terminator, fill=blue!20] at (0,0) (start) {\textbf{Start}};
\node [data, fill=blue!20] at (0,-1.25) (data) {Goal: for a given property $\textsc{P}$ show the quantum time complexity of $\textsc{P-}$\CNFSAT{}};
\node [process, fill=blue!20] at (0,-2.5) (AC-QSETH) {Invoke \acQSETH{} (\Cref{conj:ACqseth})};
\node [decision, fill=blue!20] at (0,-6.25) (decisionEasyCO) {Easy to show $\textsc{P} \in \mathcal{CO}(\textsc{AC}_2^0)$?  See Theorem~9 in \cite{BPS21}.};
\node [process, fill=blue!20] at (0,-10) (conjectureCO) {Conjecture compression obliviousness of $\textsc{P}$};
\node [process, fill=blue!20] at (4.25,-6.25) (ProveIt) {Prove it};
\node [decision, fill=blue!20] at (0,-13.5) (decisionCO) {Is $\textsc{P} \in \mathcal{CO}(\textsc{AC}_2^0)$ (provably or in conjecture)?};
\node [process, fill=red!20] at (6.25,-13.5) (NotApplicable) {Cannot use \acQSETH{} for \textsc{P}};
\node [process, fill=blue!20] at (0,-17) (Q_P) {Compute the bounded-error quantum query complexity of $\textsc{P}$, i.e. $Q(\textsc{P})$};
\node [process, fill=blue!20] at (0,-18.25) (acQSETHlowerBoundP) {Quantum time lower bound for computing $\textsc{P}$ on \CNF{} and \DNF{} is $\Omega(Q(\textsc{P})^{1-\delta})$ for all $\delta>0$};
\node [decision, fill=blue!20] at (0,-22.25) (CNFvsDNFforP) {Are \CNF{}s at least harder in comparison to \DNF{}s for computing $\textsc{P}$};
\node [process, fill=red!20] at (8,-22.25) (NoComments) {Cannot comment on hardness of $\textsc{P-}\CNFSAT{}$};
\node [process, fill=blue!20] at (0,-26.25) (P-CNFSAT) {Complexity of $\textsc{P-}\CNFSAT{}$ is $\Omega(Q(\textsc{P})^{1-\delta})$ for all $\delta>0$, under \acQSETH{}};
\node [terminator, fill=blue!20] at (0,-27.5) (end) {\textbf{End}};

\node[draw=none] at (3, -5.75) (yes) {Yes};
\node[draw=none] at (0.5, -9.25) (no) {No};

\node[draw=none] at (3, -13.25) (no) {No};
\node[draw=none] at (0.5, -16.25) (yes) {Yes};

\node[draw=none] at (3.5, -22) (no) {No};
\node[draw=none] at (0.5, -25.5) (yes) {Yes};

\node[draw=none] at (4.25, -10.7) (invisibleNode) {};

\path [connector] (start) -- (data);
\path [connector] (data) -- (AC-QSETH);
\path [connector] (AC-QSETH) -- (decisionEasyCO);
\path [connector] (decisionEasyCO) -- (conjectureCO);
\path [connector] (conjectureCO) -- (decisionCO);
\path [connector] (decisionCO) -- (Q_P);
\path [connector] (Q_P) -- (acQSETHlowerBoundP);
\path [connector] (acQSETHlowerBoundP) -- (CNFvsDNFforP);
\path [connector] (CNFvsDNFforP) -- (P-CNFSAT);
\path [connector] (P-CNFSAT) -- (end);
\path [connector] (decisionEasyCO) -- (ProveIt);
\path [connector] (decisionCO) -- (NotApplicable);
\path [connector] (ProveIt) -- (invisibleNode);
\path [connector] (CNFvsDNFforP) -- (NoComments);

\draw [->] (4.25,-10.55) -- (0,-10.55);

\end{tikzpicture}
}
    \caption{Step-by-step guide on how to use the QSETH framework in a plug-and-play manner to show hardness results for $\textsc{P-}\CNFSAT{}$. Here $\textsc{P}$ can be any (partial or total) Boolean property of truth tables.}
    \label{fig:Flowchart}
\end{figure}
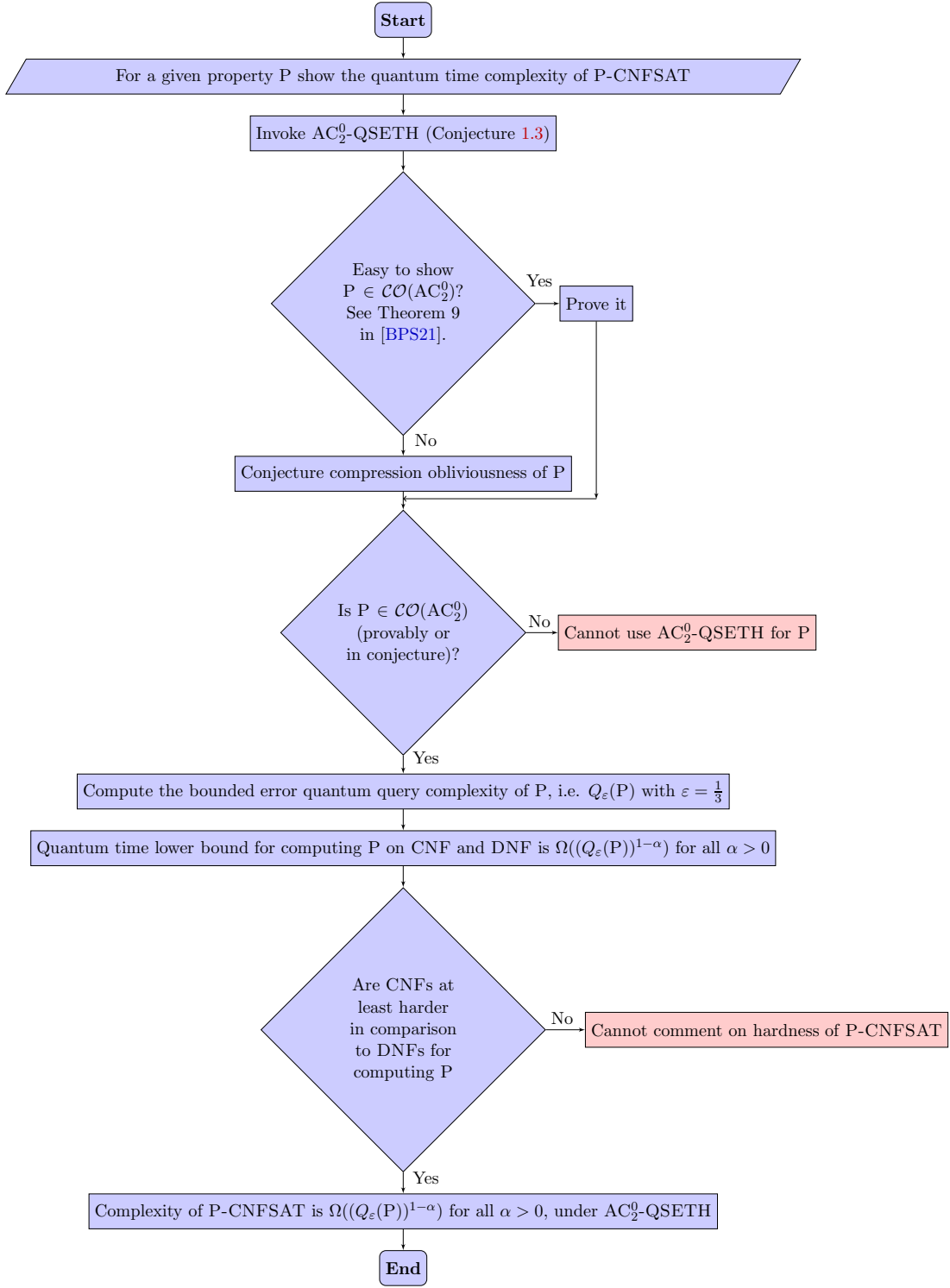

\newpage

\section{Preliminaries}

Throughout the paper, we use $n,d$ as natural numbers. 
For $1 \leq p < \infty$ the $\ell_p$-norm $\| \vec{x} \|_p$ of any vector $\vec{x} \in \R^d$ is defined by
\[
\| \vec{x} \|_p := \Big(\sum_{i = 1}^d \abs{x_i}^p\Big)^{1/p} \ .
\]
Additionally, the $\ell_{\infty}$ norm of such a vector $\vec{x}$ is defined as $\norm{\vec{x}}_{\infty} := \max_{i \in [d]} \abs{x_i}$.
The $\ell_p$ unit ball $B^d_p \subset \R^d$ is defined by $\{\vec{x} \in \R^d \mid \| \vec{x} \|_p \leq 1\}$.

\subsection{Quantum query complexity and quantum subroutines}

We start with defining Boolean properties, and then we will define the bounded-error quantum complexity of computing those properties.

\begin{defn}[Property]
\label{def:BooleanProperty}
We define a Boolean property (or just property) as a sequence $\propertyP \coloneqq\big( \propertyP_n\big)_{n\in\mathbb{N}}$ where each $\propertyP_n$ is a set of Boolean functions defined on $2^n$ variables.
\end{defn}

\paragraph{Quantum query complexity of $\propertyP$} The (bounded-error) quantum query complexity is defined only in a non-uniform setting, therefore, it is defined for $P_n$ for every $n \in \mathbb{N}$ instead of directly defining for $\propertyP$.  A quantum query
algorithm $\mathcal{A}$ for $\propertyP_n: \{0,1\}^{2^n} \rightarrow \{0,1\}$, on an input $x \in \{0,1\}^{2^n}$ begins in a fixed initial state $\ket{\psi_0}$, applies a sequence of unitaries $U_0, O_x, U_1, O_x, \ldots, U_T$, and performs a measurement whose outcome is denoted by $z$. Here, the initial state $\ket{\psi_0}$ and the unitaries $U_0, U_1, \ldots, U_T$ are independent of the input $x$. The unitary $O_x$ represents the ``query'' operation, and maps $\ket{i}\ket{b}$ to $\ket{i}\ket{b + x_i \textnormal{ mod } 2}$ for all $i \in [2^n]-1$. 
We say that $\mathcal{A}$ is a $1/3$-bounded-error algorithm computing $\propertyP_n$ if for all $x$ in the domain of $\propertyP_n$, the success probability of outputting $z=\propertyP_n(x)$ is at least $2/3$. Let cost$(\mathcal{A})$ denote the number of queries $\mathcal{A}$ makes to $O_x$ throughout the algorithm. The $1/3$-bounded-error quantum query complexity of $\propertyP_n$, denoted by $\Q(\propertyP_n)$, is defined as 
$\Q(\propertyP_n)=\min\{\text{cost}(\mathcal{A}):\mathcal{A} \text{ computes } \propertyP_n \text{ with error probability } \leq 1/3\}$.

We also introduce the quantum amplitude estimation subroutine.

\begin{theorem}[Implicit in Theorem~12 by \cite{BHMT02}]
\label{thm:AmplitudeEstimation}
Given a natural number $M$ and access to an $(n+1)$-qubit unitary $U$ satisfying
\begin{equation*}
    U\ket{0^n}\ket{0} = \sqrt{a}\ket{\phi_1}\ket{1}+\sqrt{1-a}\ket{\phi_0}\ket{0},
\end{equation*}
where $\ket{\phi_0}$ and $\ket{\phi_1}$ are arbitrary $n$-qubit states and $0<a<1$, there exists a quantum algorithm that uses $\mathcal{O}(M)$ applications of $U$ and $U^{\dagger}$ and $\widetilde {\mathcal{O}}(M)$ elementary gates, and outputs a state $\ket{\Lambda}$ such that after measuring that state, with probability $\geq 9/10$, the first register $\lambda$ of the outcome satisfies
\begin{equation*}
    |\sqrt{a}-\lambda|\leq \frac{100\pi}{M}.
\end{equation*}
\end{theorem}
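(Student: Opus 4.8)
This statement is the amplitude-estimation algorithm of \cite{BHMT02} with a cosmetic change to the final post-processing step: rather than reporting an estimate of $a$, one reports an estimate of $\sqrt a$, which is the quantity extracted most directly from the procedure and which admits a cleaner error bound. The plan is to (i) build the amplitude-amplification iterate from $U$ and identify its spectrum, (ii) reduce the task to quantum phase estimation, and (iii) pass from the phase estimate to an estimate of $\sqrt a$ using that $\sin$ is $1$-Lipschitz.

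First I would set up the iterate. Abbreviate $\ket{\psi}\coloneqq U\ket{0^n}\ket{0}$, let $S_\chi\coloneqq I-2\,(I_{2^n}\otimes\ket{1}\!\bra{1})$ be the reflection flipping the sign of every basis state whose last qubit is $1$, let $S_0\coloneqq I-2\ket{0^{n+1}}\!\bra{0^{n+1}}$, and define $Q\coloneqq -\,U\,S_0\,U^{\dagger}\,S_\chi$; one application of $Q$ costs exactly one call to $U$, one call to $U^{\dagger}$, and $\mathcal O(n)$ further elementary gates. Writing $\sqrt a=\sin\theta_a$ with $\theta_a\in(0,\pi/2)$, the standard two-dimensional analysis shows that the real plane spanned by $\ket{\phi_1}\ket{1}$ and $\ket{\phi_0}\ket{0}$ is $Q$-invariant, contains $\ket{\psi}$, and carries $Q$ as a rotation by $2\theta_a$; hence $Q$ has eigenvalues $e^{\pm 2i\theta_a}$ with eigenvectors $\ket{\psi_\pm}$, and $\ket{\psi}$ is, up to a global phase and a sign convention, the equal-weight superposition $\tfrac1{\sqrt2}\bigl(e^{i\theta_a}\ket{\psi_+}+e^{-i\theta_a}\ket{\psi_-}\bigr)$.

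Next I would run textbook phase estimation on $Q$, loading $\ket{\psi}=U\ket{0^{n+1}}$ into the target register and using a control register of $t=\lceil\log_2 M\rceil+\mathcal O(1)$ qubits; this makes $\mathcal O(2^t)=\mathcal O(M)$ controlled applications of $Q$, hence $\mathcal O(M)$ calls to $U$ and $U^{\dagger}$, plus $\widetilde{\mathcal O}(M)$ elementary gates (the $\mathcal O(M)$ controlled reflections and one inverse QFT on $\mathcal O(\log M)$ qubits). Since $\ket{\psi}$ is split evenly between the two eigenvectors, the measured control register returns an estimate $\widehat\omega$ of some $\omega^{\ast}\in\{\theta_a/\pi,\,1-\theta_a/\pi\}$; padding $t$ by a sufficiently large constant (equivalently, the Nielsen--Chuang ``$+\lceil\log(2+\tfrac1{2\delta})\rceil$'' recipe, or a median of $\mathcal O(1)$ repetitions) yields the circular bound $|\widehat\omega-\omega^{\ast}|\le 1/(2M)$ with probability $\ge 9/10$. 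I then set $\widehat\theta\coloneqq\pi\cdot\min\{\widehat\omega,\,1-\widehat\omega\}\in[0,\pi/2]$ — both branches of $\omega^{\ast}$ (and any wrap-around) fold to $|\widehat\theta-\theta_a|\le\pi/(2M)$ — and output $\lambda\coloneqq\sin\widehat\theta$, with $\ket{\Lambda}$ the joint post-estimation state after reversibly writing $\sin\widehat\theta$ into a fresh register. Because $|\sin x-\sin y|\le|x-y|$, we get $|\lambda-\sqrt a|=|\sin\widehat\theta-\sin\theta_a|\le\pi/(2M)\le 100\pi/M$. This Lipschitz step is the single place the argument departs from \cite{BHMT02}: they apply $\sin^2$ to the phase estimate and thereby pay a factor $\tfrac{d}{d\theta}\sin^2\theta=\sqrt{4a(1-a)}$, the source of the $\sqrt{a(1-a)}/M$ term in their bound, whereas estimating $\sqrt a$ avoids it.

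The main obstacle is the bookkeeping at the precision/success-probability interface: plain phase estimation with exactly $\lceil\log_2 M\rceil$ control qubits only meets the target accuracy with probability $\approx 8/\pi^2<9/10$, so one must enlarge the control register by a constant number of qubits or amplify by median-of-$\mathcal O(1)$, and then re-verify that the query count remains $\mathcal O(M)$ and that the accumulated constants stay comfortably below $100\pi/M$ — which is precisely why the statement carries such a generous constant. Secondary items warranting a line each: the degenerate regimes where $a$ is extremely close to $0$ or $1$ (so $\theta_a$ is near $0$ or $\pi/2$ and $\widehat\omega$ may wrap around the circle), checking that $Q$ on the relevant plane is genuinely a rotation rather than a reflection, and confirming that one controlled-$Q$ is one controlled call each to $U$ and $U^{\dagger}$; these are routine once the spectral picture is in place.
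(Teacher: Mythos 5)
The paper does not prove this theorem at all: it is stated in the preliminaries and attributed to Theorem~12 of \cite{BHMT02} without further argument, so there is no in-paper proof to compare against. Your reconstruction is a correct derivation of why the claim follows. In particular you have correctly identified (i) the Grover iterate $Q=-US_0U^\dagger S_\chi$ with eigenvalues $e^{\pm 2i\theta_a}$ where $\sqrt a=\sin\theta_a$, (ii) the fact that $U\ket{0^{n+1}}$ lies in the equal-weight span of the two eigenvectors, so phase estimation returns an estimate of one of $\theta_a/\pi$ or $1-\theta_a/\pi$, (iii) the need to pad the control register by a constant to lift the success probability from $8/\pi^2$ to $9/10$ while keeping the iterate count $\mathcal O(M)$, and (iv) the key departure from \cite{BHMT02}: post-processing with $\sin$ rather than $\sin^2$, so that $1$-Lipschitzness of $\sin$ gives a uniform $\mathcal O(1/M)$ error bound on $\sqrt a$ instead of the $\sqrt{a(1-a)}/M$-dependent bound on $a$. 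The folding $\hat\theta=\pi\min\{\hat\omega,1-\hat\omega\}$ correctly handles both eigenphase branches and circular wrap-around, and the loose constant $100\pi$ comfortably absorbs the bookkeeping. One small caveat worth stating explicitly if you wrote this up: the controlled-$Q^{2^j}$ operations in phase estimation require \emph{controlled} applications of $U$ and $U^\dagger$, so the theorem's count of ``$\mathcal O(M)$ applications of $U$ and $U^\dagger$'' is the usual shorthand for possibly-controlled calls.
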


\subsection{\CNFSAT{} and $k$-SAT}
A Boolean formula over variables $x_1,\ldots,x_n$ is in {\CNF{} form} if it is an AND of OR's of variables or their negations. More generally, a \CNF{} formula has the form 
$$\bigwedge\limits_{i}\Big(\bigvee\limits_{j} v_{ij}  \Big)$$
where $v_{ij}$ is either $x_k$ or $\neg x_k$. The terms $v_{ij}$ are called \emph{literals} of the formula and the disjunctions $\bigvee\limits_{j} v_{ij}$ are called its clauses. A $k$-CNF is a CNF formula in which all clauses contain at most $k$ literals (or the clause width is at most $k$). Note that when $k> n$, then clauses must contain duplicate or trivial literals (for example, $x_k\vee\neg x_k$ and $x_k\vee x_k$), therefore we can assume without loss of generality that $k$ is at most $n$. A DNF is defined in the exact same way as CNF, except that it is an OR of AND's of variables or their negations, that is, a DNF formula has the form $\bigvee\limits_{i}\Big(\bigwedge\limits_{j} v_{ij} \Big)$. 
We also define computational problems $k$-SAT and CNFSAT.

\begin{defn}[CNFSAT]
    Given as input a CNF formula $\phi$ defined on $n$ variables, 
    the goal is to determine if $\exists x\in\{0,1\}^n$ such that $\phi(x)=1$.
\end{defn}

\begin{defn}[$k$-SAT]
    Given as input a $k$-CNF formula $\phi$ defined on $n$ variables, 
    the goal is to determine if $\exists x\in\{0,1\}^n$ such that $\phi(x)=1$.
\end{defn}


\subsection{Lattice}
For any set of $n$ linearly independent vectors $\mathbf{B}=\{\vect{b_1},\hdots,\vect{b_n}\}$ from $\real^d$, the lattice $\cL$ generated by basis $\mathbf{B}$ is
\[ \cL(\mathbf{B})= \left\{ \sum\limits_{i=1}^n z_i\vect{b_i} : z_i \in \intg  \right\}.\]
We call $n$ the \textit{rank} of the lattice $\cL$ and $d$ the \textit{dimension}. The vectors $\mathbf{B}=\{\vect{b}_1,\ldots,\vect{b}_n\}$ forms a \textit{basis} of the lattice. Given a basis $\basis$, we use $\cL(\basis)$ to denote the lattice generated by $\basis$.

\begin{defn}
For any $1\leq p \leq \infty$, the Closest Vector Problem CVP$_p$ is the search problem defined as: The input is a basis $\basis \in \real^{d\times n}$ of the lattice $\cL$ and a target vector $\vect{t}$. The goal is to output a vector $\vect{v}\in \cL$ such that $\|\vect{v}-\vect{t}\|_p =  \min\limits_{x\in\cL}\|\vect{x}-\vect{t}\|_p$.
\end{defn}

With access to a CVP$_p$ solver, we can (almost) uniformly sparsify a lattice as follows.

\begin{theorem}[\cite{SteDiscreteGaussian16}, modified Theorem 3.2]\label{thm:spar}
Let ${\mathbf{B}}$ be a basis of a lattice $\cL({\mathbf{B}})$ and $Q$ be a prime number. Consider the sparsification process: input any two vectors $\vect{z},\vect{c}\in \intg_Q^n$, the sparsification process $Spar(\mathbf{B},Q,\vect{z},\vect{c})$ outputs a basis $\mathbf{B}_{Q,\vect{z}}$ of the sublattice $\cL_{\vect{z}}\subset \cL=\{\vect{u}\in \cL: \inProd{\vect{z}}{{\mathbf{B}}^{-1} \vect{u}}  =0 \mod Q\}$ and $\vect{w}_{\vect{z},\vect{c}}={\mathbf{B}}\vect{c}$. Then for every $\vect{t}\in \mathbb{R}^n$, $\vect{x}\in\cL$ with $N=|(\cL-\vect{t})\cap \|\vect{x}-\vect{t}\|\cdot B^n_p| \leq Q$, and $CVP_p$ oracle, we have
$$
\frac{1}{Q}-\frac{N}{Q^2}-\frac{N}{Q^{n-1}}\leq \Pr\limits_{\vect{z},\vect{c} \in \intg_Q^n}[CVP_p(\vect{t}+\vect{w}_{\vect{z},\vect{c}},\cL_{\vect{z}})=\vect{x}+\vect{w}_{\vect{z},\vect{c}}]\leq \frac{1}{Q}+\frac{1}{Q^n},
$$
and in particular,
$$
\frac{N}{Q}-\frac{N^2}{Q^2}-\frac{N^2}{Q^{n-1}}\leq \Pr\limits_{\vect{z},\vect{c} \in \intg_Q^n}[\min\limits_{\vect{u}\in\cL_{z}}\|(\vect{t}+\vect{w}_{\vect{z},\vect{c}}-\vect{u}\|_p)\leq\|\vect{x}-\vect{t}\|_p]\leq \frac{N}{Q}+\frac{N}{Q^n}.
$$
\end{theorem}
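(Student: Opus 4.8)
The plan is to reduce the two displayed chains of inequalities to a routine first- and second-moment (Bonferroni) estimate on the ``survival'' of lattice points after sparsification and shifting. For $\vect{y}\in\cL$ write $\vect{a}_{\vect{y}}:=\mathbf{B}^{-1}\vect{y}\in\intg^n$, and say that $\vect{y}$ \emph{survives} (for a given pair $\vect{z},\vect{c}\in\intg_Q^n$) if $\vect{y}+\vect{w}_{\vect{z},\vect{c}}\in\cL_{\vect{z}}$; since $\vect{w}_{\vect{z},\vect{c}}=\mathbf{B}\vect{c}$, this is exactly the condition $\inProd{\vect{z}}{\vect{a}_{\vect{y}}}+\inProd{\vect{z}}{\vect{c}}\equiv 0\pmod{Q}$. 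Let $S:=\{\vect{y}\in\cL:\|\vect{y}-\vect{t}\|_p\le\|\vect{x}-\vect{t}\|_p\}$, so $\vect{x}\in S$ and $|S|=N$. Then: (a) the event ``$\vect{x}$ survives and no $\vect{y}\in S\setminus\{\vect{x}\}$ survives'' implies that $\vect{x}+\vect{w}_{\vect{z},\vect{c}}$ is the \emph{unique} vector of $\cL_{\vect{z}}$ within distance $\|\vect{x}-\vect{t}\|_p$ of $\vect{t}+\vect{w}_{\vect{z},\vect{c}}$, and hence that $CVP_p(\vect{t}+\vect{w}_{\vect{z},\vect{c}},\cL_{\vect{z}})=\vect{x}+\vect{w}_{\vect{z},\vect{c}}$; (b) $CVP_p(\vect{t}+\vect{w}_{\vect{z},\vect{c}},\cL_{\vect{z}})=\vect{x}+\vect{w}_{\vect{z},\vect{c}}$ implies ``$\vect{x}$ survives''; (c) $\min_{\vect{u}\in\cL_{\vect{z}}}\|\vect{t}+\vect{w}_{\vect{z},\vect{c}}-\vect{u}\|_p\le\|\vect{x}-\vect{t}\|_p$ holds iff some $\vect{y}\in S$ survives. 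So it suffices to estimate the probability that some (resp.\ exactly one, resp.\ none) of a small family of survival events occurs.

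For the moments, use that $Q$ is prime, so $\intg_Q$ is a field and $\mathbf{B}^{-1}$ induces an isomorphism $\cL/Q\cL\cong\intg_Q^n$. Conditioning on $\vect{z}$: if $\vect{z}\ne\vect{0}$ then $\inProd{\vect{z}}{\vect{c}}$ is uniform on $\intg_Q$ over random $\vect{c}$, so any single survival event has conditional probability exactly $1/Q$; the event $\vect{z}=\vect{0}$ has probability $Q^{-n}$ and makes \emph{every} point survive. Hence $\Pr[\vect{y}\ \text{survives}]=\tfrac1Q+\tfrac{Q-1}{Q^{n+1}}\in[\tfrac1Q,\,\tfrac1Q+\tfrac1{Q^n}]$ for every $\vect{y}\in\cL$. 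For a pair $\vect{y}\ne\vect{y}'$ in $\cL$ with $\vect{y}-\vect{y}'\notin Q\cL$ (equivalently $\vect{a}_{\vect{y}}\not\equiv\vect{a}_{\vect{y}'}\pmod{Q}$), subtracting the two survival conditions yields a \emph{nonzero} $\intg_Q$-linear constraint on $\vect{z}$, which cuts the $\vect{z}$-space by a factor $Q$; together with the fresh randomness of $\inProd{\vect{z}}{\vect{c}}$ this gives $\Pr[\vect{y},\vect{y}'\ \text{both survive}]=\tfrac1{Q^2}+\tfrac1{Q^n}-\tfrac1{Q^{n+1}}\le\tfrac1{Q^2}+\tfrac1{Q^{n-1}}$. (The shift by $\vect{c}$ is precisely what turns the ``bad'' pairs from merely $\vect{z}$-parallel pairs into genuinely $Q\cL$-congruent ones.)

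Now assemble via Bonferroni. Upper bounds: by (b), $\Pr[CVP_p(\cdot)=\vect{x}+\vect{w}_{\vect{z},\vect{c}}]\le\Pr[\vect{x}\ \text{survives}]\le\tfrac1Q+\tfrac1{Q^n}$; by (c) and a union bound, $\Pr[\,\min_{\vect{u}}\|\cdot\|_p\le\|\vect{x}-\vect{t}\|_p\,]\le\sum_{\vect{y}\in S}\Pr[\vect{y}\ \text{survives}]\le\tfrac NQ+\tfrac N{Q^n}$. Lower bounds (assuming the points of $S$ are pairwise non-congruent modulo $Q\cL$; see below): by (a) and a union bound over the at most $N$ terms, $\Pr[CVP_p(\cdot)=\vect{x}+\vect{w}_{\vect{z},\vect{c}}]\ge\Pr[\vect{x}\ \text{survives}]-\sum_{\vect{y}\in S\setminus\{\vect{x}\}}\Pr[\vect{x},\vect{y}\ \text{both survive}]\ge\tfrac1Q-\tfrac N{Q^2}-\tfrac N{Q^{n-1}}$; by (c) and inclusion--exclusion, $\Pr[\,\min_{\vect{u}}\|\cdot\|_p\le\|\vect{x}-\vect{t}\|_p\,]\ge\sum_{\vect{y}\in S}\Pr[\vect{y}\ \text{survives}]-\sum_{\{\vect{y},\vect{y}'\}\subseteq S}\Pr[\vect{y},\vect{y}'\ \text{both survive}]\ge\tfrac NQ-\tfrac{N^2}{Q^2}-\tfrac{N^2}{Q^{n-1}}$, using $\binom N2<N^2$ and the hypothesis $N\le Q$ to keep the slack in the stated form.

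The one genuinely delicate point --- which I expect to be the main obstacle --- is the treatment of pairs $\vect{y},\vect{y}'\in S$ with $\vect{y}-\vect{y}'\in Q\cL\setminus\{\vect{0}\}$: for such a pair the two survival events \emph{coincide}, so $\Pr[\vect{y},\vect{y}'\ \text{both survive}]\approx 1/Q$ rather than $1/Q^2$, which would destroy the lower bounds if many such pairs were present. This is handled exactly as in the original argument of \cite{SteDiscreteGaussian16}: two points of $S$ differing by an element of $Q\cL$ would be within distance $2\|\vect{x}-\vect{t}\|_p$ of one another yet at distance at least $Q\cdot\lambda_1^{(p)}(\cL)$ apart, so in the parameter regime in which the statement is invoked (where $Q$ is large relative to the distances in play) $S$ meets each coset of $Q\cL$ at most once and the bad case cannot arise; otherwise the bad pairs can be absorbed into the stated $O(N/Q^{n-1})$ and $O(N^2/Q^{n-1})$ slack. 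The remaining bookkeeping --- the $O(Q^{-n})$ contributions of the degenerate event $\vect{z}=\vect{0}$, and the passage from ``unique closest vector'' to ``the $CVP_p$ oracle outputs $\vect{x}+\vect{w}_{\vect{z},\vect{c}}$'' (uniqueness leaves no tie-breaking freedom) --- is routine.
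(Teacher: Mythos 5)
The paper does not give its own proof of this theorem; it cites it as a modified form of Theorem~3.2 in \cite{SteDiscreteGaussian16}. So I am judging your proof on its own merits.

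Your overall plan is the right one (survival events, first and second moments over $(\vect{z},\vect{c})$, Bonferroni on both sides), and the moment calculations for points that are distinct modulo $Q\cL$ are correct. You also correctly identify the one place the argument could go wrong: if two distinct $\vect{y},\vect{y}'\in S$ satisfy $\vect{y}-\vect{y}'\in Q\cL$, their survival events coincide, the pairwise probability jumps to order $1/Q$, and the inclusion--exclusion lower bounds collapse. But your resolution of this point is not correct, in two ways. First, you appeal to ``the parameter regime in which the statement is invoked (where $Q$ is large relative to the distances in play).'' No such parameter assumption appears in the theorem; the only quantitative hypothesis is $N\le Q$. A proof has to work under exactly that hypothesis. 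Second, your fallback claim that ``otherwise the bad pairs can be absorbed into the stated $O(N/Q^{n-1})$ and $O(N^2/Q^{n-1})$ slack'' is false: even a single bad pair contributes $\Theta(1/Q)$ to the second-moment sum, which is much larger than $N^2/Q^{n-1}$ when $n\ge 3$ and $N\le Q$, and this would push both lower bounds below zero (or at least below the stated expressions). You cannot paper over a $1/Q$ term with a $1/Q^{n-1}$ slack.

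The hypothesis $N\le Q$ in fact rules out the bad case outright, by a short convexity argument that should replace your hand-wave. Suppose $\vect{y},\vect{y}'\in S$ are distinct with $\vect{y}-\vect{y}'=Q\vect{v}$ for some $\vect{v}\in\cL\setminus\{\vect{0}\}$. Then the $Q+1$ points $\vect{y}, \vect{y}+\vect{v}, \ldots, \vect{y}+Q\vect{v}=\vect{y}'$ are distinct lattice points lying on the segment $[\vect{y},\vect{y}']$. Since $\vect{y},\vect{y}'$ both lie in the convex body $\vect{t}+\|\vect{x}-\vect{t}\|_p\cdot B^n_p$, so do all the intermediate points, hence all $Q+1$ of them lie in $S$. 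This gives $N=|S|\ge Q+1$, contradicting $N\le Q$. Therefore the points of $S$ are pairwise non-congruent modulo $Q\cL$, the survival events for distinct $\vect{y},\vect{y}'\in S$ are genuinely distinct linear constraints, your pairwise probability bound applies to every pair, and the rest of the Bonferroni bookkeeping goes through as you wrote it. With this repair the proof is complete.
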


\section{Lower bounds for variants of CNFSAT using \acQSETH{}} 
\label{sec:LowerBoundsVariants}
We will now define several variants of \CNFSAT{} problem and using \acQSETH{} understand the quantum complexities of all of them. The consequences of these hardness results, some of which follow immediately and the rest with some work, will be discussed in \Cref{sec:HardnessStrongSimulation,sec:lattice,sec:OVandOtherResults}. We begin with some common variants of \CNFSAT{} problem (such as \kSAT{}) which are also very well studied classically \cite{CDLMNOPSW16}; we do this in \cref{sec:PopularVariantsCNFSAT}. And, proceed with some less popular variants (\cref{sec:AdditiveErrorCNFSAT,sec:MultiplicativeFactorCNFSAT,sec:CountingKSAT}) but with interesting consequences (presented in \cref{sec:HardnessStrongSimulation,sec:lattice,sec:OVandOtherResults}).





\subsection{Quantum complexity of \CNFSAT{} and other related problems}
\label{sec:basicCNFSAT}
We first restate the quantum hardness of \CNFSAT{} before delving into showing hardness results for its other variants. Interestingly, for the property $\propertyOR : \{0,1\}^{2^n} \rightarrow \{0,1\}$, where for $x \in \{0,1\}^{2^n}$ we define $\propertyOR(x)=1$ if $|x|\geq 1$ and $\propertyOR(x)=0$ whenever $|x|=0$, we can explicitly prove that $\propertyOR{}\in \mathcal{CO}(\AC)$~\cite{BPS21,CCKPS25Qnaturalproof}. 
Also, note that computing $\propertyOR{}$ on truth tables of \DNF{} formulas of $\poly(n)$ length can be computed in $\poly(n)$ time. Hence, using \acQSETH{} we can recover the following Basic-QSETH conjecture.

\begin{corollary}[\BasicQSETH{} \cite{BPS21}]\label{cor:basicQSETH}
 For each constant $\delta > 0$, there exists $c>0$ such that there is no bounded-error quantum algorithm that solves \CNFSAT{} (even restricted to formulas with $m\leq c n^2$ clauses) in $\mathcal{O}\left(2^{\frac{n(1-\delta)}{2}}\right)$ time, unless \acQSETH{} (\cref{conj:ACqseth}) is false.
\end{corollary}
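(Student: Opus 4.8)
The plan is to derive \Cref{cor:basicQSETH} as a direct instantiation of \acQSETH{} (\Cref{conj:ACqseth}) applied to the property $\propertyOR$, combined with the two facts just recalled: that $\propertyOR \in \mathcal{CO}(\AC)$ and that $\Q(\propertyOR_n) = \Theta(2^{n/2})$. First I would make explicit the identification between \CNFSAT{} and ``computing $\propertyOR$ on the truth table of a \CNF{} formula'': given a \CNF{} formula $\phi$ on $n$ variables, a quantum algorithm solving \CNFSAT{} in the white-box sense decides exactly whether $\propertyOR_n(tt(\phi)) = 1$, since $\propertyOR_n(tt(\phi)) = 1$ iff $|tt(\phi)| \geq 1$ iff $\phi$ has a satisfying assignment. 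Conversely, any white-box algorithm computing $\propertyOR$ on the truth tables of $\poly(n)$-sized \CNF{} formulas is precisely a \CNFSAT{} solver. (If one wants to stay with \DNF{} instead, one uses De Morgan: $\phi$ is satisfiable iff $\neg\phi$, written as a \DNF{}, is not a tautology; but working directly with \CNF{}s is cleanest here.)

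Next I would invoke \Cref{conj:ACqseth} with $\propertyP = \propertyOR$. Since $\propertyOR \in \mathcal{CO}(\AC)$, the conjecture applies and yields: there exists $p \in \mathbb{N}$ such that for every constant $\delta' > 0$, every white-box quantum algorithm $\mathcal{A}$ computing $\propertyOR$ must, for infinitely many input lengths $n$, spend at least $\Q(\propertyOR_n)^{1-\delta'}$ quantum time on some \CNF{}/\DNF{} family of size $\leq n^p$. Plugging in the known bound $\Q(\propertyOR_n) = \Theta(2^{n/2})$ (Grover search is optimal for $\propertyOR$, which in particular gives $\Q(\propertyOR_n) \geq \lb(2^{n/2})$), this lower bound becomes $2^{n(1-\delta')/2 - O(1)} = 2^{n(1-\delta)/2}$ for a suitable $\delta$ slightly larger than $\delta'$ and for all large $n$. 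Taking $c$ to be any constant with $n^p \leq c n^2$ for the relevant range — this is where the restriction to $m \leq c n^2$ clauses comes from, and it requires noting that $p$ can be taken to be $2$, i.e.\ that $\propertyOR$ is in fact $\mathcal{CO}(\AC_2^0)$ with the quadratic-size witness family, which is exactly the content of the $\propertyOR \in \mathcal{CO}(\AC)$ claim as proved in \cite{BPS21,CCKPS25Qnaturalproof} — we conclude that no bounded-error quantum algorithm solves \CNFSAT{} on such formulas in $\mathcal{O}(2^{n(1-\delta)/2})$ time.

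Finally, I would contrapose: the statement of \Cref{cor:basicQSETH} is exactly ``if such a fast \CNFSAT{} algorithm exists then \acQSETH{} is false'', which is the contrapositive of the implication just established. The main obstacle, such as it is, is bookkeeping rather than mathematics: one has to carefully track the quantifier structure of \Cref{def:CompressionOblivious} and \Cref{conj:ACqseth} (the ``for infinitely many $n$'' and ``on at least one input of the family'' clauses) so that a hypothetical $\mathcal{O}(2^{n(1-\delta)/2})$-time algorithm genuinely contradicts the $\Q(\propertyOR_n)^{1-\delta'}$ lower bound for \emph{every} admissible $\delta'$, and to confirm that the polynomial degree $p$ in the compression-obliviousness of $\propertyOR$ is small enough (namely $2$) to justify the ``$m \leq c n^2$ clauses'' sharpening. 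No new ideas beyond the cited facts are needed; the corollary is essentially a re-packaging of \acQSETH{} specialized to the simplest property.
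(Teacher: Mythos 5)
Your high-level plan is the right one (invoke \acQSETH{} with $\propertyP = \propertyOR$, use $\propertyOR \in \mathcal{CO}(\AC)$ and $\Q(\propertyOR_n)=\Theta(2^{n/2})$), but there is a genuine gap in how you move from the \acQSETH{} lower bound to a lower bound on \CNFSAT{}. \acQSETH{} only asserts that the algorithm is slow on \emph{at least one} input drawn from a set of \emph{\CNF{} or \DNF{}} formulas. A priori that slow input could always be a \DNF{}, in which case \acQSETH{} would hold while \CNFSAT{} remained easy. To close the implication you must exhibit a single algorithm $\mathcal{A}$ that is fast on \emph{all} poly-size \CNF{}/\DNF{} inputs, and derive a contradiction. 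The paper does this by the (stated but easy to miss) observation preceding the corollary: computing $\propertyOR$ on the truth table of a \DNF{} formula is trivially $\poly(n)$-time, since a \DNF{} is satisfiable iff some single AND-clause contains no conflicting literal pair. Thus, if a fast \CNFSAT{} solver existed, you could dispatch on whether the input circuit is \CNF{} or \DNF{}, run the hypothetical solver in the first case and the trivial check in the second, and beat the conjectured lower bound everywhere --- that is the contradiction.

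Your parenthetical De Morgan remark does not repair this. Translating ``$\phi$ (\CNF{}) satisfiable'' to ``$\neg\phi$ (\DNF{}) is not a tautology'' exchanges $\propertyOR$ for the negation of $\propertyAND$, a \emph{different} property, and leaves you with a coNP-flavored question rather than establishing that $\propertyOR$ on \DNF{}s is easy. (Contrast this with the proof of \cref{thm:countingQSETH}, where De Morgan \emph{does} work: $|\text{sol}(C)| = 2^n - |\text{sol}(\neg C)|$, so a count-oracle on \CNF{}s handles \DNF{}s too. For $\propertyOR$, satisfiability of $\neg C$ carries no information about satisfiability of $C$, so the analogous trick fails and the direct ``\DNF{}-$\propertyOR$ is trivial'' observation is indispensable.) A secondary point: your assertion that the witness exponent $p$ may be taken to be $2$ (hence the $m\le cn^2$ restriction) is plausible and matches the form of the statement, but it is a fact about the compression-obliviousness proof in the cited references, not something that follows from $\propertyOR\in\mathcal{CO}(\AC)$ as a bare membership statement; it deserves an explicit citation rather than being derived.
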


Note that Aaronson et al.~\cite{BasicQSETH-Aaronson-2020} directly conjecture the above statement, while in our case the above conjecture is implied by  \acQSETH{} (\cref{conj:ACqseth}), and we will show how \acQSETH{} can imply other conjectured time lower bound for variants of CNFSAT problems in this subsection.

\subsubsection{Quantum complexity of \CountingCNFSAT{}, \ParityCNFSAT{}, \qParityCNFSAT{} and \majorityCNFSAT{}}
\label{sec:PopularVariantsCNFSAT}
To give conditional quantum lower bounds for variants of \CNFSAT{}, we should understand their corresponding quantum query lower bound (on the $2^n$-bit truth table). Here we introduce the properties that correspond to those popular variants of \CNFSAT{} (which will be defined later.)


\begin{defn} 
\label{def:PopularBooleanProperties} 
Let $|x|=|\{i : x_i=1\}|$ denote the Hamming weight of $N$-bit binary string $x$. We here list some properties defined on binary strings.
\begin{enumerate}
    \item \propertyCount{}: Let $\propertyCount :\{0,1\}^N \rightarrow [N]\cup\{0\}$ be the non-Boolean function defined by $\propertyCount{}(x) = |x|.$  
    \item \propertyParity{}: Let $\propertyParity :\{0,1\}^N \rightarrow \{0,1\}$ be the Boolean function defined by $\propertyParity{}(x) = |x| \bmod 2.$ 
    \item $\propertyCount_q$: Let $q$ be an integer and let $\propertyCount_q :\{0,1\}^N \rightarrow [q]-1$ be the non-Boolean function defined by $\propertyCount_q(x) = |x| \mod q$. 
    \item $\propertyMajority$: Let $\propertyMajority :\{0,1\}^N \rightarrow \{0,1\}$ be the Boolean function defined by 
\begin{equation*}
    \propertyMajority{}(x) =
\begin{cases}
  1  & \text{ if } |x|\geq \frac{N}{2}, \\
  0 & \text{otherwise.}
\end{cases}
\end{equation*}

\item[] And, there is also the following function almost similar to \propertyMajority{}.
\item $\propertyStrictMajority$: Let $\propertyStrictMajority :\{0,1\}^N \rightarrow \{0,1\}$ be the Boolean function with 
\begin{equation*}
    \propertyStrictMajority{}(x) =
\begin{cases}
  1  & \text{ if } |x| > \frac{N}{2}, \\
  0 & \text{otherwise.}
\end{cases}
\end{equation*}
\end{enumerate}  
\end{defn}

Here, we define variants of \CNFSAT{} corresponding to the above-mentioned properties. 
\begin{defn}[variants of CNFSAT] 
\label{def:PopularVariantsCNF}
Let $|\phi|=\{y\in\{0,1\}^n:\phi(y)=1\}$ denote the Hamming weight of the truth table of $\phi$. The following lists five variants of \CNFSAT{}: 
\begin{enumerate}
    \item \CountingCNFSAT{}: Given a \CNF{} formula $\phi$ on $n$ input variables, output $|\phi|$. 
    \item \ParityCNFSAT{}: Given a \CNF{} formula $\phi$ on $n$ input variables, output $|\phi|\bmod 2$. 
    \item \qParityCNFSAT{}: Given a \CNF{} formula $\phi$ on $n$ input variables and an integer $q \in [2^n] \setminus \{1\}$, output $|\phi|\bmod q$. 
    \item \majorityCNFSAT{}: Given a \CNF{} formula $\phi$ on $n$ input variables, output $1$ if $|\phi|\geq 2^{n}/2$ (else output $0$). 
    \item \stMajorityCNFSAT{}: Given a \CNF{} formula $\phi$ on $n$ input variables, output $1$ if $|\phi|> 2^{n}/2$ (else output $0$). 
\end{enumerate}
    
\end{defn}

Now again, we use the quantum query lower bound for $\propertyP$ whenever we want to discuss the time complexity of $\propertyP$-CNFSAT as in the QSETH framework by~\cite{BPS21}. Therefore, immediately after the definitions for variants of CNFSAT (with respect to property $\propertyP$), we will introduce the corresponding bounded-error quantum \emph{query} lower bound for computing $\propertyP$, and then conjecture the \emph{time} lower bound for $\propertyP$-CNFSAT ($\propertyP$ variant CNFSAT) using this query lower bound.
After that, we can use lower bounds for those variants of \CNFSAT{} to understand the quantum complexity of (variants of) $k$-SAT. We include the quantum query lower bounds for those properties for completeness. 

\begin{lemma}[\cite{BBCMW01}] The bounded-error quantum query complexity for \propertyCount{}, \propertyParity{}, \propertyMajority{} and \propertyStrictMajority{} on inputs of $N$-bit Boolean strings is $\Omega(N)$.    
\end{lemma}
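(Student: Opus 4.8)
The plan is to prove that each of $\propertyCount$, $\propertyParity$, $\propertyMajority$, and $\propertyStrictMajority$ has bounded-error quantum query complexity $\Omega(N)$ on $N$-bit inputs, by reducing the well-known hardness of $\propertyParity$ (the parity of $N$ bits) to each of the other cases, and citing the standard polynomial-method lower bound of Beals, Buhrman, Cleve, Mosca, and Wolf \cite{BBCMW01} for $\propertyParity$ itself. Recall that computing the parity of $N$ bits requires $\Omega(N)$ quantum queries; this is the cornerstone, and it is exactly the kind of bound established via the polynomial method in \cite{BBCMW01}, where one shows that a bounded-error algorithm making $T$ queries yields a real polynomial of degree $2T$ approximating the parity function, which forces $2T \geq N$.

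For $\propertyCount$, the reduction is immediate: an exact count of $|x|$ determines $|x| \bmod 2$, so any algorithm computing $\propertyCount$ with $T$ queries also computes $\propertyParity$ with $T$ queries, hence $\Q(\propertyCount) \geq \Q(\propertyParity) = \Omega(N)$. For $\propertyMajority$ on $N$ bits, the standard trick is to pad: given an arbitrary input $y \in \{0,1\}^{N'}$ whose exact Hamming weight we wish to pin down, one can use $O(\log N')$ invocations of a majority oracle on suitably padded strings of length $\Theta(N')$ (binary search on the threshold, realized by appending a controlled number of fixed $0$s and $1$s) to recover $|y|$ exactly, and in particular $|y| \bmod 2$. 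Since padding with constants costs no queries, a $T$-query majority algorithm gives an $O(T \log N')$-query parity algorithm on $N'$ bits with $N' = \Theta(N)$, forcing $T \log N' = \Omega(N')$, i.e.\ $T = \Omega(N/\log N)$. To get the cleaner $\Omega(N)$ one instead uses a single query to majority on a padded instance of length $2N'+O(1)$: appending one extra $1$ versus one extra $0$ to a string of weight exactly $N'/2$ distinguishes the two cases, but more carefully, the known fact (also from \cite{BBCMW01}) is that $\Q(\propertyMajority) = \Theta(N)$ directly via the polynomial method, since the approximate degree of the $N$-bit majority function is $\Theta(N)$. The same applies verbatim to $\propertyStrictMajority$, whose approximate degree is also $\Theta(N)$, or alternatively one observes $\propertyStrictMajority$ on $N$ bits equals the negation of $\propertyMajority$ on the bitwise-complemented input, which is query-equivalent.

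So the cleanest route, and the one I would write, is: (i) cite \cite{BBCMW01} for $\operatorname{adeg}(\propertyParity_N) = N$, giving $\Q(\propertyParity) = \Omega(N)$; (ii) note $\propertyCount$ trivially dominates $\propertyParity$; (iii) cite \cite{BBCMW01} (or Paturi's bound on the approximate degree of symmetric functions) for $\operatorname{adeg}(\propertyMajority_N) = \Theta(N)$ and $\operatorname{adeg}(\propertyStrictMajority_N) = \Theta(N)$, each of which lower-bounds the corresponding quantum query complexity by $\tfrac12\operatorname{adeg}/1 = \Omega(N)$ via the degree-vs-queries inequality $\operatorname{adeg}(g) \leq 2\Q(g)$. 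The only mildly delicate point — and the place a reader might want detail — is justifying the approximate-degree lower bound for (strict) majority; but this is exactly Paturi's theorem on symmetric Boolean functions, namely $\operatorname{adeg}(g) = \Theta(\sqrt{N(N-\Gamma(g))})$ where $\Gamma(g)$ measures the distance from $N/2$ of the nearest ``jump'' in the symmetric profile, which for majority is $\Gamma = O(1)$ and hence $\operatorname{adeg} = \Theta(N)$. I expect the main (very minor) obstacle is simply deciding how much of this standard machinery to reproduce versus cite; since the paper only needs these as black-box inputs to the QSETH conjectures, a one-line appeal to \cite{BBCMW01} together with Paturi's symmetric-function bound suffices, and no genuine new argument is required.
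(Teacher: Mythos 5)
Your final "cleanest route" — invoke the polynomial-method inequality $\operatorname{adeg}(g) \leq 2\Q(g)$ from \cite{BBCMW01} together with Paturi's characterization $\operatorname{adeg}(g) = \Theta(\sqrt{N(N-\Gamma(g))})$ for symmetric total functions, check that $\Gamma$ is $0$ or $1$ for parity, majority, and strict-majority, and then note that count dominates each of these — is exactly the paper's proof, modulo the trivial reorganization of which functions are derived from which. The padding/binary-search reduction you float for majority is a dead end (as you yourself observe, it only yields $\Omega(N/\log N)$), so it is good that you discarded it and settled on the direct approximate-degree argument.
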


\begin{proof}
\cite{BBCMW01} showed that the bounded-error quantum query complexity of a (total) Boolean function $f: \{0,1\}^N \rightarrow \{0,1\}$, denoted by $\Q(f)$ 
is lower bounded by $1/2$ of the degree of a minimum-degree polynomial $p$ that approximates $f$ on all $X \in \{0,1\}^N$, i.e., $|p(X)-f(X)|\leq 1/3$; let us denote this approximate degree by $\widetilde{deg}(f)$. Another important result by Paturi \cite{Paturi92} showed that if $f$ is a non-constant, symmetric\footnote{A symmetric Boolean function $f:\{0,1\}^N \rightarrow \{0,1\}$ implies $f(X)=f(Y)$ for all $X,Y$ whenever $|X|=|Y|$.} and total Boolean function on $\{0,1\}^N$ then $\widetilde{deg}(f)=\Theta(\sqrt{N(N-\Gamma(f))})$ where $\Gamma(f)=\min \{ |2k-N+1|: f_k\neq f_{k+1} \text{ and } 0\leq k \leq N-1\}$ and $f_k=f(X)$ for $|X|=k$. 

Using the above two results we can show the following:
\begin{enumerate}
    \item $\Gamma(\propertyParity)= 0$ for odd $N$ and $\Gamma(\propertyParity)=1$ whenever $N$ is even. Hence $\Q(\propertyParity)=\Omega(N)$.\footnote{One can actually immediately give $\Q(\propertyParity)\geq N/2$ by an elementary degree lower bound without using Paturi's result.}
    \item Similar to the above item $\Gamma(\propertyMajority)=\Gamma(\propertyStrictMajority)=0$ for odd $N$ and $\Gamma(\propertyMajority)=\Gamma(\propertyStrictMajority)=1$ otherwise. Hence, $\Q(\propertyMajority)=\Omega(N)$ and $\Q(\propertyStrictMajority)=\Omega(N)$.
    \item Any of the above three properties can be computed from \propertyCount{}. Hence, 
    $\Q(\propertyCount)=\Omega(N)$.
\end{enumerate} 
\end{proof}

\begin{lemma}\label{thm:ModqQueryComplexity}
Let $q\in [3, \frac{N}{2}]$ be an integer and $\propertyCount_q : \{0,1\}^N \rightarrow [q]-1$ be the function defined by $\propertyCount_q(x)=\propertyCount(x) \mod q$. Then $\Q(\propertyCount_q)= \Omega(\sqrt{N(N-2q+1)})$.
\end{lemma}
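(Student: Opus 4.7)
The plan is to reduce $\propertyCount_q$ to a symmetric Boolean function and then invoke the same machinery (Paturi's theorem and the BBCMW polynomial-method bound) that was used in the previous lemma.

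First I would define the symmetric Boolean indicator $f_q : \{0,1\}^N \to \{0,1\}$ by $f_q(x) = 1$ iff $|x| \equiv 0 \pmod q$. Any algorithm computing $\propertyCount_q(x)$ also computes $f_q(x)$ just by checking whether the output equals $0$, so
\[
\Q(\propertyCount_q) \;\geq\; \Q(f_q).
\]
Thus it suffices to lower bound $\Q(f_q)$. The function $f_q$ is symmetric, total, and, since $q \leq N/2$, non-constant (e.g.\ $(f_q)_0 = 1$ and $(f_q)_1 = 0$), so Paturi's characterization applies.

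Next I would compute $\Gamma(f_q)$. Writing $(f_q)_k = [k \equiv 0 \pmod q]$, a transition $(f_q)_k \neq (f_q)_{k+1}$ happens exactly at indices with $k \equiv 0 \pmod q$ or $k \equiv q-1 \pmod q$. These valid indices form the set $\{0,\, q-1,\, q,\, 2q-1,\, 2q,\, \ldots\}$, with consecutive gaps alternating between $1$ and $q-1$; the maximum gap is $q-1$. Hence the valid index $k^{\star}$ closest to $(N-1)/2$ satisfies $|k^{\star} - (N-1)/2| \leq q/2$, giving $\Gamma(f_q) \leq |2k^{\star}- N + 1| \leq q + 1 \leq 2q - 1$ (using $q \geq 3$ and the range $q \leq N/2$ to guarantee that such a valid index exists in $[0, N-1]$).

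Then I would apply Paturi's result to obtain
\[
\widetilde{deg}(f_q) \;=\; \Theta\bigl(\sqrt{N\,(N - \Gamma(f_q))}\bigr) \;=\; \Omega\bigl(\sqrt{N\,(N - 2q + 1)}\bigr),
\]
and finally the BBCMW lower bound $\Q(f_q) \geq \widetilde{deg}(f_q)/2$, which combined with the reduction above yields the claimed $\Q(\propertyCount_q) = \Omega(\sqrt{N(N-2q+1)})$. The one place requiring care is the bookkeeping for $\Gamma(f_q)$: one must verify that a valid transition index sits near $(N-1)/2$ inside $[0,N-1]$, which is where the hypothesis $q \leq N/2$ is used (without it the argument degenerates and the bound becomes trivial once $N - 2q + 1 \leq 0$). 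Everything else is a direct application of the same two black-box tools used in the preceding lemma.
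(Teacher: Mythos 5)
Your proposal is correct and follows essentially the same route as the paper: reduce $\propertyCount_q$ to a symmetric Boolean decision variant, then apply Paturi's theorem with the BBCMW polynomial-method bound and estimate $\Gamma$ by $O(q)$. The only cosmetic difference is that the paper tests membership in the residue class $q-1$ (its function $\textsc{dec-count}_q$) whereas you test the residue class $0$; your $\Gamma$ estimate ($q+1$) is slightly looser than what the gap structure actually gives ($q-1$), but still within the needed $2q-1$, so nothing is affected.
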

\begin{proof}
Let $\textsc{dec-count}_q$ be a decision version of the $\propertyCount_q$ defined for all $x \in \{0,1\}^N$ as
\begin{equation}
    \textsc{dec-count}_q(x)=\begin{cases}
    1, \text{ if } \propertyCount_q(x)=q-1,\\
    0, \text{ otherwise}.
    \end{cases}
\end{equation} 

When the function is non-constant and symmetric then one can use Paturi's theorem to characterize the approximate degree of that function \cite{Paturi92}. It is easy to see that $\textsc{dec-count}_q$ is a non-constant symmetric function. Combining both these results we get that $\Q(\textsc{dec-count}_q)=\Omega(\sqrt{N(N-\Gamma(\textsc{dec-count}_q))})$.

We now compute the value of $\Gamma(\textsc{dec-count}_q)$. 
For any symmetric Boolean function $f:\{0,1\}^N \rightarrow \{0,1\}$ the quantity $\Gamma(f)$ is defined as $\Gamma(f)=\min_{k} \{|2k-N+1|\}$ such that $f_k\neq f_{k+1}$ and $f_k=f(x)$ for $|x|=k$ with $1 \leq k \leq N-1$. It is easy to see that $\textsc{dec-count}_q(x)=1$ only for $x$ with Hamming weight $|x|=rq-1$ where $r$ is an integer and $\textsc{dec-count}_q(x)=0$ elsewhere. Let $r'$ be the integer such that $r'q-1 \leq \frac{N}{2} \leq (r'+1)q-1$ then the $k$ minimizing $\Gamma(\textsc{dec-count}_q)$ is either $r'q-1$ or $(r'+1)q-1$. This implies that $\Gamma(\textsc{dec-count}_q) \leq 2q-1$, which in turn implies that $N-\Gamma(\textsc{dec-count}_q) \geq N-2q+1$. Therefore, $\Q(\textsc{dec-count}_q)=\Omega(\sqrt{N(N-2q+1)})$.

As one can compute $\textsc{dec-count}_q$ using an algorithm that computes $\propertyCount_q{}$, we therefore have $Q(\propertyCount_q{})=\Omega(\sqrt{N(N-2q+1)})$.
\end{proof}


As we don't yet know how to prove compression-obliviousness of properties with high query complexities (Theorem~9 in \cite{BPS21}) we instead conjecture that \propertyCount{}, \propertyParity{}, \propertyMajority{} and \propertyStrictMajority{} are compression oblivious for poly-sized \CNF{} and \DNF{} formulas. We think it is reasonable to make this conjecture since it will falsify certain commonly-used cryptography assumptions if those properties are not compression oblivious. See~\cite{CCKPS25Qnaturalproof} for a discussion on this topic. 

\begin{conjecture}
\label{conj:CompressionObliviousnessAC}
The following properties
\begin{enumerate}
    \item $\propertyParity{}:\{0,1\}^{2^n} \rightarrow \{0,1\}$, 
    \item $\propertyCount_q :\{0,1\}^{2^n} \rightarrow [q-1]\cup \{0\}$ where $2 < q \leq 2^{n-1}$,
    \item $\propertyMajority{}:\{0,1\}^{2^n} \rightarrow \{0,1\}$, and 
    \item $\propertyStrictMajority{}:\{0,1\}^{2^n} \rightarrow \{0,1\}$
\end{enumerate}
stated in \Cref{def:PopularBooleanProperties} are in $\mathcal{CO}(\textsc{AC}^{0}_{2})$.
\end{conjecture}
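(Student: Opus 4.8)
The final statement is deliberately phrased as a \emph{conjecture}, and the paragraph preceding it already explains why: we do not know how to convert the quantum \emph{query} lower bounds for these symmetric properties into \emph{time} lower bounds over the restricted family of truth tables realised by $\poly(n)$-size \CNF{}/\DNF{} formulas, so what follows is a plan for isolating the obstruction rather than a route to an unconditional proof. First I would collect what is already available. The lemmas above give $\Q(\propertyParity)=\Q(\propertyMajority)=\Q(\propertyStrictMajority)=\Omega(2^n)$ and $\Q(\propertyCount_q)=\Omega\big(\sqrt{2^n(2^n-2q+1)}\big)$ via the polynomial method of \cite{BBCMW01} together with Paturi's characterisation \cite{Paturi92}. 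The \CNF{}$\leftrightarrow$\DNF{} half of the statement is essentially free: complementing a \CNF{} gives a \DNF{} of the same size and complements the $2^n$-bit truth table, i.e.\ sends $|x|\mapsto 2^n-|x|$; under this map \propertyParity{} is preserved up to a fixed additive constant, \propertyMajority{} and \propertyStrictMajority{} swap, and $\propertyCount_q$ becomes $-\propertyCount_q \bmod q$. Hence the real task, for each property $\propertyP$ in the list and each black-box quantum algorithm $\mathcal{A}$ computing $\propertyP$, is to produce an infinite sequence of $n$'s and a set $L\subseteq\ACp$ of small \CNF{} formulas on whose truth tables $\mathcal{A}$ is forced to spend $\Q(\propertyP_n)^{1-\delta}$ time, as demanded by \Cref{def:CompressionOblivious}.

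\textbf{Route A (direct construction).} For the baseline property $\propertyOR$ this is possible and instructive: the Grover-hard instances --- ``unsatisfiable'' and ``exactly one satisfying assignment at a hidden point $a$'' --- are themselves truth tables of $O(n)$-size \CNF{}s, so the $\Omega(2^{n/2})$ query bound transfers verbatim and $\propertyOR\in\mathcal{CO}(\AC)$. I would try to imitate this for \propertyParity{}/\propertyMajority{}: find a succinctly describable family of truth tables --- say indicators of random affine subspaces, or random bounded-width \CNF{}s --- whose Hamming weight lies on the ``knife edge'' of the approximate-degree argument and is (information-theoretically, hence one hopes computationally) as hard to determine as for a uniformly random string of that weight, and then run a hybrid/adversary argument inside this family. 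The point at which I expect this to fail is exactly the point that makes the statement a conjecture: any set of $n^p$-size \CNF{}s is describable by $\poly(n)$ bits, so one cannot rule out, by an adversary or polynomial argument alone, that some algorithm exploits that structure; excluding every such algorithm is a circuit-lower-bound-style claim.

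\textbf{Route B (conditional).} Failing an unconditional proof I would settle for a conditional theorem in the spirit of the ``quantum natural proofs'' discussion of \cite{CCKPS25Qnaturalproof}: if $\propertyP\notin\mathcal{CO}(\ACp)$ for all $p$, then for some $\delta>0$ and all large $n$ there is a quantum algorithm computing $\propertyP_n$ in $2^{(1-\delta)n}$ time on the truth table of \emph{every} $n^p$-size \CNF{}/\DNF{}, whereas the query bound forbids doing so on arbitrary $2^n$-bit inputs --- a ``useful'' (large, constructive) property against depth-$2$ formulas. Formalising the reduction --- in particular identifying \emph{which} assumption is violated, since depth-$2$ formulas are themselves too weak to host a pseudorandom function and one must route through the $\NP$/$\oplus\P$/$\CountP$-hardness, or the conjectured \SETH{}/\paritySETH{}/\countingSETH{}-hardness, of the corresponding \CNFSAT{} variant --- is the technical heart of this route and the part I would actually write up. In both routes the obstacle is the same, and it is why I expect no unconditional proof: separating ``query complexity over all $2^n$-bit strings'' from ``time complexity over the exponentially small, highly structured set of small-formula truth tables'' is at least as hard as new circuit lower bounds or beating the natural-proofs barrier, so the honest output is the conjecture as stated, with Route B upgraded to a conditional statement being the strongest thing I would hope to establish.
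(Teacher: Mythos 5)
You have correctly read the statement as a conjecture with no proof in the paper, and your assessment of \emph{why} it is a conjecture, and what the available evidence is, lines up with what the paper itself says. The paper's own justification is precisely your Route~B: it cites the observation (with~\cite{CCKPS25Qnaturalproof}) that failure of compression-obliviousness for these properties would contradict standard cryptographic assumptions, and in the discussion section it records that the existence of quantum-secure PRFs implies compression-obliviousness of \propertyMajority{} and \propertyParity{} when the input is a general formula or circuit, with the extension to depth-$2$ (i.e.\ the $\AC$ case actually needed here) left open. Your remark that ``depth-$2$ formulas are themselves too weak to host a pseudorandom function'' pinpoints exactly the gap the authors acknowledge, so on that front you are not missing anything the paper has.

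Two small calibration notes. First, your Route~A discussion slightly understates what is already known: for $\propertyOR$ the paper does state (and cites~\cite{BPS21,CCKPS25Qnaturalproof}) that compression-obliviousness is \emph{provable}, so your Grover-hard-instance observation is not merely ``instructive'' but the actual established fact; the gap only opens up for properties whose query complexity is $\omega(2^{n/2})$, where the hard inputs can no longer be forced to be truth tables of small formulas. Second, the CNF$\leftrightarrow$DNF complementation you invoke is not really part of proving compression-obliviousness: the definition already quantifies over a set of ``hard languages'' inside $\ACp$, which contains both CNFs and DNFs, so the closure of the property family under complement is used downstream (in \Cref{thm:countingQSETH} and its kin, to move from ``hard on CNF or DNF'' to ``hard on CNF alone''), not to establish membership in $\mathcal{CO}(\AC)$. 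Aside from these, your proposal is a faithful account of the state of the art, and the honest conclusion you reach --- that the statement should stand as a conjecture with, at best, a conditional grounding --- is also the paper's conclusion.
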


\begin{corollary}
Let $\textsc{AC}^{0}_{2}$ denote the class of $\poly(n)$ sized \CNF{} and \DNF{} formulas on $n$ input variables. If any one item of \cref{conj:CompressionObliviousnessAC} is true then the property $\propertyCount{}:\{0,1\}^{2^n}\rightarrow [2^n] \cup \{0\}$ is in $\mathcal{CO}(\textsc{AC}^{0}_{2})$.
\end{corollary}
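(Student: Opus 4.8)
The plan is to show that the existence of an efficient algorithm for $\propertyCount$ (on truth tables of poly-sized $\CNF$/$\DNF$ formulas, in the white-box sense of $\mathcal{CO}(\AC)$) would yield an efficient algorithm for at least one of $\propertyParity$, $\propertyCount_q$, $\propertyMajority$, $\propertyStrictMajority$, contradicting whichever item of \Cref{conj:CompressionObliviousnessAC} is assumed true. Contrapositively: a black-box algorithm $\mathcal{A}$ computing $\propertyCount$ fast on these truth tables can be post-processed, with no additional queries, into an algorithm for each of the four Boolean properties, since each of those properties is a function of $|x|$ alone — $\propertyParity(x) = |x| \bmod 2$, $\propertyCount_q(x) = |x| \bmod q$, $\propertyMajority(x) = [\,|x| \geq 2^{n-1}\,]$, $\propertyStrictMajority(x) = [\,|x| > 2^{n-1}\,]$ — and hence is computed by applying a trivial (constant-time, zero-query) classical map to the output of $\mathcal{A}$. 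This is exactly the observation already used inside the proof of the $\Omega(N)$ query-complexity lemma (item 3 there).

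The steps, in order: First I would unpack the definition of $\mathcal{CO}(\AC)$ for $\propertyCount$: suppose toward a contradiction that $\propertyCount \notin \mathcal{CO}(\AC)$, i.e.\ for every $p$ there is a constant $\delta>0$ and a black-box algorithm $\mathcal{A}_p$ computing $\propertyCount$ such that for all large enough $n$ and every family of hard $\ACp$ languages, $\mathcal{A}_p$ uses fewer than $\Q(\propertyCount_n)^{1-\delta}$ quantum queries on all the relevant inputs. Second, fix whichever property $\propertyP \in \{\propertyParity, \propertyCount_q, \propertyMajority, \propertyStrictMajority\}$ is assumed to lie in $\mathcal{CO}(\ACp)$ for some $p$; build the algorithm $\mathcal{A}'$ for $\propertyP$ that runs $\mathcal{A}_p$ and applies the appropriate post-processing map to its output. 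Third, check that $\mathcal{A}'$ makes exactly the same queries as $\mathcal{A}_p$, so on the same hard instances it uses fewer than $\Q(\propertyCount_n)^{1-\delta}$ queries. Fourth — the one genuinely technical point — relate $\Q(\propertyCount_n)$ to $\Q(\propertyP_n)$: since all four properties are computable from $\propertyCount$, we have $\Q(\propertyP_n) \leq \Q(\propertyCount_n)$ (up to the constant factors hidden in reducing a non-Boolean output to a Boolean decision, which are absorbed into $\delta$), so $\Q(\propertyCount_n)^{1-\delta} \leq \Q(\propertyP_n)^{1-\delta'}$ for a suitable $\delta'>0$ and infinitely many $n$. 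This says $\mathcal{A}'$ beats the query bound for $\propertyP$ on the hard instances for $\propertyP$, contradicting $\propertyP \in \mathcal{CO}(\ACp)$.

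The main obstacle is the quantifier bookkeeping in \Cref{def:CompressionOblivious}, not any combinatorics: the definition quantifies "$\exists n \geq n'$" and "$\exists$ a set of hard languages" in a particular nesting, so I must be careful that the hard-language family witnessing failure of $\propertyP$'s compression-obliviousness is the same object fed to $\mathcal{A}_p$, and that the infinitely-many-$n$ guarantee for $\propertyP$ transfers to $\propertyCount$. The cleanest way is to prove the contrapositive directly at the level of the definition: take the algorithm $\mathcal{A}'$ for $\propertyP$ produced from a putative fast $\mathcal{A}_p$ for $\propertyCount$, observe it violates the $\mathcal{CO}(\ACp)$ condition for $\propertyP$ using precisely the hard languages and values of $n$ that the $\propertyCount$-algorithm's failure supplies, and note the constant-overhead translation between "output $|x|$" and "decide the Boolean property" only worsens the exponent by an arbitrarily small constant, which is harmless since $\delta$ ranges over all positive constants. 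One should also remark that the "white-box'' strengthening ($\AC$-$\QSETH$, \Cref{conj:ACqseth}) plays no role here — the whole argument is a purely black-box, query-preserving reduction — so the corollary is really a statement about the $\mathcal{CO}(\AC)$ membership alone.
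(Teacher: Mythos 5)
Your high-level strategy — contrapositive, post-process a putative fast black-box $\propertyCount$-algorithm into algorithms for the four Boolean properties with zero extra queries, and observe the quantifier bookkeeping transfers the hard-language families across — is exactly right and matches what the paper implicitly relies on. But your Step~4, which you correctly flag as the only technical point, is logically invalid as written. You state $\Q(\propertyP_n) \leq \Q(\propertyCount_n)$ and then conclude $\Q(\propertyCount_n)^{1-\delta} \leq \Q(\propertyP_n)^{1-\delta'}$. That inference runs the wrong way: from $A \leq B$ one gets $A^{1-\delta} \leq B^{1-\delta}$, not $B^{1-\delta} \lesssim A^{1-\delta'}$. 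The bound your argument actually needs is the \emph{reverse} comparison $\Q(\propertyCount_n) = O\bigl(\Q(\propertyP_n)\bigr)$, so that a query budget below $\Q(\propertyCount_n)^{1-\delta}$ also falls below $\Q(\propertyP_n)^{1-\delta'}$ (after absorbing the constant into a slightly smaller exponent, using $\Q(\propertyP_n)\to\infty$). The parenthetical about ``constant factors hidden in reducing a non-Boolean output to a Boolean decision'' does not supply this: one cannot recover $\propertyCount$ from a single bit of output, so there is no such reverse reduction at the algorithm level.

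The missing content is the explicit observation that $\Q(\propertyCount_n)$ and $\Q(\propertyP_n)$ are \emph{both} $\Theta(2^n)$ for every $\propertyP$ appearing in \Cref{conj:CompressionObliviousnessAC} — this is what makes the two exponents comparable, and it is a fact about these particular symmetric properties, not a consequence of reducibility. For $\propertyParity$, $\propertyMajority$, $\propertyStrictMajority$ this follows immediately from the preceding lemma. For $\propertyCount_q$ you should note that the lemma's stated bound $\Omega(\sqrt{N(N-2q+1)})$ degrades to $\Omega(\sqrt N)$ as $q\to N/2$, so you cannot just quote it; instead argue directly that $\propertyCount_q$ distinguishes Hamming weights $N/2$ and $N/2+1$ for any $q\geq 2$, giving $\Q(\propertyCount_q)=\Omega(N)$ via \Cref{cor:QueryComplexityEllFunction}. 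With that established, your Steps~1--3 go through unchanged and the proof closes.
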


We can now invoke \acQSETH{} (as mentioned in \cref{conj:ACqseth}) to prove the quantum hardness for \CountingCNFSAT{}, \ParityCNFSAT{}, \majorityCNFSAT{} and \stMajorityCNFSAT{}.

\begin{theorem}[\#QSETH] 
\label{thm:countingQSETH}
For each constant $\delta>0$, there exists $c>0$ such that there is no bounded-error quantum algorithm that solves \CountingCNFSAT{} (even restricted to formulas with $m\leq c n^2$ clauses)  in $\mathcal{O}(2^{n(1-\delta)})$ time, unless \acQSETH{} (\Cref{conj:ACqseth}) is false or $\propertyCount \notin \mathcal{CO}(\textsc{AC}^{0}_{2})$ (i.e., each item of \Cref{conj:CompressionObliviousnessAC} is false).
\end{theorem}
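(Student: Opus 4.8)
The plan is to invoke \acQSETH{} (\Cref{conj:ACqseth}) with the property $\propertyP = \propertyCount$, which by the preceding corollary lies in $\mathcal{CO}(\AC)$ provided some item of \Cref{conj:CompressionObliviousnessAC} holds. First I would set up the correspondence between the computational problem \CountingCNFSAT{} and the white-box evaluation of $\propertyCount$ on truth tables of $\poly(n)$-sized \CNF{} formulas: a \CNF{} formula $\phi$ on $n$ variables \emph{is} a $\poly(n)$-sized circuit (in fact an $\AC$ circuit, i.e. depth-$2$ of polynomial size once we bound the number of clauses by $cn^2$), its truth table $tt(\phi)\in\{0,1\}^{2^n}$ has length $N=2^n$, and $\propertyCount_n(tt(\phi)) = |\phi|$ is exactly the output required by \CountingCNFSAT{}. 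Hence any quantum algorithm solving \CountingCNFSAT{} in time $T(n)$ yields a white-box algorithm computing $\propertyCount$ on exactly this family of inputs in time $T(n)$.

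Next I would supply the query lower bound: by the lemma of \cite{BBCMW01} together with Paturi's characterization (item 3 of the proof of that lemma, deriving $\propertyCount$ from \propertyParity{}/\propertyMajority{}), we have $\Q(\propertyCount_n) = \Omega(2^n)$ for $N=2^n$. Combining this with \acQSETH{}: for any $\delta>0$ there is $p\in\mathbb{N}$ such that for every white-box quantum algorithm $\mathcal{A}$ computing $\propertyCount$, infinitely often (for $n$ arbitrarily large) there is a set of hard languages $L\subseteq\ACp$ and, for all circuit families realizing them, $\mathcal{A}$ spends at least $\Q(\propertyCount_n)^{1-\delta} = \Omega(2^{n(1-\delta)})$ time on one of the inputs $\{C^i_n\}$. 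Since each such $C^i_n$ is a $\poly(n)$-sized \CNF{} (or \DNF{}, handled symmetrically via De Morgan, replacing $\phi$ by $\neg\phi$ and $|\phi|$ by $2^n-|\neg\phi|$), this input is a legitimate instance of \CountingCNFSAT{} with $m\le cn^2$ clauses for a suitable constant $c = c(p)$ absorbing the polynomial blow-up. Therefore $\mathcal{A}$ cannot run in $\mathcal{O}(2^{n(1-\delta')})$ time for $\delta'$ slightly larger than $\delta$ on all such instances — rescaling $\delta$ gives the stated bound.

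A small bookkeeping point I would be careful about: the quantum \emph{query} complexity $\Q(\propertyCount_n)$ of the $2^n$-bit property lower-bounds the \emph{time} complexity only through the \acQSETH{} conjecture (query complexity is not a time lower bound per se); the logical structure of the theorem is exactly ``(\acQSETH{} true) and ($\propertyCount\in\mathcal{CO}(\AC)$) $\Rightarrow$ time lower bound,'' and the contrapositive phrasing in the statement — ``unless \acQSETH{} is false or $\propertyCount\notin\mathcal{CO}(\AC)$'' — is what I would prove. I would also note that $\propertyCount\in\mathcal{CO}(\AC)$ follows from any single item of \Cref{conj:CompressionObliviousnessAC} via the corollary, so the disjunction in the hypothesis is faithfully ``\acQSETH{} false, or every item of \Cref{conj:CompressionObliviousnessAC} false.''

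The main obstacle — really the only non-mechanical step — is making the clause-count bound $m\le cn^2$ explicit and uniform: one must verify that the hard \CNF/\DNF{} circuit families guaranteed by \acQSETH{} (which a priori only have size $\le n^p$) can be taken to have $O(n^2)$ clauses, or else weaken the theorem to ``$m\le cn^p$.'' In the QSETH setup this is handled because the relevant hard instances arise from the Sparsification Lemma / the reduction underlying \BasicQSETH{}, which already produces \CNF{}s with $O(n^2)$ clauses; I would cite \Cref{cor:basicQSETH} (or the sparsification step behind it) to justify restricting attention to that clause regime. Everything else is the routine translation between the black/white-box query formalism and the stated time bound, plus the De Morgan observation that \propertyCount{} behaves identically on \CNF{} and \DNF{} families.
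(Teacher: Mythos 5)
Your approach matches the paper's in all essentials: invoke \acQSETH{} with the \propertyCount{} property, note that a \CNF{} input is already a legitimate \CountingCNFSAT{} instance, and handle \DNF{} inputs via De Morgan (negate and return $2^n - |\neg\phi|$). The paper writes the argument in contrapositive form — assume a fast algorithm for \CountingCNFSAT{} exists, feed it $C$ or $\neg C$ depending on whether $C$ is a \CNF{} or a \DNF{}, and conclude this refutes \acQSETH{} — whereas you unwind the \acQSETH{} quantifiers in the forward direction, but these are the same argument. Your explicit mention of the query bound $\Q(\propertyCount) = \Omega(2^n)$ and the corollary that $\propertyCount \in \mathcal{CO}(\AC)$ follows from any single item of \Cref{conj:CompressionObliviousnessAC} are accurate fillings-in of what the paper leaves implicit.

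One correction on the ``only non-mechanical step'' you flag: your proposed justification that the $m \le cn^2$ clause bound comes from the sparsification lemma behind \BasicQSETH{} is not supported by the definitions and is not what the paper does. The hard instances supplied by \acQSETH{} are arbitrary circuit families for languages in $\ACp$, which are bounded in \emph{size} by $n^p$ for the existentially quantified $p$, with no a priori reason to have only $O(n^2)$ clauses. The paper's own proof simply runs the assumed algorithm on an arbitrary $C \in \AC$ without verifying that $C$ has $\le cn^2$ clauses, so it too glosses over this; as stated, the bound ought to be $m \le cn^p$ for the $p$ coming out of \Cref{conj:ACqseth} (or one must separately argue that the hard families can be taken with quadratically many clauses). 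Your instinct that there is an issue here is correct; your guess at the resolution is not the route the paper takes, because the paper does not in fact resolve it in this proof.
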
 

\begin{proof}
By way of contradiction, let us assume that there exists a bounded-error quantum algorithm~$\mathcal{A}$ that solves \CountingCNFSAT{} on $n$ variables (and on $m$ clauses with some $m\leq cn^2$) in $\mathcal{O}(2^{n(1-\delta)})$ time for some $\delta>0$. Then given a circuit $C \in \textsc{AC}^{0}_{2}$ we do one of the following:
\begin{itemize}
    \item if $C$ is a poly-sized \CNF{} formula then we use the algorithm $\mathcal{A}$ to compute the number of satisfying assignments to $C$ in $\mathcal{O}(2^{n(1-\delta)})$ time. Or,
    \item if $C$ is a poly-sized \DNF{} formula then we first construct the negation of $C$, let us denote by $\neg C$, in $\poly(n)$ time; using De Morgan's law we can see that the resulting formula $\neg C$ will be a $\poly(n)$ \CNF{} formula. Using $\mathcal{A}$ we can now compute the number of satisfying assignments $t$ to $\neg C$ in $\mathcal{O}(2^{n(1-\delta)})$ time. The number of satisfying assignments to $C$ would be then $2^n-t$.
\end{itemize}
The existence of an algorithm such as $\mathcal{A}$ would imply that \acQSETH{} is false. Hence, proved.
\end{proof}

Using similar arguments as in the proof of \cref{thm:countingQSETH} we can conclude the following statements.

\begin{corollary}[$\oplus$QSETH]
\label{thm:ParityQSETH}
 For each constant $\delta>0$, there exists $c>0$ such that there is no bounded-error quantum algorithm that solves \ParityCNFSAT{} (even restricted to formulas with $m\leq c n^2$ clauses)  in $\mathcal{O}(2^{n(1-\delta)})$ time,  unless \acQSETH{} (\Cref{conj:ACqseth}) is false or $\propertyParity{} \notin \mathcal{CO}(\textsc{AC}^{0}_{2})$ (i.e., Item 1 of \Cref{conj:CompressionObliviousnessAC} is false).   
\end{corollary}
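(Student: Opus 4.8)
The plan is to reuse, essentially verbatim, the argument behind \Cref{thm:countingQSETH}, with \propertyCount{} replaced by \propertyParity{}. The only new ingredient is the elementary observation that the parity of the truth-table weight is invariant under complementation of the formula: if $C$ is a formula on $n$ variables, then $|C| = 2^n - |\neg C|$, and since $2^n$ is even for $n \ge 1$ we get $|C| \bmod 2 = |\neg C| \bmod 2$ (this is even cleaner than the $|C| = 2^n - t$ bookkeeping needed in the \propertyCount{} case).

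Concretely, I would argue by contradiction: suppose there is a bounded-error quantum algorithm $\mathcal{A}$ solving \ParityCNFSAT{} on $n$ variables (with $m \le c n^2$ clauses, for $c$ chosen below) in $\mathcal{O}(2^{n(1-\delta)})$ time for some $\delta > 0$. From $\mathcal{A}$ I build a black-box-input procedure that computes the property $\propertyParity_n$ on the truth table of any $\poly(n)$-size \CNF{} or \DNF{} formula $C \in \textsc{AC}^0_2$: if $C$ is a \CNF{}, run $\mathcal{A}$ on $C$ directly to obtain $|C| \bmod 2$; if $C$ is a \DNF{}, form $\neg C$ in $\poly(n)$ time (by De Morgan's law $\neg C$ is a $\poly(n)$-size \CNF{}, and the clause bound $m \le c n^2$ is maintained for a suitable $c$ exactly as in \Cref{thm:countingQSETH} and \cref{cor:basicQSETH}), run $\mathcal{A}$ on $\neg C$ to get $t := |\neg C| \bmod 2$, and output $t = |C| \bmod 2$. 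This procedure runs in $\mathcal{O}(2^{n(1-\delta)})$ time on every such input. By the $\Omega(N)$ lower bound for \propertyParity{} cited from \cite{BBCMW01} (or the elementary degree bound $\Q(\propertyParity_N) \ge N/2$), we have $\Q(\propertyParity_n) = \Omega(2^n)$, so picking any constant $\delta' \in (0,\delta)$ makes the running time $o\big(\Q(\propertyParity_n)^{1-\delta'}\big)$; hence for all sufficiently large $n$ the procedure runs strictly below $\Q(\propertyParity_n)^{1-\delta'}$ on \emph{all} its inputs. If Item~1 of \Cref{conj:CompressionObliviousnessAC} holds, i.e.\ $\propertyParity \in \mathcal{CO}(\textsc{AC}^0_2)$, then for the property $\propertyParity$ and this $\delta'$, no matter which hard set $L = \{L^1, L^2, \ldots\} \subseteq \textsc{AC}^0_2$ and which circuit families realizing the $L^i$ are chosen, the procedure is fast on \emph{every} input $\{C^i_n\}_i$ — contradicting \acQSETH{} (\Cref{conj:ACqseth}), which asserts it must be slow on at least one of them. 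Therefore \acQSETH{} is false or $\propertyParity \notin \mathcal{CO}(\textsc{AC}^0_2)$.

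I do not expect a genuine obstacle here — this is a "plug into the flowchart of \Cref{fig:Flowchart}" argument. The only points requiring a little care are: (i) the complementation-invariance of \propertyParity{}, which must be stated because \acQSETH{} ranges over a set $L$ containing both \CNF{}- and \DNF{}-type hard languages, so the procedure must handle each $L^i$ according to its natural representation; and (ii) matching the quantifiers — \acQSETH{} guarantees slowness on \emph{at least one} input in $\{C^i_n\}_i$, so the contradiction is obtained precisely by showing fastness on \emph{all} of them. Both are routine given the proof of \Cref{thm:countingQSETH}.
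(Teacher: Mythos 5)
Your proposal is correct and follows the same approach the paper takes: the paper proves this corollary by the one-line remark ``Using similar arguments as in the proof of \cref{thm:countingQSETH},'' i.e.\ the same De~Morgan flip between \CNF{} and \DNF{} inputs. Your explicit note that $|C|\bmod 2 = |\neg C|\bmod 2$ (since $2^n$ is even) is exactly the parity analogue of the paper's $2^n - t$ bookkeeping, and the quantifier-matching details you spell out are consistent with \Cref{conj:ACqseth}.
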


\begin{corollary}[$\#_q$QSETH]
\label{conj:qParitySAT}
Let $q\in [3, \frac{N}{2}]$ be an integer. For each constant $\delta>0$, there exists $c>0$ such that there is no bounded-error quantum algorithm that solves \qParityCNFSAT{} (even restricted to formulas with $m\leq c n^2$ clauses) in $\mathcal{O}\left( {2}^{n(1-\delta)}\right)$ time, unless \acQSETH{} (\Cref{conj:ACqseth}) is false or $\propertyCount_q \notin \mathcal{CO}(\textsc{AC}^{0}_{2})$ (i.e., Item 2 of \Cref{conj:CompressionObliviousnessAC} is false).
\end{corollary}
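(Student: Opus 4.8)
The plan is to follow the proof of \cref{thm:countingQSETH} essentially verbatim, replacing $\propertyCount$ by $\propertyCount_q$ throughout and adjusting only two points: the De Morgan ``complementation'' step must be made compatible with reduction modulo $q$, and at the end we invoke the query lower bound of \cref{thm:ModqQueryComplexity} in place of the $\Omega(N)$ bound used for $\propertyCount$.

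In detail, suppose for contradiction that some bounded-error quantum algorithm $\mathcal{A}$ solves \qParityCNFSAT{} on $n$ variables and $m \le cn^2$ clauses (recall \qParityCNFSAT{} takes as input a \CNF{} formula together with the modulus $q$) in $\mathcal{O}(2^{n(1-\delta)})$ time for some constant $\delta > 0$. Given any circuit $C \in \textsc{AC}^{0}_{2}$ on $n$ variables and the modulus $q$, we compute $|C| \bmod q$ in time $\mathcal{O}(2^{n(1-\delta)}) + \poly(n)$ in the white-box setting, where $|C|$ denotes the number of satisfying assignments of $C$ (the Hamming weight of its truth table): if $C$ is a $\poly(n)$-size \CNF{} formula, run $\mathcal{A}$ on $(C,q)$ to obtain $|C| \bmod q$ directly; if $C$ is a $\poly(n)$-size \DNF{} formula, form $\neg C$ in $\poly(n)$ time, which by De Morgan's law is a $\poly(n)$-size \CNF{} formula with $|\neg C| = 2^n - |C|$, run $\mathcal{A}$ on $(\neg C, q)$ to get $t := |\neg C| \bmod q$, compute $2^n \bmod q$ in $\poly(n)$ time (e.g.\ by $n$ successive doublings modulo $q$, legitimate since $q \le 2^n$), and output $\big((2^n \bmod q) - t\big) \bmod q$, which equals $|C| \bmod q$. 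Hence $\mathcal{A}$ together with $\poly(n)$ post-processing computes the property $\propertyCount_q$ on the $2^n$-bit truth table of any poly-size \CNF/\DNF{} formula, in the white-box setting, in $\mathcal{O}(2^{n(1-\delta)})$ time.

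Now, if $\propertyCount_q \in \mathcal{CO}(\textsc{AC}^{0}_{2})$ (Item 2 of \cref{conj:CompressionObliviousnessAC}), then \acQSETH{} (\cref{conj:ACqseth}) forces any such white-box algorithm to spend at least $\Q(\propertyCount_q)^{1-\delta'}$ time on some input, for every constant $\delta' > 0$. Taking $N = 2^n$, \cref{thm:ModqQueryComplexity} gives $\Q(\propertyCount_q) = \Omega\!\big(\sqrt{2^n(2^n - 2q + 1)}\big)$, which is $2^{n - o(n)}$ in the regime of interest (certainly whenever $q \le 2^{o(n)}$, and more generally whenever $2^n - 2q \ge 2^{n(1-o(1))}$). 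Choosing $\delta'$ small enough relative to $\delta$ then contradicts the $\mathcal{O}(2^{n(1-\delta)})$ upper bound derived above, so either \acQSETH{} is false or Item 2 of \cref{conj:CompressionObliviousnessAC} fails, as claimed.

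I expect no genuine obstacle here: the argument is a routine adaptation of \cref{thm:countingQSETH}. The only points demanding a moment's care are (i) the \DNF{} case, where complementing a formula sends $|C|$ to $2^n - |C|$, so to stay within the mod-$q$ world we precompute $2^n \bmod q$ — a triviality given $q \le 2^n$ — rather than simply subtracting from $2^n$ as in the \#QSETH proof; and (ii) keeping track of the range of $q$ over which \cref{thm:ModqQueryComplexity} actually delivers a $2^{n-o(n)}$ query bound, hence the $2^{n-\delta n}$ time lower bound for every $\delta>0$, namely when $q$ is bounded away from $2^{n-1}$ by a $2^{n(1-o(1))}$ margin; for $q$ pushed extremely close to $2^{n-1}$ only a weaker, Grover-scale bound follows from this approach.
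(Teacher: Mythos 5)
Your proposal is the proof the paper has in mind: the paper gives no explicit argument for this corollary, only the remark that ``similar arguments as in the proof of'' \cref{thm:countingQSETH} apply, and your CNF/DNF reduction --- taking $\neg C$, running $\mathcal{A}$ to get $t = |\neg C| \bmod q$, precomputing $2^n \bmod q$, and outputting $((2^n \bmod q) - t)\bmod q$ --- is exactly that adaptation carried out carefully.

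The caveat you flag under point (ii), however, is a genuine observation and not mere caution, and it applies to the paper as well as to your write-up. \cref{thm:ModqQueryComplexity} proves only $\Q(\propertyCount_q) = \Omega(\sqrt{N(N-2q+1)})$, which collapses to $\Omega(\sqrt{N}) = 2^{n/2}$ as $q$ approaches $N/2$, so invoking that lemma alone cannot justify the $2^{n(1-\delta)}$ lower bound that the corollary asserts for the \emph{entire} range $q \in [3, N/2]$. The gap is easy to close, though: for every $q \geq 2$, any algorithm computing $\propertyCount_q$ in particular distinguishes inputs of Hamming weight $\lfloor N/2 \rfloor$ from those of weight $\lfloor N/2 \rfloor + 1$, since these two weights differ by one and therefore yield distinct residues modulo any $q \geq 2$. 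By \cref{cor:QueryComplexityEllFunction} with $\ell = \lfloor N/2 \rfloor$, $\ell' = \lfloor N/2 \rfloor + 1$ (so $\Delta_\ell = 1$ and $p \approx N/2$), that distinguishing task already requires $\Omega(N)$ quantum queries, giving $\Q(\propertyCount_q) = \Omega(N) = 2^{n-o(n)}$ unconditionally over the whole range $q \in [3, N/2]$. Substituting this strengthened query bound for the weaker one from \cref{thm:ModqQueryComplexity} makes your argument cover the full statement of the corollary, with no restriction on $q$.
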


\begin{corollary}[Majority-QSETH]
\label{thm:MajorityQSETH}
 For each constant $\delta>0$, there exists $c>0$ such that there is no bounded-error quantum algorithm that solves
\begin{enumerate}
    \item \majorityCNFSAT{} (even restricted to formulas with $m\leq c n^2$ clauses)  in $\mathcal{O}(2^{n(1-\delta)})$ time,  unless \acQSETH{} is false or $\propertyMajority \notin \mathcal{CO}(\textsc{AC}^{0}_{2})$ (i.e., Item 3 of \Cref{conj:CompressionObliviousnessAC} is false);
    \item \stMajorityCNFSAT{} (even restricted to formulas with $m\leq c n^2$ clauses)  in $\mathcal{O}(2^{n(1-\delta)})$ time,  unless \acQSETH{} is false or $\propertyStrictMajority \notin \mathcal{CO}(\textsc{AC}^{0}_{2})$ (i.e., Item 4 of \Cref{conj:CompressionObliviousnessAC} is false).
\end{enumerate}
\end{corollary}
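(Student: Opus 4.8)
The plan is to mirror the proof of \cref{thm:countingQSETH}, now invoking \acQSETH{} (\cref{conj:ACqseth}) with the property $\propertyMajority$ for the first item and with $\propertyStrictMajority$ for the second. Recall from the query lower bound proved above that $\Q(\propertyMajority_n)=\Theta(2^n)$ and $\Q(\propertyStrictMajority_n)=\Theta(2^n)$, so $\Q(\propertyMajority_n)^{1-\delta}=\Q(\propertyStrictMajority_n)^{1-\delta}=\Theta(2^{n(1-\delta)})$. Assuming the relevant property lies in $\mathcal{CO}(\AC)$ (\cref{conj:CompressionObliviousnessAC}), it therefore suffices to show that an $\mathcal{O}(2^{n(1-\delta)})$-time bounded-error quantum algorithm for \majorityCNFSAT{} (resp.\ \stMajorityCNFSAT{}) yields a white-box algorithm computing $\propertyMajority$ (resp.\ $\propertyStrictMajority$) on the truth tables of \emph{all} poly-size \CNF{} and \DNF{} formulas in comparable time, contradicting \acQSETH{}; the restriction to formulas with $m\le cn^2$ clauses is handled exactly as in the proof of \cref{thm:countingQSETH}.

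First I would assume, for contradiction, such an algorithm $\mathcal{A}$ for \majorityCNFSAT{}, and exhibit how to compute $\propertyMajority(tt(C))$ for an arbitrary $C\in\AC$ on $n$ variables. If $C$ is a \CNF{}, then $\propertyMajority(tt(C))=1$ iff $|C|\ge 2^{n}/2$, which is exactly the output of $\mathcal{A}(C)$. If $C$ is a \DNF{}, the natural move is to pass to its De Morgan negation $\neg C$, a poly-size \CNF{} with $tt(\neg C)=\overline{tt(C)}$; but this alone does not work, since --- unlike $\propertyParity$ and $\propertyCount$, which \cref{thm:countingQSETH} exploits --- $\propertyMajority$ is not self-complementary: $\propertyMajority(\bar z)=1-\propertyStrictMajority(z)$, not $1-\propertyMajority(z)$. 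Indeed $\mathcal{A}(\neg C)=[\,|C|\le 2^{n-1}\,]$, which equals neither $\propertyMajority(tt(C))=[\,|C|\ge 2^{n-1}\,]$ nor its negation when $|C|=2^{n-1}$.

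To repair this I would pad with one fresh variable $y$. Writing $C=\bigvee_i t_i$, define the \DNF{}
\[
\hat C \;=\; \Big(\bigvee_{i}(t_i\wedge\neg y)\Big)\;\vee\;(x_1\wedge y)\;\vee\;(\neg x_1\wedge\neg x_2\wedge\cdots\wedge\neg x_n\wedge y)
\]
on $n+1$ variables; it adds only $O(n)$ terms of width $O(n)$, and one checks that $|\hat C|=|C|+2^{n-1}+1$. Its De Morgan negation $\neg\hat C$ is a poly-size \CNF{} on $n+1$ variables with $|\neg\hat C|=3\cdot 2^{n-1}-|C|-1$, so $\mathcal{A}(\neg\hat C)=[\,|\neg\hat C|\ge 2^{n}\,]=[\,|C|\le 2^{n-1}-1\,]=1-\propertyMajority(tt(C))$; thus one query to $\mathcal{A}$ plus $\poly(n)$ overhead recovers $\propertyMajority(tt(C))$ for \DNF{} inputs as well. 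For \stMajorityCNFSAT{} I would run the identical argument with $\propertyStrictMajority$, using instead the gadget whose $y=1$ part covers exactly $2^{n-1}-1$ points (e.g.\ $\bigvee_{j=2}^{n}(x_1\wedge\neg x_j\wedge y)$), so that $|\hat C|=|C|+2^{n-1}-1$ and $\mathcal{A}(\neg\hat C)=[\,|C|\le 2^{n-1}\,]=1-\propertyStrictMajority(tt(C))$.

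Putting the two cases together gives, for each item, a white-box quantum algorithm computing the relevant property on every poly-size \CNF{}/\DNF{} formula in time $\poly(n)+\mathcal{O}(2^{(n+1)(1-\delta)})=\mathcal{O}(2^{n(1-\delta)})$, which for large $n$ is well below $\Q(\propertyMajority_n)^{1-\delta/2}=\Theta(2^{n(1-\delta/2)})$; this contradicts \acQSETH{} unless the property is not in $\mathcal{CO}(\AC)$, i.e.\ unless the corresponding item of \cref{conj:CompressionObliviousnessAC} fails. The only real obstacle, and the sole point where this proof departs from that of \cref{thm:countingQSETH}, is the \DNF{} case: because $\propertyMajority$ and $\propertyStrictMajority$ are \emph{complements of one another under bitwise negation} rather than self-complementary, the De Morgan step must be accompanied by a fresh variable whose gadget shifts the Hamming weight so that a single negated-formula query reproduces the property exactly, with no off-by-one at $|C|=2^{n-1}$; the remaining ingredients (the $m\le cn^2$ restriction, the single-query bounded-error bookkeeping, and matching the $\Theta(2^n)$ query lower bound) are routine and identical to the counting case.
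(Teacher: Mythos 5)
Your proof is correct, and it surfaces a genuine subtlety that the paper dispatches with the single sentence ``using similar arguments as in the proof of \cref{thm:countingQSETH}.'' The counting proof relies on the fact that $\propertyCount$ (and $\propertyParity$, $\propertyCount_q$) transforms trivially under the bitwise complement $tt(C)\mapsto\overline{tt(C)}$ that De Morgan induces; you correctly observe that $\propertyMajority$ and $\propertyStrictMajority$ are instead \emph{swapped} by complementation, so running a \majorityCNFSAT{} solver on $\neg C$ returns $[\,|C|\le 2^{n-1}\,]$, which disagrees with both $\propertyMajority(tt(C))$ and its negation precisely on the boundary $|C|=2^{n-1}$. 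Taken literally, the paper's ``similar argument'' therefore does not close the DNF case for Item 3 (nor Item 4), and your one-variable padding gadget is a clean fix: the $y=1$ slice of $\hat C$ is built to contain exactly $2^{n-1}+1$ (resp.\ $2^{n-1}-1$) points, so $|\hat C|=|C|+2^{n-1}+1$ (resp.\ $|C|+2^{n-1}-1$), the counts and De Morgan step all check out, and a single call to $\mathcal{A}$ on the $(n+1)$-variable CNF $\neg\hat C$ returns $1-\propertyMajority(tt(C))$ (resp.\ $1-\propertyStrictMajority(tt(C))$) with no off-by-one. The $+2$ clauses keep you within the $m\le c n^2$ restriction (adjusting $c$), the extra variable only costs a constant factor since $2^{(n+1)(1-\delta)}=\mathcal{O}(2^{n(1-\delta)})$, and comparing against $\Q(\propertyMajority_n)^{1-\delta'}=\Theta(2^{n(1-\delta')})$ for any $0<\delta'<\delta$ gives the contradiction with \acQSETH{} plus Item 3 (resp.\ Item 4) of \cref{conj:CompressionObliviousnessAC}. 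In short: same overall strategy as the paper intends, but you have filled in the one nontrivial step it left implicit.
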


Akmal and Williams showed that one can actually compute the Majority on the truth table of $k$-CNF formulas for constant $k$ in polynomial time, while computing the strict-Majority on the truth table of such formulas is NP-hard~\cite{AW21}. Therefore, here we define both majority and strict-majority and their variants of CNFSAT problems for clarity (and state the hardness of both problems in one conjecture). Note that for CNFSAT, each clause is allowed to contain $n$ literals (which means $k$ is no longer a constant), and in this case, it is not clear if one can solve \majorityCNFSAT{} in polynomial time or not. Therefore, none of \acQSETH{}, Items 3 and 4 of \Cref{conj:CompressionObliviousnessAC} is immediately false yet. (See also the discussion at the bottom of page~5 in the arXiv version of~\cite{AW21} for reductions between \majorityCNFSAT{} and \stMajorityCNFSAT{}.)


\subsubsection{Quantum complexity of \addErrorCountingCNFSAT{}}
\label{sec:AdditiveErrorCNFSAT}
Instead of the exact number of satisfying assignments to a formula, one might be interested in an additive-error approximation. Towards that, we define the problem \addErrorCountingCNFSAT{} as follows.

\begin{defn}[\addErrorCountingCNFSAT{}]
\label{def:AddErrorCountCNFSAT} Given a CNF formula $\phi$ 
on $n$ variables. 
The goal of the problem is to output 
an integer $d$ such that $|d-|\phi||< \Delta$ where 
$\Delta \in [1,2^n]$.
\end{defn}

This problem (\cref{def:AddErrorCountCNFSAT}) can be viewed as computing the following property on the truth table of the given formula.

\begin{defn}[\propertyAddErrorCount{}]
\label{def:AddErrorCount} 
Given a Boolean string $x \in \{0,1\}^N$, 
$\propertyAddErrorCount$ asks to output an integer $w$ such that $|w-|x|| < \Delta$ where 
$\Delta \in [1,N)$.
\end{defn}

Note that $\propertyAddErrorCount$ is a relation instead of a function now because its value is not necessarily uniquely defined. The bounded-error quantum query complexity for computing \propertyAddErrorCount{} was studied in \cite{NW99}. They showed the following result.

\begin{theorem}[Theorem~1.11 in \cite{NW99}] Let  $\Delta \in [1,N)$. Every bounded-error quantum algorithm that computes $\propertyAddErrorCount$ uses $\Omega \left(\sqrt{\frac{N}{\Delta}}+\frac{\sqrt{t(N-t)}}{\Delta} \right)$ quantum queries on inputs with $t$ ones.
\end{theorem}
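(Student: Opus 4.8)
The plan is to prove this bound by the polynomial method in two moves: first reduce additive‑error counting to a two‑slice distinguishing problem around Hamming weight $t$, and then lower bound the query complexity of that distinguishing problem via symmetrization together with a Markov/Paturi‑type inequality for polynomials that are bounded on the integer grid $\{0,1,\dots,N\}$ — the same circle of ideas already used in the proof of \cref{thm:ModqQueryComplexity}.

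\textbf{Reduction.} Suppose a $T$‑query bounded‑error algorithm $\mathcal{A}$ solves $\propertyAddErrorCount$. Pick integers $a<b$ in $\{0,\dots,N\}$ with $b-a\ge 2\Delta$ and with $t\in\{a,b\}$: take $(a,b)=(t,t+2\Delta)$ if $t+2\Delta\le N$ and $(a,b)=(t-2\Delta,t)$ otherwise; the leftover case forces $\Delta=\Omega(N)$, for which the claimed bound is a constant and is disposed of by noting that then a fixed output already solves the problem. On a weight‑$a$ input the correct outputs lie in $(a-\Delta,a+\Delta)$ and on a weight‑$b$ input in $(b-\Delta,b+\Delta)$, and these intervals are disjoint since $b-a\ge 2\Delta$; hence thresholding the output of $\mathcal{A}$ at $(a+b)/2$ yields a $T$‑query bounded‑error algorithm that distinguishes weight‑$a$ strings from weight‑$b$ strings.

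\textbf{Lower bound for the distinguishing problem.} Writing $p_w(x)=\Pr[\mathcal{A}\text{ outputs }w]$, each $p_w$ is a multilinear polynomial in $x_1,\dots,x_N$ of degree $\le 2T$ by \cite{BBCMW01}, so the acceptance polynomial $r(x)=\sum_{w<(a+b)/2}p_w(x)$ has degree $\le 2T$, satisfies $0\le r(x)\le 1$ on all of $\{0,1\}^N$, and has $r\ge 2/3$ on weight‑$a$ inputs and $r\le 1/3$ on weight‑$b$ inputs. Symmetrizing over coordinate permutations yields a univariate polynomial $q$ with $\deg q\le 2T$, with $0\le q(j)\le 1$ for every integer $j\in\{0,\dots,N\}$, and with $|q(a)-q(b)|\ge 1/3$. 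A Markov/Paturi‑type inequality for polynomials bounded on $\{0,\dots,N\}$ then forces $\deg q=\Omega\!\big(\tfrac{\sqrt{\mu(N-\mu)}}{|b-a|}+\sqrt{\tfrac{N}{|b-a|}}\big)$ for $\mu$ an endpoint of $[a,b]$: a polynomial cannot change by $\Omega(1)$ over a window of $|b-a|$ consecutive integers situated near $\mu$ unless its degree is this large, which reduces to the single‑jump estimate $\widetilde{deg}=\Theta(\sqrt{N(N-\Gamma)})$ (used earlier) by a rescaling substitution that shrinks the window to constant length. Since $|b-a|=2\Delta$ and $\mu$ is within $2\Delta$ of $t$, this gives $T=\Omega\!\big(\sqrt{t(N-t)}/\Delta+\sqrt{N/\Delta}\big)$, the claimed bound. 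When $\mu(N-\mu)\neq\Theta(t(N-t))$ — i.e.\ $t$ is within $O(\Delta)$ of $0$ or $N$ — the claimed bound is already $O(\sqrt{N/\Delta})$ and the second term alone suffices.

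\textbf{Main obstacle.} The substantive point is the Markov/Paturi‑type inequality with the correct dependence on the window width $|b-a|$: obtaining the full $1/\Delta$ factor (not merely $1/1$), and extracting \emph{both} the $\sqrt{t(N-t)}/\Delta$ term and the $\sqrt{N/\Delta}$ term from one degree estimate — the ``endpoint'' regime of the inequality is exactly what produces the $\sqrt{N/\Delta}$ term. The reduction and the symmetrization are routine, and the case analysis for $t$ near $0$ or $N$, or for $\Delta$ comparable to $\min\{t,N-t\}$, is tedious but collapses into regimes where the claimed bound is a constant and holds trivially.
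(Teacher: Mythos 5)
The paper does not prove this theorem; it is stated as a direct citation of Nayak and Wu (Theorem~1.11 of \cite{NW99}), and the companion statement Corollary~1.2 of \cite{NW99} (the lower bound for the two-slice promise problem $f_{\ell,\ell'}$) is likewise cited as a black box in \Cref{cor:QueryComplexityEllFunction}. So the relevant comparison is between your sketch and the NW99 argument, and your overall architecture matches theirs: reduce additive-error counting to distinguishing Hamming weight $a$ from Hamming weight $b$ with $b-a\approx 2\Delta$, then lower-bound the degree of the symmetrized acceptance polynomial. The reduction step is fine (your bookkeeping for the choice of $(a,b)$, the disjointness of the output intervals, and the disposal of the $\Delta=\Omega(N)$ and $t$-near-boundary regimes is all correct).

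The gap is in the polynomial lower bound. You assert that the needed inequality — a degree-$d$ polynomial bounded on $\{0,\dots,N\}$ cannot jump by $\Omega(1)$ over a window of $2\Delta$ consecutive integers near $\mu$ unless $d=\Omega\bigl(\sqrt{\mu(N-\mu)}/\Delta+\sqrt{N/\Delta}\bigr)$ — ``reduces to the single-jump estimate $\widetilde{\deg}=\Theta(\sqrt{N(N-\Gamma)})$ by a rescaling substitution that shrinks the window to constant length.'' That reduction does not go through. Paturi's theorem, which you invoke and which was appropriately used in the proof of \Cref{thm:ModqQueryComplexity}, is a statement about \emph{total} symmetric Boolean functions: the polynomial must approximate $f$ at every Hamming weight $0,\dots,N$. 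Here you have a \emph{partial} (promise) function constrained only at two slices, with mere boundedness elsewhere; after the substitution $y\mapsto(x-a)/(2\Delta)$ the polynomial is bounded on the $\approx N/(2\Delta)$ integer points of the rescaled grid but does not approximate any total symmetric Boolean function there, so Paturi's hypothesis is not met. The actual NW99 argument uses a different tool: a Bernstein–Markov-type derivative inequality combined with a lemma (in the Coppersmith–Rivlin circle of ideas) transferring boundedness on integer points to boundedness on the whole real interval, which is what produces the $1/\Delta$ factor and, in the endpoint regime, the $\sqrt{N/\Delta}$ term. Your ``main obstacle'' paragraph correctly identifies this as the substantive difficulty, but the proposed route through Paturi does not resolve it — the NW99 Corollary~1.2 bound is strictly more general than Paturi's total-function theorem and must be proved by the derivative-inequality argument, not derived from it by rescaling.
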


For values of $\Delta=o(\sqrt{t})$ we are unable to prove the compression-obliviousness of this property. Hence, we make the following conjecture.

\begin{conjecture}
\label{conj:CompressionObOfAddErrorCount}
$\propertyAddErrorCount \in \mathcal{CO}(\textsc{AC}^{0}_{2})$.   
\end{conjecture}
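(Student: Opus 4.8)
The plan is to prove $\propertyAddErrorCount \in \mathcal{CO}(\textsc{AC}^{0}_{2})$ by the route that establishes $\propertyOR{}\in \mathcal{CO}(\AC)$ in \cite{BPS21,CCKPS25Qnaturalproof}: instead of reasoning about formula descriptions, one shows that the \cite{NW99} query lower bound for $\propertyAddErrorCount$ survives when the $2^n$-bit input is \emph{promised} to be the truth table of a $\poly(n)$-size \CNF{} or \DNF{}. That bound is witnessed by a pair of input distributions, one supported on Hamming weight $t$ and one on Hamming weight $t+c\Delta$ for a suitable constant $c$; so it suffices, for infinitely many $n$ and every $\delta>0$, to realize these distributions using only truth tables of $\poly(n)$-size formulas and then conclude via \cref{def:CompressionOblivious}.

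The observation that makes part of this work is that truth tables of \emph{extreme} Hamming weight are small formulas with \emph{arbitrary} support: a weight-$s$ string is the truth table of the \DNF{} equal to the OR of its $s$ minterms, and a weight-$(2^n-s)$ string is the truth table of the \CNF{} equal to the AND of its $s$ maxterms. Hence, whenever the witnessing weights $t$ and $t+c\Delta$ (or their complements) are $\poly(n)$ --- in particular when $\Delta=\poly(n)$ and the ``$\sqrt{2^n/\Delta}$'' part of the \cite{NW99} bound is placed at $t=0$ versus $t=\Theta(\Delta)$, or symmetrically at the $t=2^n-\poly(n)$ end --- the entire hard distribution already lies inside the class of $\poly(n)$-formula truth tables, the lower bound transfers verbatim, and $\propertyAddErrorCount$ is compression-oblivious in that window of parameters. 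This is exactly what is needed to make the conditional bounds of \cref{thm:AddErrorQSETH} non-trivial for $\hat h=\poly(n)$ or $\hat h=2^n-\poly(n)$.

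The main obstacle, and the reason for stating this as a conjecture rather than a theorem, is the regime $\Delta=o(\sqrt t)$. There the dominant ``$\sqrt{t(2^n-t)}/\Delta$'' term of the \cite{NW99} bound is witnessed only at superpolynomial Hamming weights $t$ (up to $t\approx 2^{n-1}$), and its hardness relies on a perturbation of size $c\Delta=o(\sqrt t)$ being spread \emph{uniformly} over $\Theta(2^n)$ coordinates on top of an essentially unstructured weight-$\Theta(2^n)$ base. We do not see how to encode such an instance as a $\poly(n)$-size \CNF{}/\DNF{}: any function computed by $\poly(n)$ clauses or terms is a small boolean combination of subcubes, and that structure appears to let a quantum algorithm localize the perturbation --- or simply estimate the count --- in far fewer than $\Q(\propertyAddErrorCount)^{1-\delta}$ queries, a possibility we cannot currently rule out. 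Closing this gap is exactly the content of the conjecture. As supporting evidence, note --- as for the compression-obliviousness conjectures stated earlier in this paper --- that a refutation would give an unexpectedly fast quantum algorithm for additively approximating the number of satisfying assignments of a small \CNF{}/\DNF{}, which by \cite{CCKPS25Qnaturalproof} would be in tension with standard cryptographic and natural-proof-style assumptions.
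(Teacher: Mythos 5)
The statement you are asked to prove is Conjecture~\ref{conj:CompressionObOfAddErrorCount}, and the paper gives \emph{no} proof; its only justification is the one sentence immediately preceding it: ``For values of $\Delta=o(\sqrt{t})$ we are unable to prove the compression-obliviousness of this property. Hence, we make the following conjecture.'' Your proposal correctly treats the statement as a conjecture, and what you have written is essentially an expansion of the paper's unstated reasoning rather than a proof. Your partial observation is a genuine sharpening of the paper's remark: a weight-$s$ string is indeed the truth table of an $s$-term \DNF{} of minterms, a weight-$(2^n-s)$ string is the truth table of an $s$-clause \CNF{} of maxterms, and so the NW99 hard pair at weights $0$ versus $\Theta(\Delta)$ (or the complementary pair) sits inside the class of $\poly(n)$-formula truth tables whenever $\Delta=\poly(n)$. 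Your identification of the obstruction --- the $\Delta=o(\sqrt{t})$ regime where the dominant $\sqrt{t(2^n-t)}/\Delta$ term is witnessed only at superpolynomial, essentially unstructured weights that no $\poly(n)$-size \CNF{}/\DNF{} can realize --- is precisely the regime the paper flags, and your closing remark about crypto/natural-proof evidence mirrors the paper's analogous remark for Conjecture~\ref{conj:CompressionObliviousnessAC}.

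One caveat worth making explicit. You write that in the low-weight window ``the lower bound transfers verbatim, and $\propertyAddErrorCount$ is compression-oblivious in that window of parameters.'' Under Definition~\ref{def:CompressionOblivious} as literally stated, compression-obliviousness is measured against the \emph{worst-case} query complexity $\Q(\propertyAddErrorCount_n)=\Theta(2^n/\Delta)$, which is attained at $t\approx 2^{n-1}$, not at low weight. Your low-weight instances only certify an $\Omega\bigl(\sqrt{2^n/\Delta}\bigr)$ time lower bound, which is quadratically weaker than $\Q(\propertyAddErrorCount_n)^{1-\delta}$. What your argument actually establishes is a per-$\hat h$ lower bound matching the per-$t$ statement of \cite{NW99}, which is indeed the form used downstream in \cref{thm:AddErrorQSETH}; but strictly speaking this does not show $\propertyAddErrorCount\in\mathcal{CO}(\textsc{AC}^{0}_{2})$ in the sense of Definition~\ref{def:CompressionOblivious} even for $\Delta=\poly(n)$. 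This tension is inherited from the paper's own framing --- the definition is worst-case while the application is per-$\hat h$ --- so it is a clarification of scope rather than a flaw in your reasoning, but the phrase ``compression-oblivious in that window'' should be replaced by the more modest ``matches the restricted-weight query lower bound'' to avoid claiming more than the observation delivers.
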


One can now establish the time lower bound for computing the \propertyAddErrorCount{} on $\poly(n)$-sized \CNF{} and \DNF{} formulas. However, this doesn't automatically imply the same lower bound for the case when there are only \CNF{} formulas to consider. Fortunately, \propertyAddErrorCount{} is defined in such a way that computing this property is equally hard for both \CNF{} and \DNF{} formulas. More precisely, the following statement holds.

\begin{theorem}[$\Delta$\textsc{-add-}\#QSETH]
\label{thm:AddErrorQSETH}
Let $\Delta \in [{1},2^n)$. For each constant $\delta>0$, there exists $c>0$ such that there is no bounded-error quantum algorithm that solves \addErrorCountingCNFSAT{} (even restricted to formulas with $m\leq c n^2$ clauses) in $\mathcal{O}\left( \left(\sqrt{\frac{N}{\Delta}}+\frac{\sqrt{\hat{h}(N-\hat{h})}}{\Delta} \right)^{1-\delta}\right)$ time where $\hat{h}$ is the number of satisfying assignments, unless \acQSETH{} (\cref{conj:ACqseth}) is false or $\propertyAddErrorCount \notin \mathcal{CO}(\textsc{AC}^{0}_{2})$ (i.e., \cref{conj:CompressionObOfAddErrorCount} is false). 
\end{theorem}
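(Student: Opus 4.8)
The plan is to mirror the proof of \Cref{thm:countingQSETH} and argue by contradiction. Suppose there is a bounded-error quantum algorithm $\mathcal{A}$ that, given a CNF formula $\phi$ on $n$ variables with at most $cn^2$ clauses and $\hat{h}=|\phi|$ satisfying assignments, outputs a valid $\Delta$-additive estimate of $\hat{h}$ in time $\mathcal{O}\big((\sqrt{N/\Delta}+\sqrt{\hat{h}(N-\hat{h})}/\Delta)^{1-\delta}\big)$, where $N=2^n$. From $\mathcal{A}$ I would build a white-box quantum algorithm $\mathcal{B}$ that computes the relation \propertyAddErrorCount{} on the $2^n$-bit truth tables of \emph{all} poly-size CNF \emph{and} DNF formulas on $n$ variables, fast enough to contradict \acQSETH{} (given \Cref{conj:CompressionObOfAddErrorCount}).

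On a poly-size CNF input, $\mathcal{B}$ simply runs $\mathcal{A}$. On a poly-size DNF input $C$, $\mathcal{B}$ first forms $\neg C$, which by De Morgan's laws is a CNF with the same number of clauses (hence still at most $cn^2$), computable in $\poly(n)$ time; it then runs $\mathcal{A}$ on $\neg C$ to obtain an estimate $d$ of $|\neg C|=N-|C|$ and outputs $N-d$. Since $|(N-d)-|C||=|d-|\neg C||<\Delta$, this is a valid value of \propertyAddErrorCount{} on the truth table of $C$.

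The point needing care — and the reason the DNF case is less immediate here than for exact counting — is the running time, since $\mathcal{A}$ is run on $\neg C$, which has $N-\hat{h}$ satisfying assignments rather than $\hat{h}$. The key observation is that the quantum query lower bound of \cite{NW99}, $\Omega(\sqrt{N/\Delta}+\sqrt{t(N-t)}/\Delta)$ on weight-$t$ inputs, is \emph{symmetric} under $t\mapsto N-t$: $\sqrt{(N-\hat{h})\hat{h}}=\sqrt{\hat{h}(N-\hat{h})}$. Hence $\mathcal{A}$ on $\neg C$ costs $\mathcal{O}\big((\sqrt{N/\Delta}+\sqrt{\hat{h}(N-\hat{h})}/\Delta)^{1-\delta}\big)$, exactly the bound that applies on the truth table of $C$ as well, and the extra $\poly(n)$ overhead of negation is absorbed (in the regime where this query bound is super-polynomial, which is the only regime in which \acQSETH{} yields anything). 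Since the \cite{NW99} bound is tight — additive-error counting with error $\Delta$ is solved within a $\polylog$ factor by amplitude estimation (\Cref{thm:AmplitudeEstimation}) — $\mathcal{B}$ computes \propertyAddErrorCount{} on every poly-size CNF/DNF faster than $\Q(\propertyAddErrorCount_n)^{1-\delta'}$ for a suitable $\delta'>0$, contradicting \acQSETH{}.

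The main thing to handle carefully — more bookkeeping than a genuine obstacle — is that \propertyAddErrorCount{} is a relation whose query complexity depends on the Hamming weight of the input, so \acQSETH{} must be invoked with this weight dependence made explicit: one fixes the satisfying-assignment count $\hat{h}$ of the hard instances supplied by the framework, and the statement of the theorem (with $\hat{h}$ appearing as a free parameter) is precisely what is needed so that both branches of $\mathcal{B}$ fall strictly below the corresponding \cite{NW99} query lower bound. Everything else is routine.
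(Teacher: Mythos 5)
Your proposal is correct and mirrors the paper's proof almost exactly: assume a fast additive-error counter on CNFs, handle DNF inputs by negating to a CNF via De Morgan, running the assumed algorithm, and subtracting the output from $2^n$. Your explicit observation that the Nayak--Wu bound is invariant under $\hat{h}\mapsto N-\hat{h}$ (so the time budget carries over unchanged when the algorithm is run on $\neg\phi$) is the right thing to make precise and is only implicit in the paper's writeup; the aside about tightness of the \cite{NW99} bound via amplitude estimation is unnecessary, since the lower bound alone already gives $L^{1-\beta}\le \Q^{1-\beta}$, but it does no harm.
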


\begin{proof}
By way of contradiction let's assume that there is an algorithm $\mathcal{A}$ such that given a \CNF{} formula it can compute the \propertyAddErrorCount{} on its truth table in  $\mathcal{O}\left( \left(\sqrt{\frac{N}{\Delta}}+\frac{\sqrt{t(N-t)}}{\Delta} \right)^{1-\beta}\right)$ time for some constant $\beta>0$.

Then, given a $\poly(n)$ sized \DNF{} formula on $n$ variables, let us denote that by $\phi$, we can run Algorithm $\mathcal{A}$ on $\neg \phi$ and use its output which is a $\Delta$ additive error approximation of the number of satisfying assignments to $\neg \phi$ to compute a $\Delta$ additive error approximation of the number of satisfying assignments to $\phi$.

Let us denote the number of satisfying assignments of $\neg \phi$ by $d'$ and the output of Algorithm $\mathcal{A}$ by $d$. This means we have $|d-d'| < \Delta$. We claim that $2^n -d$ will be a $\Delta$ additive error approximation of $2^n -d'$, which is the number of satisfying assignments of $\phi$; $|(2^n -d)-(2^n -d')|=|d'-d| < \Delta$.

Therefore, a $\mathcal{O}\left( \left(\sqrt{\frac{N}{\Delta}}+\frac{\sqrt{t(N-t)}}{\Delta} \right)^{1-\beta}\right)$ time algorithm for computing \propertyAddErrorCount{} on truth table of \CNF{} formulas also implies a $\mathcal{O}\left( \left(\sqrt{\frac{N}{\Delta}}+\frac{\sqrt{t(N-t)}}{\Delta} \right)^{1-\beta}\right)$ time algorithm for computing \propertyAddErrorCount{} on truth table of \DNF{} formulas; this violates the combination of \acQSETH{} and \cref{conj:CompressionObOfAddErrorCount}. Hence, proved.
\end{proof}





\subsubsection{Quantum complexity of \appCountingCNFSAT{} and other related problems}
\label{sec:MultiplicativeFactorCNFSAT}

One other approximation of the count of satisfying assignments is the multiplicative-factor approximation, defined as follows.
\begin{defn}[\appCountingCNFSAT{}]
\label{def:ApproximateCountCNFSAT}
Let $\gamma \in (0,1)$. The \appCountingCNFSAT{} problem is defined as follows. Given a CNF formula formula $\phi$ on $n$ Boolean variables, 
The goal of the problem is to output an integer $d$ such that $(1-\gamma)|\phi|< d < (1+\gamma) |\phi|$. 
\footnote{The same results hold if the approximation is defined with the equalities, i.e., $(1-\gamma)|\phi|\leq d \leq (1+\gamma) |\phi|$. An additional observation under this changed definition of \appCountingCNFSAT{} is as follows. Given a CNF formula as input, the algorithm for \appCountingCNFSAT{} outputs $0$ only when there is no satisfying assignment to that formula. Hence, one can decide satisfiability of a given CNF formula using the algorithm for \appCountingCNFSAT{}. Therefore, the same lower bound 
holds for this changed definition of \appCountingCNFSAT{}.}
\end{defn}

The expression $(1-\gamma)|\phi| < d < (1+\gamma)|\phi|$ 
can be categorized into the following two cases.
\begin{itemize}
    \item Case 1 is when $\gamma |\phi|\leq 1$: in this regime, the algorithm solving \appCountingCNFSAT{} is expected to return the value $|\phi|$, which is the \emph{exact} count of the number of solutions to the \CNFSAT{} problem. From \cref{thm:countingQSETH} we postulate that for each constant $\delta>0$, there is no $\mathcal{O}(2^{n(1-\delta)})$ time algorithm that can compute the exact number of solutions to input \CNF{} formula; this is a tight lower bound.
    
    \item Case 2 is when $\gamma |\phi| > 1$: in this regime, the algorithm solving \appCountingCNFSAT{} is expected to return a value $d$ which is a \emph{$\gamma$-approximate relative count} of the number of solutions to the \CNFSAT{} problem. 
\end{itemize}

In order to understand the hardness of \appCountingCNFSAT{} in the second case, we will first try to understand how hard it is to compute the following property. 
Let $f_{\ell,\ell'}:\mathcal{D} \rightarrow \{0,1\}$ with $\mathcal{D} \subset \{0,1\}^N$ be a partial function defined as follows
\begin{equation*}
f_{\ell,\ell'}(x)=
\begin{cases}
1, \text{ if } |x|=\ell,\\
0, \text{ if } |x|=\ell'.  
\end{cases}
\end{equation*}
Nayak and Wu in \cite{NW99} analyzed the approximate degree of $f_{\ell,\ell'}$. By using the polynomial method~\cite{BBCMW01} again 
we have a lower bound on the quantum query complexity of $f_{\ell,\ell'}$ as mentioned in the following statement.

\begin{lemma}\cite[Corollary~1.2]{NW99}
\label{cor:QueryComplexityEllFunction}
Let $\ell, \ell'\in\mathbb{N}$ be such that $\ell\neq \ell'$,  $f_{\ell,\ell'}:\mathcal{D} \rightarrow \{0,1\}$ where $\mathcal{D} \subset \{0,1\}^N$, and
\begin{equation*}
f_{\ell,\ell'}(x)=
\begin{cases}
1, \text{ if } |x|=\ell,\\
0, \text{ if } |x|=\ell'.  
\end{cases}
\end{equation*} Let $\Delta_{\ell}=|\ell-\ell'|$ and $p \in \{\ell, \ell'\}$ be such that $|\frac{N}{2}-p|$ is maximized. Then every bounded-error quantum algorithm that computes $f_{\ell,\ell'}$
uses $\Omega \left( \sqrt{\frac{N}{\Delta_{\ell}}}+\frac{\sqrt{p(N-p)}}{\Delta_{\ell}}\right)$ queries.    
\end{lemma}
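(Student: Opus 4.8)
The plan is to go through the polynomial method of Beals, Buhrman, Cleve, Mosca and de Wolf~\cite{BBCMW01}, as the statement already hints. Suppose $\mathcal{A}$ is a bounded-error quantum algorithm computing $f_{\ell,\ell'}$ with $T$ queries. Its acceptance probability on input $x\in\{0,1\}^N$ is a real multilinear polynomial $P(x)$ of degree at most $2T$ with $0\le P(x)\le 1$ everywhere and $|P(x)-f_{\ell,\ell'}(x)|\le 1/3$ for all $x\in\mathcal{D}$. Symmetrizing $P$ over $S_N$ (replacing it by $\frac{1}{N!}\sum_\pi P\circ\pi$) yields a univariate real polynomial $q$ of degree $d:=\deg q\le 2T$ with $q(k)\in[0,1]$ for every integer $k\in\{0,1,\dots,N\}$, $q(\ell)\ge 2/3$, and $q(\ell')\le 1/3$. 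So it suffices to show $d=\Omega\big(\sqrt{N/\Delta_\ell}+\sqrt{p(N-p)}/\Delta_\ell\big)$ for any such $q$, where $\Delta_\ell=|\ell-\ell'|$ and $p\in\{\ell,\ell'\}$ maximizes $|N/2-p|$; since $\max(a,b)=\Theta(a+b)$ for $a,b\ge 0$, it is enough to prove the two bounds $d=\Omega(\sqrt{N/\Delta_\ell})$ and $d=\Omega(\sqrt{p(N-p)}/\Delta_\ell)$ separately, and then conclude $T\ge d/2$.

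Both bounds start from the same observation: $q$ changes by at least $|q(\ell)-q(\ell')|\ge 1/3$ across the integer points $\ell$ and $\ell'$, so by the mean value theorem there is a point $\xi$ between $\ell$ and $\ell'$ with $|q'(\xi)|\ge\frac{1}{3\Delta_\ell}$. For the first bound I would apply Markov's inequality to $q$ on $[0,N]$: a degree-$d$ polynomial bounded by $M$ on an interval of length $N$ has $\|q'\|_\infty\le 2d^2M/N$ there, so $\frac{1}{3\Delta_\ell}\le|q'(\xi)|\le 2d^2M/N$, giving $d=\Omega(\sqrt{N/(M\Delta_\ell)})$. For the second bound I would use Bernstein's inequality after rescaling $[0,N]$ to $[-1,1]$ via $x\mapsto\frac{2x-N}{N}$: the image $y$ of $x\in[0,N]$ satisfies $\sqrt{1-y^2}=\frac{2}{N}\sqrt{x(N-x)}$, hence $|q'(x)|\le\frac{dM}{\sqrt{x(N-x)}}$ for $x\in[0,N]$. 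Since $x\mapsto x(N-x)$ is a downward parabola peaked at $N/2$, its minimum over the interval between $\ell$ and $\ell'$ is attained at whichever endpoint is farther from $N/2$, i.e. at $p$; therefore $\xi(N-\xi)\ge p(N-p)$ and $\frac{1}{3\Delta_\ell}\le|q'(\xi)|\le\frac{dM}{\sqrt{p(N-p)}}$, giving $d=\Omega(\sqrt{p(N-p)}/(M\Delta_\ell))$.

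The gap in this outline — and the step I expect to be the real work — is that Markov's and Bernstein's inequalities need $q$ to be bounded by a constant $M$ on the whole interval $[0,N]$, while symmetrization only gives $|q|\le 1$ on the integer grid $\{0,\dots,N\}$. When $d=O(\sqrt N)$ this is harmless: the Coppersmith--Rivlin/Ehlich--Zeller estimate says a degree-$d$ polynomial bounded by $1$ on $\{0,\dots,N\}$ is bounded by an absolute constant on all of $[0,N]$, so $M=O(1)$ and both bounds go through. When $d=\Omega(\sqrt N)$ the first bound is automatic (as $\Delta_\ell\ge 1$), but the second is not: it can be as strong as $d=\Omega(N/\Delta_\ell)$ when $\ell,\ell'$ straddle the middle of the grid, which is the hardest, majority-like regime and exactly where grid-boundedness fails to control $q$ between integers. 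Handling this case is the technical heart of Nayak and Wu's argument~\cite{NW99}: instead of passing to the interval one works directly on the grid, using a refined Bernstein-type inequality for polynomials bounded on $\{0,\dots,N\}$, localized near the extreme point $p$ (morally the inequality underlying Paturi's tight bound~\cite{Paturi92} for symmetric Boolean functions, here adapted to the promise/partial setting). I would either invoke that grid inequality directly or reprove it; once it is in hand, the chain mean value theorem $\to$ derivative bound $\to$ degree bound closes in all regimes and the lemma follows.
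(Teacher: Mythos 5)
The paper gives no proof for this lemma; it is cited verbatim as Corollary~1.2 of Nayak and Wu. Your outline reconstructs their actual argument faithfully: reduce via the polynomial method to a symmetrized univariate polynomial $q$ of degree $d \le 2T$ taking values in $[0,1]$ on the integer grid with a constant jump across $\ell,\ell'$; extract $|q'(\xi)| \ge 1/(3\Delta_\ell)$ via the mean value theorem; and derive $d = \Omega(\sqrt{N/\Delta_\ell})$ from Markov and $d = \Omega(\sqrt{p(N-p)}/\Delta_\ell)$ from Bernstein. Your rescaling of Bernstein to $[0,N]$, your observation that the minimum of $x(N-x)$ on the interval between $\ell$ and $\ell'$ is attained at the endpoint $p$ farther from $N/2$, and the reduction of the sum to the two individual bounds via $\max(a,b)=\Theta(a+b)$ are all correct.

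However, the gap you flag is genuine and is exactly the content of Nayak and Wu's technical contribution, so as written this is an accurate roadmap rather than a proof. Markov and Bernstein require $\sup_{[0,N]}|q| \le M$, but symmetrization only certifies $|q|\le 1$ on $\{0,\dots,N\}$. Coppersmith--Rivlin (or Ehlich--Zeller) upgrades grid boundedness to interval boundedness with $M = e^{O(d^2/N)}$, which stays $O(1)$ only when $d = O(\sqrt N)$; in the near-majority regime where $p$ is close to $N/2$ and $\Delta_\ell$ is small, the target degree bound is $\omega(\sqrt N)$ and this $M$ is useless. Nayak and Wu close the loop by proving a Bernstein-type inequality intrinsic to boundedness on the integer grid (morally the statement behind Paturi's approximate-degree theorem, localized near $p$ and adapted to the two-point promise); you correctly locate and describe this lemma but neither prove it nor cite it precisely enough to invoke. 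To complete the argument you would need to state and establish that grid-local Bernstein bound and then rerun the mean-value-theorem to degree chain in the $d=\omega(\sqrt N)$ regime.
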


Using \acQSETH{} we will now show that for a choice of $\ell,\ell'$ computing $f_{\ell,\ell'}$ on truth tables of $\CNF{}$ formulas of $\poly(n)$ size requires $\Omega \left( \sqrt{\frac{2^n}{\Delta_{\ell}}}+\frac{\sqrt{p(2^n-p)}}{\Delta_{\ell}}\right)$ time where $\Delta_{\ell}=|\ell-\ell'|$ and $p \in \{\ell,\ell'\}$ that maximises $|2^{n-1}-p|$. The only caveat (as also witnessed several times earlier) is that we cannot prove the compression obliviousness of $f_{\ell,\ell'}$ hence we state and use the following conjecture.

\begin{conjecture}
\label{conj:CompressionObOfEll}
For any pair of integers $\ell,\ell' \in [2^n] \cup \{0\}$ satisfying that $\ell\neq \ell'$, $f_{\ell,\ell'} \in \mathcal{CO}(\textsc{AC}^{0}_{2})$.\footnote{Note that, there are some values of $\ell, \ell'$ for which $f_{\ell,\ell'}$ will be provably compression oblivious, for e.g., $\ell=1$ and $\ell'=0$ would capture the OR property which is compression oblivious; see \cref{sec:basicCNFSAT}.}
\end{conjecture}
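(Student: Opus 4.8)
The final statement is a \emph{conjecture}, so — as with \Cref{conj:CompressionObliviousnessAC} and \Cref{conj:CompressionObOfAddErrorCount} above — I do not expect a general unconditional proof; the most one can do is (i) dispose of the $\ell,\ell'$ for which compression-obliviousness is genuinely provable, and (ii) isolate precisely where a hypothetical general argument collapses. That collapse is the main obstacle, and it is exactly why the statement is posed as a conjecture rather than a theorem. So the plan is to carry out (i) cleanly and then make (ii) precise.

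For (i): I would prove $f_{\ell,\ell'}\in\mathcal{CO}(\textsc{AC}^{0}_{2})$ whenever each of $\ell,\ell'$ is either at most $\poly(n)$ or at least $2^n-\poly(n)$ (this subsumes the footnote's $\ell=1,\ell'=0$, which recovers $\propertyOR{}$). The key observation is that in this regime \emph{every} point of $\dom(f_{\ell,\ell'})$ is already the truth table of a $\poly(n)$-size formula: a string of Hamming weight $k\le\poly(n)$ is the truth table of the $k$-term DNF $\bigvee_{j\le k}\big(\bigwedge_{i\le n} v_i^{(a_j)}\big)$ (each term the $n$-literal AND that selects one of the $k$ ones), which has size $O(nk)\le\poly(n)$ and hence lies in $\textsc{AC}^{0}_{2}$; dually, a string of weight $2^n-k$ is the truth table of the $k$-clause CNF obtained by negating such a DNF. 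Hence the restriction of $f_{\ell,\ell'}$ to $\textsc{AC}^{0}_{2}$ truth tables is the \emph{same} partial function, with the \emph{same} quantum query complexity $\Q(f_{\ell,\ell'})$ of \Cref{cor:QueryComplexityEllFunction}; feeding the entire family of these formulas in as the ``hard languages'' of \Cref{def:CompressionOblivious} and invoking that query lower bound then gives $f_{\ell,\ell'}\in\mathcal{CO}(\textsc{AC}^{0}_{2})$ for these $\ell,\ell'$ — the same template \cite{BPS21,CCKPS25Qnaturalproof} use to establish $\propertyOR{}\in\mathcal{CO}(\AC)$.

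For (ii), the obstacle: the argument above breaks the instant $\ell$ and $\ell'$ are \emph{both} superpolynomially far from $0$ and from $2^n$ — which is exactly the regime \Cref{thm:AppCountQSETH} exploits to get a $\big(\tfrac1\gamma\sqrt{(2^n-\hat h)/\hat h}\big)^{1-o(1)}$ lower bound for the multiplicative-factor variant \appCountingCNFSAT{}. There a \emph{typical} string of Hamming weight exactly $\ell$ is not the truth table of any $\poly(n)$-size CNF or DNF: taking $\ell\le 2^{n-1}$ without loss of generality there are $\binom{2^n}{\ell}\ge 2^{\ell}$ such strings, far more than the $2^{\poly(n)}$ poly-size formulas, so the information-theoretically hard instances witnessing \Cref{cor:QueryComplexityEllFunction} do not live inside $\textsc{AC}^{0}_{2}$, and it is unclear whether the \emph{restriction} of $f_{\ell,\ell'}$ to succinctly describable truth tables remains hard at all. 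Closing this would require a genuinely new ingredient — something like a Hamming-weight-preserving compression turning a generic hard $f_{\ell,\ell'}$ instance into an equivalent poly-size formula — and, as flagged around Theorem~9 in \cite{BPS21} and discussed in \cite{CCKPS25Qnaturalproof}, certifying compression-obliviousness of properties with large query complexity in this way would refute standard cryptographic assumptions; absent such an ingredient, the statement remains a conjecture.
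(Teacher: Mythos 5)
The statement you were asked to prove is a \emph{conjecture} in the paper, and the paper supplies no proof of it; it is simply posed, with a footnote pointing out that for a few degenerate choices (e.g.\ $\ell=1,\ell'=0$, which collapses $f_{\ell,\ell'}$ to $\propertyOR$) compression-obliviousness is actually provable. You correctly identified this, so your decision not to attempt an unconditional proof, and instead to (i) carve out the provable cases and (ii) name the obstacle for the rest, is exactly the right framing.

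Your part (i) is sound and goes a bit further than the paper's footnote: when both $\ell$ and $\ell'$ lie within $\poly(n)$ of $0$ (or of $2^n$), every input in $\dom(f_{\ell,\ell'})$ is the truth table of a $\poly(n)$-term DNF (resp.\ CNF), so the domain of the partial function sits entirely inside $\textsc{AC}^{0}_{2}$ truth tables and the query lower bound of \Cref{cor:QueryComplexityEllFunction} passes through unchanged; this is the same template \cite{BPS21} use to prove $\propertyOR\in\mathcal{CO}(\AC)$. One small bookkeeping caveat: to close the argument against \Cref{def:CompressionObliviousness} literally, you still have to package the exponentially many weight-$\ell$/$\ell'$ strings into a set $L$ of languages in $\ACp$ and chase the $\forall$-circuit-family quantifier; this works because the adversary/polynomial-method hard instance can always be realized by the canonical small DNF/CNF for its truth table, but the sentence "feeding the entire family in as the hard languages" elides that step. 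Your part (ii) is also an accurate diagnosis of why the general statement is conjectural: once $\ell,\ell'$ are both superpolynomially far from the ends, the counting argument ($\binom{2^n}{\ell}\gg 2^{\poly(n)}$) shows a typical hard instance has no poly-size formula, and — as the paper notes around \Cref{conj:CompressionObliviousnessAC} and in the cited discussion — there is no known way to certify compression-obliviousness of such high-query-complexity properties without extra assumptions. In short: your proposal matches the paper's intent, correctly treats the statement as a conjecture, and adds a clean (if slightly informally stated) generalization of the paper's footnoted example.
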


And we can show the following.

\begin{lemma}
\label{thm:AppQSETHwithEll}\label{thm:hardnessForSpecificEll}
Let $\ell, \ell' \in [2^n] \cup \{0\}$ be such that $\ell\neq \ell'$. If both \acQSETH{}(\cref{conj:ACqseth}) and  \Cref{conj:CompressionObOfEll} hold, then at least one of the following is true:
\begin{itemize}
    \item For each constant $\delta>0$, there exists $c>0$ such that there is no bounded-error quantum algorithm that computes $f_{\ell, \ell'}$ on the truth table of \CNF{} formulas defined on $n$ variables in $\mathcal{O}\left( \left( \sqrt{\frac{2^n}{\Delta_{\ell}}}+\frac{\sqrt{p(2^n-p)}}{\Delta_{\ell}}\right)^{1-\delta} \right)$ time  (even restricted to formulas with $m\leq c n^2$ clauses);
    \item For each constant $\delta>0$, there exists $c>0$ such that there is no bounded-error quantum algorithm that computes $f_{N-\ell, N-\ell'}$ on the truth table of \CNF{} formulas defined on $n$ variables in $\mathcal{O}\left( \left( \sqrt{\frac{2^n}{\Delta_{\ell}}}+\frac{\sqrt{p(2^n-p)}}{\Delta_{\ell}}\right)^{1-\delta} \right)$ time (even restricted to formulas with $m\leq c n^2$ clauses);
\end{itemize} 
here $\Delta_{\ell}=|\ell-\ell'|$ and $p \in \{\ell,\ell'\}$ such that $|2^{n-1}-p|$ is maximized.
In particular, when $\ell+\ell'=2^n$, the above immediately implies the following: 
\begin{itemize}
    \item For each constant $\delta>0$, there exists $c>0$ such that there is no bounded-error quantum algorithm that computes $f_{\ell,\ell'}$ on the truth table of \CNF{} formulas defined on $n$ variables in $\mathcal{O} \left( \left( \sqrt{\frac{2^n}{\Delta_{\ell}}}+\frac{\sqrt{\ell \ell'}}{\Delta_{\ell}}\right)^{1-\delta} \right)$ time (even restricted to formulas with $m\leq c n^2$ clauses).
\end{itemize}
\end{lemma}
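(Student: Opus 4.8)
The plan is to prove the first dichotomy by a contradiction argument that mirrors the structure of the proofs of \Cref{thm:countingQSETH} and \Cref{thm:AddErrorQSETH}, but now handling the asymmetry between \CNF{} and \DNF{} representations that arises because $f_{\ell,\ell'}$ is not symmetric under complementation. Suppose, for contradiction, that \emph{both} of the stated hardness claims fail. Then there exists a constant $\delta > 0$ and a bounded-error quantum algorithm $\mathcal{A}_{\mathrm{CNF}}$ computing $f_{\ell,\ell'}$ on truth tables of poly-sized \CNF{} formulas in time $\mathcal{O}\big((\sqrt{2^n/\Delta_\ell} + \sqrt{p(2^n-p)}/\Delta_\ell)^{1-\delta}\big)$, \emph{and} a bounded-error quantum algorithm $\mathcal{A}'_{\mathrm{CNF}}$ computing $f_{N-\ell,N-\ell'}$ on truth tables of poly-sized \CNF{} formulas within the same time bound (note $\Delta_\ell$ and $p$ are unchanged when we pass from $(\ell,\ell')$ to $(N-\ell,N-\ell')$, since $|\ell-\ell'| = |(N-\ell)-(N-\ell')|$ and the ``far from $N/2$'' point $p$ maps to $N-p$, which is equally far). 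The goal is to use these two algorithms together to contradict the combination of \acQSETH{} and \Cref{conj:CompressionObOfEll}.

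The key step is to observe that for \emph{every} circuit $C \in \textsc{AC}^0_2$ on $n$ variables — whether it is presented as a poly-sized \CNF{} or as a poly-sized \DNF{} — we can compute $f_{\ell,\ell'}$ on $tt(C)$ within the claimed time budget. If $C$ is a \CNF{} formula, just run $\mathcal{A}_{\mathrm{CNF}}$ on it directly. If $C$ is a \DNF{} formula, form $\neg C$ in $\poly(n)$ time (by De Morgan's law $\neg C$ is a poly-sized \CNF{} formula), and note that $|tt(\neg C)| = 2^n - |tt(C)|$; hence $|tt(C)| = \ell$ iff $|tt(\neg C)| = N-\ell$ and $|tt(C)| = \ell'$ iff $|tt(\neg C)| = N-\ell'$, so running $\mathcal{A}'_{\mathrm{CNF}}$ (which computes $f_{N-\ell,N-\ell'}$) on $\neg C$ returns exactly the value $f_{\ell,\ell'}(tt(C))$ whenever $tt(C)$ lies in the domain of $f_{\ell,\ell'}$. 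Forming $\neg C$ costs only $\poly(n)$ additional time, which is negligible compared to the (exponential in $n$) main term. This gives a single white-box algorithm for $f_{\ell,\ell'}$ that runs in time $\mathcal{O}\big((\sqrt{2^n/\Delta_\ell} + \sqrt{p(2^n-p)}/\Delta_\ell)^{1-\delta}\big)$ on all of $\textsc{AC}^0_2$; since $\Q(f_{\ell,\ell'})^{1-\delta'} = \Omega\big((\sqrt{2^n/\Delta_\ell}+\sqrt{p(2^n-p)}/\Delta_\ell)^{1-\delta'}\big)$ by \Cref{cor:QueryComplexityEllFunction} (using $N = 2^n$), taking $\delta' = \delta/2$ say, this algorithm beats the $\Q(f_{\ell,\ell'})^{1-\delta'}$ bound on \emph{every} input in \emph{every} hard language family, contradicting \acQSETH{} applied to the (assumed compression-oblivious) property $f_{\ell,\ell'}$. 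The final bullet, the case $\ell + \ell' = 2^n$, is then immediate: then $N - \ell = \ell'$ and $N - \ell' = \ell$, so $f_{N-\ell,N-\ell'} = f_{\ell',\ell}$, which is just $f_{\ell,\ell'}$ with the two output values swapped (equivalently, $1 - f_{\ell,\ell'}$ on the same domain); an algorithm for one trivially gives an algorithm for the other, so the two bullets of the dichotomy collapse into the single statement that $f_{\ell,\ell'}$ is hard on \CNF{}s, and $p(2^n-p) = \ell\ell'$ because $\{p, 2^n-p\} = \{\ell,\ell'\}$.

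I expect the only genuinely delicate point to be bookkeeping around the \emph{domain} of the partial function $f_{\ell,\ell'}$ and what it means for \acQSETH{} to be applied to a partial (relational) property: we need the \acQSETH{} statement to be invoked only on formula families whose truth tables all have Hamming weight exactly $\ell$ or exactly $\ell'$, so that the reduction is well-defined, and one must check that the class of such hard families is exactly what \Cref{conj:CompressionObOfEll} asserts to be compression-oblivious. This is a matter of carefully threading the quantifiers in \Cref{def:CompressionOblivious} and \Cref{conj:ACqseth} through the two-sided reduction rather than a substantive mathematical obstacle — everything else (De Morgan, the weight complementation identity, absorbing $\poly(n)$ overhead, the symmetry of $\Delta_\ell$ and $p$ under $\ell \mapsto N-\ell$) is routine.
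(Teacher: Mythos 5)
Your proposal is correct and follows essentially the same approach as the paper: assume both dichotomy arms fail, use De Morgan's law together with the weight-complementation identity $|tt(\neg C)| = 2^n - |tt(C)|$ to turn a CNF algorithm for $f_{N-\ell,N-\ell'}$ into a DNF algorithm for $f_{\ell,\ell'}$, and thereby assemble a single fast white-box algorithm for $f_{\ell,\ell'}$ over all of $\textsc{AC}^0_2$, contradicting \acQSETH{} together with \Cref{conj:CompressionObOfEll}. The paper states the same argument slightly more indirectly through four intermediate propositions (A--D) and three chained claims about which of CNF/DNF must be hard, but the underlying reduction and contradiction are identical, as is your handling of the ``in particular'' case.
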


\begin{proof}Let $N$ be an integer that we will fix later and let $f'_{\ell,\ell'}:\{0,1\}^{N} \rightarrow \{0,1\}$ be defined as follows
\begin{equation*}
f'_{\ell,\ell'}=
\begin{cases}
1, \text{ if } |x|=N-\ell,\\
0, \text{ if } |x|=N-\ell'.  
\end{cases}
\end{equation*}
It is not hard to see $f'_{\ell,\ell'}$ is the same as function $f_{N-\ell,N-\ell'}$. Fortunately, both the functions $f_{N-\ell,N-\ell'}$ and $f_{\ell,\ell'}$ have the same value of $\Delta_{\ell}$ and $h$ where $h=p(N-p)$. Therefore the bounded error quantum query complexity of $f'_{\ell,\ell'}$ is $\Omega \left( \sqrt{\frac{N}{\Delta_{\ell}}}+\frac{\sqrt{p(N-p)}}{\Delta_{\ell}}\right)$ where $\Delta_{\ell}=|\ell-\ell'|$ and $p \in \{\ell,\ell'\}$ such that $|\frac{N}{2}-p|$ is maximised; same as the bounded error quantum query complexity of $f_{\ell,\ell'}$ as mentioned in \cref{cor:QueryComplexityEllFunction}. 

Moreover, as $f'_{\ell,\ell'}$ is the same function $f_{N-\ell,N-\ell'}$ it is therefore clear from \cref{conj:CompressionObOfEll} that $f'_{\ell,\ell'} \in \mathcal{CO}(\textsc{AC}^{0}_{2})$ which means there is no bounded error quantum algorithm that can compute $f'_{\ell,\ell'}$ or $f_{\ell,\ell'}$ on truth tables of $\poly(n)$ size \CNF{} or \DNF{} formulas defined on $n$ input variables in $\mathcal{O} \left( \left( \sqrt{\frac{2^n}{\Delta_{\ell}}}+\frac{\sqrt{p(2^n-p)}}{\Delta_{\ell}}\right)^{1-\delta} \right)$ time for any constant $\delta>0$ unless \acQSETH{} is false. We will now show that conditional on \acQSETH{} this result holds even when we restrict ourselves to only $\poly(n)$ sized \CNF{} formulas.

Having introduced $f'_{\ell, \ell'}$ we will now prove \cref{thm:AppQSETHwithEll} using the following propositions.
\begin{itemize}
    \item \textbf{Proposition A} There is no bounded error quantum algorithm that can compute $f_{\ell,\ell'}$ on truth table of \CNF{} formulas defined on $n$ variables in $\mathcal{O} \left( \left( \sqrt{\frac{2^n}{\Delta_{\ell}}}+\frac{\sqrt{p(2^n-p)}}{\Delta_{\ell}}\right)^{1-\delta} \right)$ time for any $\delta>0$.
    \item \textbf{Proposition B} There is no bounded error quantum algorithm that can compute $f_{\ell,\ell'}$ on truth table of \DNF{} formulas defined on $n$ variables in $\mathcal{O} \left( \left( \sqrt{\frac{2^n}{\Delta_{\ell}}}+\frac{\sqrt{p(2^n-p)}}{\Delta_{\ell}}\right)^{1-\delta} \right)$ time for any $\delta>0$.
    \item \textbf{Proposition C} There is no bounded error quantum algorithm that can compute $f'_{\ell,\ell'}$ on truth table of \CNF{} formulas defined on $n$ variables in $\mathcal{O} \left( \left( \sqrt{\frac{2^n}{\Delta_{\ell}}}+\frac{\sqrt{p(2^n-p)}}{\Delta_{\ell}}\right)^{1-\delta} \right)$ time for any $\delta>0$.
    \item \textbf{Proposition D} There is no bounded error quantum algorithm that can compute $f'_{\ell,\ell'}$ on truth table of \DNF{} formulas defined on $n$ variables in $\mathcal{O} \left( \left( \sqrt{\frac{2^n}{\Delta_{\ell}}}+\frac{\sqrt{p(2^n-p)}}{\Delta_{\ell}}\right)^{1-\delta} \right)$ time for any $\delta>0$.
\end{itemize}
Conditional on \cref{conj:CompressionObOfEll} and \acQSETH{} the following statements hold.
\begin{itemize}
    \item \textbf{Claim 1} At least one of the propositions A or B is true.
    \item \textbf{Claim 2} At least one of the propositions C or D is true.
    \item \textbf{Claim 3} At least one of the propositions A or C is true; by way of contradiction let us assume that both propositions A and C are false, this means there exist algorithms $\mathcal{A}, \mathcal{A}'$ that for an $\delta>0$ and $\delta'>0$ compute $f_{\ell,\ell'}$ and $f'_{\ell,\ell'}$ on the truth table of $\poly(n)$ size \CNF{} formulas defined on $n$ input variables in $\mathcal{O} \left( \left( \sqrt{\frac{2^n}{\Delta_{\ell}}}+\frac{\sqrt{p(2^n-p)}}{\Delta_{\ell}}\right)^{1-\delta} \right)$ time and in $\mathcal{O} \left( \left( \sqrt{\frac{2^n}{\Delta_{\ell}}}+\frac{\sqrt{p(2^n-p)}}{\Delta_{\ell}}\right)^{1-\delta'} \right)$ time, respectively. Moreover, if propositions A and C are false then from Claims 1 and 2 we can deduce that both B and D must be true which means there is no quantum algorithm that can compute $f_{\ell,\ell'}$ or $f'_{\ell,\ell'}$ on the truth table of $\poly(n)$ size \DNF{} formulas on $n$ input variables in $\mathcal{O} \left( \left( \sqrt{\frac{2^n}{\Delta_{\ell}}}+\frac{\sqrt{p(2^n-p)}}{\Delta_{\ell}}\right)^{1-\delta} \right)$ time for any $\delta>0$. However, given a $\DNF{}$ formula $\phi$ as an input to compute $f_{\ell,\ell'}$ on its truth table one can instead compute $f'_{\ell,\ell'}$ on the negation of $\phi$, let us denote by $\neg \phi$, using algorithm $\mathcal{A}'$ on $\neg \phi$ in $\mathcal{O} \left( \left( \sqrt{\frac{2^n}{\Delta_{\ell}}}+\frac{\sqrt{p(2^n-p)}}{\Delta_{\ell}}\right)^{1-\delta'} \right)$ time which is a contradiction. This means at least one of the two propositions A or C must be true which is exactly the statement of \cref{thm:AppQSETHwithEll}.\qedhere
\end{itemize}
\end{proof}



Inspired by the arguments used in the proof of Theorem~1.13 in \cite{NW99}, we will now show that \cref{thm:hardnessForSpecificEll} implies the following result. Our result holds for $\gamma \in \left[\frac{1}{2^{n}},0.4999\right)$; this range of $\gamma$ suffices for our reductions presented in the later sections.

\begin{corollary}[$\gamma$-\#QSETH]
\label{thm:AppCountQSETH}
Let $\gamma \in \left[\frac{1}{2^{n}},0.4999\right)$.  For each constant $\delta>0$, there exists $c>0$ such that there is no bounded-error quantum algorithm that solves \appCountingCNFSAT{} (even restricted to formulas with $m\leq c n^2$ clauses) in time
\begin{enumerate}
    \item $\mathcal{O}\left( \left( \frac{1}{\gamma}\sqrt{\frac{2^{n}-\hat{h}}{\hat{h}}}\right)^{1-\delta}\right)$ if $\gamma \hat{h} > 1$, where $\hat{h}$ is the number of satisfying assignments;
    \item $\mathcal{O}(2^{n(1-\delta)})$ otherwise,
\end{enumerate}  unless \acQSETH{}(\cref{conj:ACqseth}) is false  or  $\ell\neq \ell'$, $f_{\ell,\ell'} \notin \mathcal{CO}(\textsc{AC}^{0}_{2})$ (i.e., \Cref{conj:CompressionObOfEll} is false).
\end{corollary}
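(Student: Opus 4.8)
The plan is to reduce computing an appropriate promise function $f_{\ell,\ell'}$ (or $f_{N-\ell,N-\ell'}$) on truth tables of \CNF{} formulas to solving \appCountingCNFSAT{}, thereby transferring the time lower bound of \cref{thm:hardnessForSpecificEll} to \appCountingCNFSAT{}. The overall structure mirrors the proof of Theorem~1.13 in \cite{NW99}: a $\gamma$-multiplicative approximation of $|\phi|$ splits the positive integers into geometrically growing "buckets" $[(1-\gamma)^{-j},(1-\gamma)^{-j-1})$-ish, and distinguishing two Hamming weights $\ell,\ell'$ that lie in sufficiently separated buckets can be done by one call to the approximate counter. So first I would handle Case~2 ($\gamma\hat h>1$): fix the target count $\hat h$, and choose $\ell=\hat h$ together with $\ell'$ roughly $(1+2\gamma)\hat h$ (or symmetrically below), so that the intervals $((1-\gamma)\ell,(1+\gamma)\ell)$ and $((1-\gamma)\ell',(1+\gamma)\ell')$ are disjoint; then an algorithm for \appCountingCNFSAT{} outputs a value $d$ whose location relative to the gap reveals whether $|\phi|=\ell$ or $|\phi|=\ell'$, i.e.\ it computes $f_{\ell,\ell'}$.

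The second step is to check the query-complexity bookkeeping: with $\Delta_\ell=|\ell-\ell'|=\Theta(\gamma\hat h)$ and $p\in\{\ell,\ell'\}$ the one farther from $2^{n-1}$, the lower bound from \cref{cor:QueryComplexityEllFunction} is $\Omega\!\left(\sqrt{2^n/\Delta_\ell}+\sqrt{p(2^n-p)}/\Delta_\ell\right)$; for $\gamma<1/2$ and $\hat h\leq 2^{n-1}$ this simplifies, using $p=\Theta(\hat h)$ and $2^n-p=\Theta(2^n-\hat h)$, to $\Omega\!\left(\frac{1}{\gamma}\sqrt{\frac{2^n-\hat h}{\hat h}}\right)$, matching the claimed bound. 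One must be slightly careful that the choice of $\ell'$ keeps $\ell'\leq 2^n$ and that the farther-from-center point behaves as stated; this is where the restriction $\gamma<0.4999$ and $\hat h$ being an integer with $\gamma\hat h\geq1$ get used. Because \cref{thm:hardnessForSpecificEll} only guarantees hardness of $f_{\ell,\ell'}$ \emph{or} of $f_{N-\ell,N-\ell'}$ on \CNF{} truth tables, I would note that the reduction from the approximate counter works verbatim for both functions (distinguishing weights $\ell,\ell'$ versus distinguishing weights $N-\ell,N-\ell'$ is the same task after relabeling), so whichever of the two is hard, \appCountingCNFSAT{} inherits the lower bound.

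Case~1 ($\gamma\hat h\leq1$) is easier: then the approximate counter must return the exact value $|\phi|$ (the open interval $((1-\gamma)|\phi|,(1+\gamma)|\phi|)$ contains only the integer $|\phi|$ when $\gamma|\phi|<1$), so an $\mathcal{O}(2^{n(1-\delta)})$-time algorithm for \appCountingCNFSAT{} would solve \CountingCNFSAT{} in the same time, contradicting \cref{thm:countingQSETH} (equivalently, this is the $\ell=1,\ell'=0$ instance of \cref{thm:hardnessForSpecificEll}, which is provably compression-oblivious). Finally I would assemble the contrapositive: an algorithm beating the stated time bound for \appCountingCNFSAT{}, restricted to $m\leq cn^2$ clauses, yields (via the reduction, which introduces no new variables and only $\poly(n)$ preprocessing) an algorithm computing $f_{\ell,\ell'}$ or $f_{N-\ell,N-\ell'}$ on such \CNF{} formulas in time $o$ of its query lower bound, which by \cref{thm:hardnessForSpecificEll} forces \acQSETH{} to be false or \cref{conj:CompressionObOfEll} to fail. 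The main obstacle I expect is purely the arithmetic of choosing $\ell,\ell'$ (and tracking $p$, $\Delta_\ell$) so that the bucket separation is guaranteed for \emph{all} admissible $\gamma$ in $[2^{-n},0.4999)$ and $\hat h$, while simultaneously making the resulting query bound collapse to exactly $\frac{1}{\gamma}\sqrt{(2^n-\hat h)/\hat h}$ up to constants — getting the constants and the boundary cases ($\hat h$ near $1$, near $2^{n-1}$, near $2^n$) right is the fiddly part, though conceptually nothing beyond \cite{NW99} is needed.
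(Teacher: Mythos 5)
Your proposal takes a genuinely different route from the paper in the choice of $(\ell,\ell')$, and that difference creates a real gap. The paper chooses the \emph{symmetric} pair $\ell=\lceil 2^n/(2(1-\gamma))\rceil$, $\ell'=2^n-\ell$, precisely so that $\ell+\ell'=2^n$; that is the hypothesis of the ``in particular'' clause of \cref{thm:hardnessForSpecificEll}. When $\ell+\ell'=N:=2^n$ one has $\{N-\ell,N-\ell'\}=\{\ell',\ell\}$, so $f_{N-\ell,N-\ell'}$ is just $\neg f_{\ell,\ell'}$ restricted to the same promise set, and the disjunction ``Proposition A or Proposition C'' collapses to an unconditional hardness statement for $f_{\ell,\ell'}$ on \CNF{} truth tables. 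That collapse is the whole point of the symmetric choice.

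Your choice $\ell=\hat h$, $\ell'\approx(1+2\gamma)\hat h$ does not satisfy $\ell+\ell'=N$ in general, so \cref{thm:hardnessForSpecificEll} only gives you that at least one of $f_{\ell,\ell'}$ and $f_{N-\ell,N-\ell'}$ is hard on \CNF{}s. Your workaround — ``distinguishing weights $\ell,\ell'$ versus distinguishing weights $N-\ell,N-\ell'$ is the same task after relabeling'' — is where the argument breaks. The $\gamma$\emph{-multiplicative} gap is not symmetric under $k\mapsto N-k$: the $\gamma$-approximate counter can separate two Hamming weights $a<b$ only when $b/a\ge(1+\gamma)/(1-\gamma)$, i.e.\ when the additive gap is at least about $2\gamma\cdot a$. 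With $a=N-\ell'=N-(1+2\gamma)\hat h$ and $b=N-\ell=N-\hat h$, the additive gap is $|\ell-\ell'|\approx 2\gamma\hat h$ while the scale is $N-\hat h$, so separation requires $\hat h\gtrsim N/2$. For $\hat h\ll N/2$ the counter simply cannot tell apart weight $N-\ell$ from weight $N-\ell'$, and the reduction to \appCountingCNFSAT{} does not compute $f_{N-\ell,N-\ell'}$. Since you have no control over which horn of the dichotomy holds, the argument is incomplete. To repair it you would either have to adopt the paper's symmetric choice (giving hard instances with $\hat h\approx N/2$ only), or prove a version of \cref{thm:hardnessForSpecificEll} that unconditionally hardens $f_{\ell,\ell'}$ on \CNF{}s for asymmetric pairs — which would require a new argument, not just relabeling. (A further, minor point: $\ell'=(1+2\gamma)\hat h$ does not make the intervals $((1-\gamma)\ell,(1+\gamma)\ell)$ and $((1-\gamma)\ell',(1+\gamma)\ell')$ disjoint — you need $\ell'/\ell\ge (1+\gamma)/(1-\gamma)$, so the factor must be $1+2\gamma/(1-\gamma)$ rather than $1+2\gamma$; but that is cosmetic.) Your Case~1 ($\gamma\hat h\le 1$ forcing exact counting, reduce to \cref{thm:countingQSETH}) matches the paper and is fine.
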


We show the first part of \cref{thm:AppCountQSETH} in the following way and use the result from \cref{thm:countingQSETH} for the second part. Given a value of $\gamma \in \left[\frac{1}{2^{n}},0.4999\right)$  we will fix values of $\ell \in [2^n]\cup \{0\}$ and $\ell' \in [2^n]\cup \{0\}$ such that we are able to compute $f_{\ell,\ell'}$ on the truth table of an input \CNF{} formulas on $n$ variables 
using the algorithm that solves \appCountingCNFSAT{}. Hence, we can show a lower bound on \appCountingCNFSAT{} using the lower bound result from \cref{thm:hardnessForSpecificEll}. 

\begin{proof}
Let $N=2^n$. Let $\ell=\frac{N}{2}+\ceil{\gamma t}=\left\lceil{\frac{N}{2}+\gamma t}\right\rceil$ and $\ell'=\frac{N}{2}-\ceil{\gamma t}=\left\lfloor{\frac{N}{2}-\gamma t}\right\rfloor$; here $t \in [N]$ is a value that we will fix later but in any case, we have $1 \leq \ceil{\gamma t} < \frac{N}{2}$. With that, we are ensured that  $\gamma \ell > \frac{1}{2}$. We also make sure to choose values $\ell, \ell'$ in such a way that $\gamma \ell'=\Omega(1)$. Clearly, $\ell+\ell'=N$ and $\Delta_{\ell}=|\ell-\ell'|=2\ceil{\gamma t}$. Therefore by invoking the result from \cref{thm:hardnessForSpecificEll} we can say that for these values of $\ell, \ell'$ there is no bounded-error quantum algorithm that can solve $f_{\ell,\ell'}$ on the truth table of \CNF{} formulas in $\mathcal{O} \left( \left( \sqrt{\frac{N}{\ceil{\gamma t}}}+\frac{\sqrt{\ell (N-\ell)}}{\ceil{\gamma t}}\right)^{1-\delta}  \right)$ time, for each $\delta >0$; let us denote this claim by (*).

Let $\mathcal{A}$ be an algorithm that computes \appCountingCNFSAT{}, 
i.e., Algorithm $\mathcal{A}$ returns a value $h$ such that $(1-\gamma)\hat{h} < h < (1+\gamma)\hat{h}$. 
Given $\ell=\frac{N}{2}+\ceil{\gamma t}$ and $\ell'=\frac{N}{2}-\ceil{\gamma t}$, there are values of $t \in [N]$ such that we will be able to distinguish whether the number of satisfying assignments to a formula is $\ell$ or $\ell'$ using Algorithm $\mathcal{A}$. As $\ell >\ell'$ in our setup, we want $t$ such that $\ell'(1+\gamma) < \ell(1-\gamma)$; it is then necessary that $\gamma N < 2\ceil{\gamma t}$; let us denote this as Condition~1.

Now we set the values of $\ell$ and $\ell'$. Given a value of $\gamma \in \left[\frac{1}{N},\frac{1}{2}\right)$, we set $\ell=\left\lceil{\frac{N}{2(1-\gamma)}}\right\rceil$ and $\ell'=N-\ell$. This implies $\frac{N}{2(1-\gamma)} \leq \ell < \frac{N}{2(1-\gamma)}+1$,  $\frac{N(1-2\gamma)}{2(1-\gamma)}-1 < \ell' \leq \frac{N(1-2\gamma)}{2(1-\gamma)}$, and $\frac{\gamma N}{(1-\gamma)}\leq|\ell-\ell'|< \frac{\gamma N}{(1-\gamma)}+2$. Therefore we obtain $2\gamma \ell-2\gamma \leq|\ell-\ell'|  < 2\gamma \ell+2$.\footnote{To view the calculations in a less cumbersome way one can use the fact that asymptotically $\ell=\frac{N}{2(1-\gamma)}$, $\ell'=\frac{N(1-2\gamma)}{2(1-\gamma)}$ and $|\ell-\ell'| = \frac{\gamma N}{(1-\gamma)} = 2\gamma \ell$.} We know from claim (*) that every quantum algorithm that (for these values of $\ell, \ell'$) computes $f_{\ell,\ell'}$ on \CNF{} formulas requires $\Omega(L^{1-\delta})$ time for each $\delta>0$, where $L=\frac{1}{\gamma}\sqrt{\frac{N-\ell}{\ell+1}} = \Omega \left(\frac{1}{\gamma}\sqrt{\frac{N-\ell}{\ell}}\right)$. Moreover, $\ell'$ 
is $(\ell-1)(1-2\gamma)-1 < \ell' \leq \ell (1-2\gamma)$. Therefore, we can see that $L=\Omega \left( \frac{1}{\gamma}\sqrt{\frac{N-\ell}{\ell}} \right)=\Omega \left(\frac{1-2\gamma}{\gamma}\sqrt{\frac{N-\ell'}{\ell'}}\right)=\Omega \left(\frac{1}{\gamma}\sqrt{\frac{N-\ell'}{\ell'}}\right)$. 

It is also easy to see that if $\ell=\ceil{\frac{N}{2(1-\gamma)}}$ were to be expressed as $\frac{N}{2}+\ceil{\gamma t}$ (i.e. denote $\ell$ to be $\frac{N}{2}+\ceil{\gamma t}$), then for that value of $t$ we have $\ceil{\gamma t} = \ceil{\frac{N}{2(1-\gamma)}}- \frac{N}{2}\geq \frac{N\gamma}{2(1-\gamma)} > \frac{N \gamma}{2}$, which satisfies Condition~1. Hence here we can use Algorithm $\mathcal{A}$ to distinguish whether the number of satisfying assignments to a formula is $\ell$ or $\ell'$. Hence given a \CNF{} formula as input, we will be able to use Algorithm $\mathcal{A}$ to distinguish whether the number of satisfying assignments is $\ell$ or $\ell'$. Let $T= \frac{1}{\gamma}\sqrt{\frac{N-\ell}{\ell}}+ \frac{1}{\gamma}\sqrt{\frac{N-\ell'}{\ell'}} =\mathcal{O}(\frac{1}{\gamma}\sqrt{\frac{N-\ell'}{\ell'}})$. 
If  for some constant $\delta>0$, $\mathcal{A}$ can solve \appCountingCNFSAT{} on an input \CNF{} formula that has $\hat{h}$ number of satisfying assignments in $\mathcal{O}((\frac{1}{\gamma}\sqrt{\frac{N-\hat{h}}{\hat{h}}})^{1-\delta})$ time, then we are essentially computing $f_{\ell,\ell'}$ in $\mathcal{O}(T^{1-\delta})$ time, which is a contradiction to claim (*). Hence the first part of the statement of \cref{thm:AppCountQSETH} proved.

Proof of the second part of this theorem follows from \cref{thm:countingQSETH} as the regime $\gamma \hat{h} \leq 1$ translates to exactly counting the number of satisfying assignments.
\end{proof}

\subsection{Quantum complexity of $\#\kSAT{}$ and other related problems}
\label{sec:CountingKSAT}
In the previous subsection, we discussed the quantum complexity of variants of CNFSAT problems. However, it is not clear how to immediately derive a similar quantum complexity result for variants of $k$-SAT problems with constant $k$ 
by using the quantum (conditional) hardness results for variants of CNFSAT problems. Of course we could make a further conjecture about variants of $k$-SAT problems like we did in the previous subsection, but it would introduce too many conjectures. Moreover, some variants of $k$-SAT (for constant $k$) are even shown to be solvable in polynomial time~\cite{AW21}. 

To give the (quantum) complexity of some optimization problems (for example, lattice problems~\cite{BGS17}), on the other hand, we might want to have some (quantum) conditional lower bounds for (variants of) $k$-SAT problems with not too large $k$. This is because we might make $2^k\cdot \poly(n)$ calls to a solver of those problems to solve $k$-SAT. This is undesirable for giving the (quantum) complexity of those optimization problems when $k$ approaches $n$, while it is tolerable for a relatively small $k$ (like $k=\poly\log n$). 
Hence in this subsection, we would like to say something interesting about quantum hardness for \#$k$-SAT and $\oplus k$-SAT when $k=\Theta(\log n)$, only using the hardness assumptions on counting-CNFSAT (that is, \#QSETH). 
Here, variants of $k$-SAT are defined exactly the same way as \cref{def:PopularVariantsCNF}, \cref{def:AddErrorCountCNFSAT}, and \cref{def:ApproximateCountCNFSAT}, except that the input is now a $k$-CNF formula.


We use the classical algorithm by Schuler \cite{Schuler-AlgoCNF-2005}.\footnote{This algorithm can also be used to solve CNFSAT on $n$ variables, $m$ clauses in $\mathcal{O}(\poly(n)2^{n(1-1/(1+\log m))})$ expected time.} This algorithm can be viewed as a Turing reduction from SAT with bounded clause density to SAT with bounded clause width, which was analyzed in \cite{Calabro-DualityWidthClause-2006}. 
The time complexity of this algorithm is upper bounded by $\binom{m+n/k}{n/k}\cdot \poly(m,n)$, where $m$ is number of clauses.

\begin{algorithm}[h]
\caption{$\reduceWidth{(\psi)}$}
\begin{algorithmic}[1]
\Require CNF formula $\psi$
\If {$\psi$ has no clause of width $>k$}
\State output $\psi$
\Else
\State let $C'=\{l_1, \ldots, l_{k'}\}$ be a clause of $\psi$ of width $k'>k$
\State $C=\{l_1, \ldots, l_k\}$
\State $\psi_0 \leftarrow \psi-\{C'\} \cup \{C\}$
\State $\psi_1 \leftarrow \psi \land \lnot l_1 \land \lnot l_2 \land \cdots \land \lnot l_k$
\State $\psi_1 \leftarrow$ Remove variables corresponding to $l_1, \ldots, l_k$ from $\psi_1$ by setting $l_1=0, \ldots, l_k=0$
\State $\reduceWidth{(\psi_0)}$
\State $\reduceWidth{(\psi_1)}$
\EndIf
\end{algorithmic}
\label{algo:ReduceWidth}
\end{algorithm}

\cref{algo:ReduceWidth} takes as input a CNF formula of width greater than $k$, and then outputs a list of $k$-CNF formulas $\psi_i$ where the solutions of the input formula is the union
of the solutions of the output formulas, i.e., $\sol(\psi)=\cup_i \sol(\psi_i)$, where $\sol(\phi)$ denotes the set of satisfying assignments to a formula $\phi$. In fact, it is not hard to see that the count of the number of satisfying assignments also is preserved, i.e., $|\sol(\psi)|=\sum_i |\sol(\psi_i)|$. 

\begin{lemma}[Implicit from Section~3.2 in \cite{Calabro-DualityWidthClause-2006}]\label{lemma:witdhreduce}  \cref{algo:ReduceWidth} takes as input a CNF formula $\psi$ on $n$ input variables, with $m$ clauses, that is of width strictly greater than $k$ and outputs a number of $k$-CNF formulas $\psi_i$ each defined on at most $n$ input variables and at most $m$ clauses such that $|\sol(\psi)|=\sum_i |\sol(\psi_i)|$. 
\end{lemma}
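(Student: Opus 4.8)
The statement is essentially a correctness claim about `ReduceWidth` (\cref{algo:ReduceWidth}): on a CNF $\psi$ with $n$ variables, $m$ clauses, clause width $>k$, the algorithm terminates and outputs $k$-CNF formulas $\psi_i$, each on $\le n$ variables and $\le m$ clauses, with $|\sol(\psi)| = \sum_i |\sol(\psi_i)|$. The natural route is induction on a suitable termination measure together with a direct analysis of one recursive step. First I would argue termination: pick as potential the total number of literal-occurrences in clauses of width $>k$ (or, more simply, the quantity $\sum_{C : |C|>k} (|C|-k)$). In the recursive step we take a wide clause $C' = \{l_1,\dots,l_{k'}\}$ with $k' > k$, shrink it to $C = \{l_1,\dots,l_k\}$ to form $\psi_0$, and in $\psi_1$ we add the unit constraints $\lnot l_1,\dots,\lnot l_k$ and then substitute $l_1 = \cdots = l_k = 0$, eliminating those variables. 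Both $\psi_0$ and $\psi_1$ have strictly smaller potential than $\psi$ (in $\psi_0$ the clause $C'$ is replaced by the narrower $C$; in $\psi_1$ the variables of $l_1,\dots,l_k$ are gone, so every clause mentioning them either shrinks or — if it became trivially true — is removed), so the recursion bottoms out, and at the leaves the `if` branch guarantees the output is a genuine $k$-CNF.

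Next I would verify the structural bounds. No step ever introduces a new variable, and substitution only removes variables, so every $\psi_i$ is on at most $n$ variables. For the clause count: forming $\psi_0$ replaces one clause by one clause, so $|\psi_0| = |\psi|$; forming $\psi_1$ conjoins $k$ unit clauses but then the substitution $l_1=\cdots=l_k=0$ satisfies each of those units as well as the original wide clause $C'$ (which contained $l_1$), deleting at least $k+1 \ge$ the $k$ added units — so $|\psi_1| \le |\psi|$ as well. (One should also note that substitution can only delete or shorten other clauses, never create new ones.) Hence the invariant "$\le n$ variables, $\le m$ clauses" is preserved down the recursion tree, giving the claimed bounds at the leaves.

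The heart is the count-preservation identity, which I would prove for a single recursive step and then lift by induction. The key partition is on the values of the literals $l_1,\dots,l_k$ in an assignment $x \in \{0,1\}^n$. Writing $\sol(\psi)$ over all $n$ variables: an assignment satisfying $\psi$ either sets at least one of $l_1,\dots,l_k$ to true, in which case it satisfies the narrowed clause $C = \{l_1,\dots,l_k\}$ and agrees with $\psi$ on all other clauses, hence satisfies $\psi_0$ and not $\psi_1$ (it violates some $\lnot l_j$); or it sets all of $l_1,\dots,l_k$ to false, in which case $C'$ is satisfied by some later literal $l_{j}$ ($j>k$), and such assignments correspond bijectively to satisfying assignments of the variable-reduced $\psi_1$ (extend by the forced values $l_1=\cdots=l_k=0$). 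The two cases are disjoint and exhaustive on $\sol(\psi)$, so $|\sol(\psi)| = |\sol(\psi_0)| + |\sol(\psi_1)|$; induction on the recursion then yields $|\sol(\psi)| = \sum_i |\sol(\psi_i)|$ over all leaves.

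\textbf{Main obstacle.} The only genuinely delicate point is bookkeeping across the two different "universes" of variables: $\psi_1$ lives on a strictly smaller variable set after the substitution, so the count identity is only meaningful once one fixes the convention that $|\sol(\cdot)|$ for a formula on $n' \le n$ variables is counted over $\{0,1\}^{n'}$ and that the forced literals reinsert exactly one extension. One must also be careful that $l_1,\dots,l_k$ are literals on \emph{distinct} variables (true after the standard preprocessing that removes duplicate/trivial literals, as noted in the preliminaries), so that "set $l_1=\cdots=l_k=0$" is consistent and eliminates exactly $k$ variables. Once these conventions are pinned down the disjoint-cases argument is routine; I expect no other difficulty, and the resulting statement is exactly \cref{lemma:witdhreduce}.
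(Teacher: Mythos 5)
Your proof takes essentially the same route as the paper's: decompose one recursive step into the disjoint union $\sol(\psi) = \sol(\psi_0) \sqcup \sol(\psi_1)$ by casing on whether the truncated clause $C = l_1 \lor \cdots \lor l_k$ is satisfied, then induct over the recursion tree. The paper proves the set identity only for $\psi_1 = \psi \land \lnot l_1 \land \cdots \land \lnot l_k$ viewed on all $n$ variables and leaves the correctness of the subsequent substitution (variable elimination) step implicit; your explicit bijection between assignments of $\psi$ with $l_1 = \cdots = l_k = 0$ and solutions of the variable-reduced $\psi_1$, and your termination argument via a potential, are both genuine, useful additions that the paper does not spell out.

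There is one small slip in your clause-count bookkeeping. Setting $l_1 = \cdots = l_k = 0$ does \emph{not} satisfy the wide clause $C' = l_1 \lor \cdots \lor l_{k'}$ (these are its first $k$ literals, so falsifying them only shortens $C'$ to $l_{k+1} \lor \cdots \lor l_{k'}$); thus the substitution is not guaranteed to delete $k+1$ clauses as you claim. What it does guarantee is the deletion of the $k$ added unit clauses $\lnot l_j$, which already gives $|\psi_1| \le (m+k) - k = m$, so the claimed bound "$\le m$ clauses" still holds and the lemma is unaffected.
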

\begin{proof}
Let $\psi=C'_1 \land C'_2 \land \cdots \land C'_m$ be the input CNF formula to \cref{algo:ReduceWidth}. The algorithm finds the first clause $C'_i$ that has width $k'>k$. Let $C'_i=(l_1 \lor l_2 \lor \dots \lor l_{k'})$ and $C_i=(l_1 \lor l_2 \lor \dots \lor l_{k})$. The algorithm then constructs two formulas $\psi_0=(\psi-\{C'_i\}) \cup \{C_i\}$ and $\psi_1=\psi \land \lnot l_1 \land \lnot l_2 \land \cdots \land \lnot l_k$. Then the algorithm recursively calls the subroutine on $\psi_0$ and $\psi_1$. We now claim the following. 

\begin{claim}
\label{claim:ReduceWidthPreservesSolutionCount}
$\sol(\psi_0) \cap \sol(\psi_1)=\emptyset$ and $\sol(\psi)=\sol(\psi_0) \cup \sol(\psi_1)$, i.e., $\sol(\psi)=\sol(\psi_0) \sqcup \sol(\psi_1)$.
\end{claim}
\begin{claimproof}
Let $x \in \sol(\psi_0)$. Then the clause $C_i(x)=(l_1(x) \lor \cdots \lor l_k(x))$ should evaluate to 1. Equivalently, $\lnot (l_1(x) \lor \cdots \lor l_k(x))=0$. Using De Morgan's laws we know $(\lnot l_1(x) \land \cdots \land \lnot l_k(x))=0$, which means that $\psi_1(x)=\psi(x)\land \lnot l_1(x) \land \cdots \land \lnot l_k(x) =0$. A similar argument can be used to show that if $x \in \sol(\psi_1)$, then $x \notin \sol(\psi_0)$. Therefore, $\sol(\psi_0) \cap \sol(\psi_1)=\emptyset$.

What remains to show is $\sol(\psi)=\sol(\psi_0) \cup \sol(\psi_1)$. 
\begin{itemize}
    \item If $x \in \sol(\psi_0)$ then $C_i(x)=1$, which implies $C'_i(x)=1$. Therefore $x \in \sol(\psi)$. If $x \in \sol(\psi_1)$ then $\psi_1(x)=1$, but $\psi_1(x)=\psi(x) \land \lnot l_1(x) \land \lnot l_2(x) \land \cdots \land \lnot l_k(x)$ which means $\psi(x)=1$ as well. Therefore, if $x \in \sol(\psi_0) \cup \sol(\psi_1)$ then $x \in \sol(\psi)$.
    \item If $x \in \sol(\psi)$ then $\psi(x)=1$, which means $C'_i(x)=1$. However, $C'_i(x)= (l_1(x) \lor \cdots \lor l_k(x)) \lor (l_{k+1}(x) \lor \cdots \lor  l_k'(x))$. This means either $(l_1(x) \lor \cdots \lor l_k(x))=C_i(x)=1$ or $(l_{k+1}(x) \lor \cdots \lor  l_k'(x))=1$ or both evaluate to 1. If $C_i(x)=1$ then $\psi_0(x)=1$, which means $x \in \sol(\psi_0)$. If $C_i(x)=0$ then $\psi_1(x)=\psi(x) \land (\lnot C_i(x))=1$, which means $x \in \sol(\psi_1)$.
\end{itemize}
Therefore, $\sol(\psi)=\sol(\psi_0) \sqcup \sol(\psi_1)$.
\end{claimproof}

Using \cref{claim:ReduceWidthPreservesSolutionCount} we conclude that $\sol(\psi)=\sqcup_i \sol(\psi_i)$, hence $|\sol(\psi)|=\sum_i |\sol(\psi_i)|$.
\end{proof}

Using \cref{lemma:witdhreduce} and Lemma 5 in \cite{Calabro-DualityWidthClause-2006} we will now show the hardness of $k$-SAT and its counting variants when $k=\Theta(\log n)$ without introducing new conjectures. 

\begin{corollary}\label{cor:countingksat}
For each constant $\delta>0$, there exists a constant $c$ such that there is no bounded-error quantum algorithm that solves
\begin{enumerate}
    \item $c\log n$-SAT in $\mathcal{O}(2^{(1-\delta)n/2})$ time unless \BasicQSETH{} (see \cref{cor:basicQSETH}) is false; 
    \item $\#$$c\log n$-SAT in $\mathcal{O}(2^{(1-\delta)n})$ time unless \#QSETH (see \cref{thm:countingQSETH}) is false;
    \item $\oplus$$c\log n$-SAT in $\mathcal{O}(2^{(1-\delta)n})$ time unless $\oplus$QSETH (see \cref{thm:ParityQSETH}) is false;
    \item $\oplus_q$$c\log n$-SAT in $\mathcal{O}(2^{(1-\delta)n})$ time unless $\#_q$QSETH (see \cref{conj:qParitySAT}) is false.
\end{enumerate}
\end{corollary}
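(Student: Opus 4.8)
The plan is to use \cref{algo:ReduceWidth} ($\reduceWidth$) as a width-reduction reduction from the relevant variant of \CNFSAT{} to the corresponding variant of $k$-SAT, and then to choose the constant $c$ in $k=c\log n$ large enough that the branching overhead of the reduction is negligible compared with the hypothesised speedup. The key inputs are two facts about $\reduceWidth$ run with width parameter $k$ on a CNF formula $\psi$ on $n$ variables and $m$ clauses: (i) by \cref{lemma:witdhreduce} (via \cref{claim:ReduceWidthPreservesSolutionCount}) it outputs $k$-CNF formulas $\psi_1,\dots,\psi_R$, each on at most $n$ variables and at most $m$ clauses, with $\sol(\psi)=\bigsqcup_i \sol(\psi_i)$ and hence $|\sol(\psi)|=\sum_i |\sol(\psi_i)|$; (ii) by Lemma~5 in \cite{Calabro-DualityWidthClause-2006} the number $R$ of recursion leaves is at most $\binom{m+n/k}{n/k}$, and the procedure runs in time $\binom{m+n/k}{n/k}\cdot\poly(m,n)$.

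Next I would assume toward a contradiction that, for the given $\delta>0$ and for every constant $c$, there is a bounded-error quantum algorithm $\mathcal{A}_c$ solving the $c\log n$-SAT variant in the claimed time; fix $c$ (to be determined). Invoke the relevant statement --- \cref{cor:basicQSETH} for vanilla, \cref{thm:countingQSETH} for count, \cref{thm:ParityQSETH} for parity, \cref{conj:qParitySAT} for $\#_q$ --- with parameter $\delta/2$; this fixes a constant $c_0$ and lets us restrict attention to \CNF{} instances on $n$ variables with $m\le c_0 n^2$ clauses. Given such a $\psi$, run $\reduceWidth$ with width $k=c\log n$ to get $R\le\binom{c_0 n^2+n/k}{n/k}$ formulas $\psi_i$, each a $c\log n$-CNF on at most $n$ variables; run $\mathcal{A}_c$ on each $\psi_i$ and aggregate: take the OR of the outputs for vanilla satisfiability, the sum $\sum_i|\sol(\psi_i)|$ for count, the XOR $\bigoplus_i(|\sol(\psi_i)|\bmod 2)$ for parity, and $\big(\sum_i(|\sol(\psi_i)|\bmod q)\big)\bmod q$ for $\#_q$ --- each correct by fact (i).

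Using $\binom{a}{b}\le(ea/b)^b$ one bounds $R\le\big(ek(c_0 n^2+n/k)/n\big)^{n/k}\le(2ekc_0 n)^{n/k}=2^{(n/k)(\log n+O(\log\log n))}=2^{(1+o(1))n/c}$, where the last step uses $k=c\log n$. Hence the total running time is $R\cdot\poly(n)$ (for $\reduceWidth$ and the aggregation) plus $R$ times the hypothesised runtime of $\mathcal{A}_c$, i.e. $2^{(1+o(1))n/c}\cdot\mathcal{O}(2^{(1-\delta)n/2})$ in the vanilla case and $2^{(1+o(1))n/c}\cdot\mathcal{O}(2^{(1-\delta)n})$ in the three counting cases. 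Choosing $c$ large enough that $(1+o(1))/c\le\delta/4$ (for instance $c=8/\delta$, for $n$ large) makes these $\mathcal{O}(2^{(1-\delta/2)n/2})$ and $\mathcal{O}(2^{(1-\delta/2)n})$ respectively, contradicting the statement invoked above and thereby establishing each of the four items.

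The one delicate point is exactly this parameter bookkeeping in the final step: the constant $c$ in $k=c\log n$ must be chosen as a function of the target $\delta$ (a smaller $\delta$ forces a larger $c$), and one must verify that the branching factor $\binom{m+n/k}{n/k}$ --- which is $2^{\Theta(n)}$ for any fixed $c$ and only becomes negligible once $c$ is taken large relative to $\delta$ --- does not swallow the quadratic (resp.\ full) gap being exploited. This is precisely why the statement promises the existence of \emph{some} constant $c$ rather than fixing $c=1$, and why $k=\Theta(\log n)$ (rather than constant $k$) is what makes $R$ controllable at all. A minor secondary point is that the $\psi_i$ may have fewer than $n$ variables, so one should state $\mathcal{A}_c$'s runtime as monotone in its number of variables; this is immediate.
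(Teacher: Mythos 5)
Your proof is correct and takes the same strategic route as the paper: run $\reduceWidth$ with $k=c\log n$, use \cref{lemma:witdhreduce} to aggregate the answers, and choose $c=c(\delta)$ large enough that the branching factor $\binom{m+n/k}{n/k}$ is negligible. The difference is in the runtime accounting. You bound the total cost by (number of leaves) $\times$ (worst-case per-leaf time), i.e.\ $R\cdot 2^{(1-\delta)n}$ with $R\le 2^{(1+o(1))n/c}$, which forces $c$ to be on the order of $1/\delta$. The paper instead exploits the fact recorded in \cref{lemma:witdhreduce} that a leaf reached via $r$ right branches has only $n-rk$ variables, so it bounds the cost by
\[
\sum_{r=0}^{n/k}\binom{m+r}{r}\,2^{s(n-rk)} \;\le\; 2^{sn}\Bigl(1+\tfrac{1}{2^{sk}}\Bigr)^{m+n/k}\;\le\; 2^{sn+\frac{4m}{2^{sk}}},
\]
with $s=1-\delta$, and then it suffices to take $c$ just a bit larger than $1/(1-\delta)$. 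Both derivations establish the claimed ``there exists a constant $c$'' for each $\delta$, so your argument is sound; the paper's finer analysis merely yields a smaller admissible $c$, which could matter if one cared about the dependence of the clause width on $\delta$ (e.g.\ for the reduction to $\mathrm{CVP}_p$, where the downstream running time scales like $2^k$). Your observations that the aggregation step is correct for vanilla, count, parity and $\bmod\ q$, and that the leaf formulas may have fewer than $n$ variables (so one should use monotonicity of the runtime), are both right and match the paper.
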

\begin{proof}
We first prove the first item. Suppose that for each constant $c$, 
there is an algorithm $\mathcal{A}$ that solves \#$c\log n$-SAT in $2^{ns}$ for some constant $s:=1-\delta<1$. Let $k=c\log n$ for the rest of the proof. 
Consider the $\reduceWidth$ algorithm (\cref{algo:ReduceWidth}) with input CNF formula $\psi$. 
Let $p$ be some path of length $t$ in the tree $T$ of recursive calls to ReduceWidth$_k(\psi)$. Let $\psi_p$ be the output formula of width at most $k$ at the leaf of $p$. 
Let $l,r$ be the number of left, right branches respectively on path $p$. Every left branch in the path reduces the width of exactly 1 clause to $k$, therefore $l \leq m$. On the other hand, with additional $\poly(n,m)$ time, every right branch of path $p$ reduces the number of variables by $k$, therefore $r \leq n/k$. As a result, the number of paths in tree $T$ with $r$ right branches is at most $m+r \choose r$ and each outputs a formula with $n-rk$ variables.

Using the same arguments as in~\cite[Lemma~5]{Calabro-DualityWidthClause-2006}, one can see that $\mathcal{A}$ together with the $\reduceWidth$ subroutine can be used to solve \#CNFSAT (ignoring $\poly(n)$ factors) in time at most
\begin{align*}
     & \sum_{r=0}^{n/k} {m+r \choose r} 2^{s(n-rk)} +\binom{m+n/k}{n/k}\cdot \poly(m,n)\\
     \leq & \sum_{r=0}^{n/k} {m+\frac{n}{k} \choose r} 2^{s(n-rk)} \\
    = & 2^{sn} \sum_{r=0}^{n/k} {m+\frac{n}{k} \choose r} \frac{1}{2^{srk}} \\
    \leq & 2^{sn} (1 + \frac{1}{2^{sk}})^{m+\frac{n}{k}} \\
    \leq & 2^{sn} e^{\frac{1}{2^{sk}}(m+ \frac{n}{k})}  \hspace{5em} \text{since } (1+x) \leq e^x \\
    \leq & 2^{sn + \frac{4m}{2^{sk}}}, 
\end{align*}
where the last equality holds because we can assume that $m \geq \frac{n}{k}$ without loss of generality (by appending dummy clauses).
Therefore,  for each $c'$, 
there exist a constant $c$ for $k=c \log n$ and $\delta'$ such that if $m\leq c'n^2$, then $s + \frac{4m}{n2^{sk}}<1-\delta'$. As a result, a $2^{ns}$-time algorithm for  \#$c\log n$-SAT implies a $2^{n(1-\delta')}$-time 
algorithm for \#CNFSAT (restricted to formulas with $m\leq c'n^2$), 
which would refute \#-QSETH (\cref{thm:countingQSETH}). This proves the first item of the corollary. The same arguments hold for $\kSAT$, $\oplus\kSAT$, and $\oplus_q\kSAT$ as well.
\end{proof}

Note that, we \emph{cannot} extend the same arguments for the \propertyMajority{} or \propertyStrictMajority{} or additive-error approximation of count because those properties are not count-preserving. However, these arguments do extend to the multiplicative-factor approximation of the count.

\begin{corollary}
\label{cor:MultErrorkSAT}
Let $\gamma \in \left[\frac{1}{2^{n}},0.4999\right)$.  For each constant $\delta>0$, there exists constant $c$ such that, there is no bounded-error quantum algorithm that  solves $\gamma$-\#$c\log n$-SAT{} 
in time
\begin{enumerate}
    \item $\mathcal{O}\left( \left( \frac{1}{\gamma}\sqrt{\frac{2^{n}-\hat{h}}{\hat{h}}}\right)^{1-\delta}\right)\text{ if }\gamma \hat{h} > 1$, where $\hat{h}$ is the number of satisfying assignments;
    \item $\mathcal{O}(2^{n(1-\delta)})$ otherwise,
\end{enumerate} 
unless $\gamma$-\#QSETH (see \cref{thm:AppCountQSETH}, implied by \cref{thm:hardnessForSpecificEll}) is false.
\end{corollary}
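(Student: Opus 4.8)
The plan is to reduce \appCountingCNFSAT{} to $\gamma$-\#$c\log n$-SAT exactly as in the proof of \Cref{cor:countingksat}, using that the $\reduceWidth$ reduction is count-preserving (\Cref{lemma:witdhreduce}), and then to invoke $\gamma$-\#QSETH (\Cref{thm:AppCountQSETH}). Suppose \Cref{cor:MultErrorkSAT} fails: then for some constant $\delta>0$ and \emph{every} constant $c$ there is a bounded-error quantum algorithm $\mathcal{A}_c$ solving $\gamma$-\#$c\log n$-SAT within the two stated time bounds. Fix a target clause-density constant $c'$, put $k=c\log n$ with $c$ a large constant to be chosen as a function of $\delta$, and given a CNF formula $\psi$ on $n$ variables with $m\leq c'n^2$ clauses, run $\reduceWidth(\psi)$; by \Cref{lemma:witdhreduce} this produces $k$-CNF leaves $\psi_p$, each on at most $n$ variables and $m$ clauses, with $\sol(\psi)$ partitioned so that $|\sol(\psi)|=\sum_p |\sol(\psi_p)|$. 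Run $\mathcal{A}_c$ on each leaf to obtain $d_p$, a $\gamma$-multiplicative-factor approximation of $\hat{h}_p:=|\sol(\psi_p)|$, and output $\sum_p d_p$.

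There are two new ingredients beyond \Cref{cor:countingksat}. The first is why the output is correct, and it is exactly where count-preservation matters: if $(1-\gamma)\hat{h}_p\leq d_p\leq(1+\gamma)\hat{h}_p$ for every leaf, then summing yields $(1-\gamma)\hat{h}\leq\sum_p d_p\leq(1+\gamma)\hat{h}$ with $\hat{h}=|\sol(\psi)|$, so $\sum_p d_p$ is a valid $\gamma$-multiplicative-factor approximation of $\hat{h}$; this is precisely the step that fails for \propertyMajority{}, \propertyStrictMajority{}, or additive-error count, hence the remark after \Cref{cor:countingksat}. The second is a uniform per-leaf time bound: on a leaf with $n_p$ variables, $\mathcal{A}_c$ runs in $\mathcal{O}(2^{n_p(1-\delta)})$ time in \emph{both} cases --- immediate in the ``otherwise'' case, and in the case $\gamma\hat{h}_p>1$ because then $\tfrac1\gamma<\hat{h}_p$, so $\tfrac1\gamma\sqrt{\tfrac{2^{n_p}-\hat{h}_p}{\hat{h}_p}}<\hat{h}_p\sqrt{\tfrac{2^{n_p}}{\hat{h}_p}}=\sqrt{\hat{h}_p\,2^{n_p}}\leq 2^{n_p}$. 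With this in hand the time analysis is verbatim that of \Cref{cor:countingksat}: grouping the leaves by their number $r$ of right branches (at most $\binom{m+r}{r}$ of them, each with $n-rk$ variables), the total cost is, up to $\poly(n)$ factors, at most $\sum_{r\geq 0}\binom{m+n/k}{r}2^{(n-rk)(1-\delta)}\leq 2^{(1-\delta)n}e^{2^{-(1-\delta)k}(m+n/k)}\leq 2^{(1-\delta)n+4m/2^{(1-\delta)k}}$, plus the $2^{o(n)}$ cost of $\reduceWidth$ itself; since $m\leq c'n^2$ and $k=c\log n$, enlarging $c$ makes $\tfrac{4m}{n\cdot 2^{(1-\delta)k}}=O(n^{1-(1-\delta)c})=o(1)$, so the resulting \appCountingCNFSAT{} algorithm runs in $\mathcal{O}(2^{n(1-\delta')})$ time for some $\delta'>0$, and this for every $c'$.

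Finally I would derive the contradiction with $\gamma$-\#QSETH. This is cleanest in the regime $\gamma\hat{h}\leq 1$: there every leaf satisfies $\gamma\hat{h}_p\leq\gamma\hat{h}\leq 1$, so the algorithm returns the \emph{exact} count in $\mathcal{O}(2^{n(1-\delta')})$ time, contradicting the ``otherwise'' branch of \Cref{thm:AppCountQSETH} (equivalently \#QSETH, \Cref{thm:countingQSETH}); for the regime $\gamma\hat{h}>1$ one additionally invokes $\tfrac1\gamma\sqrt{\tfrac{2^n-\hat{h}}{\hat{h}}}\leq 2^n$ (the same algebra as above) to see that the bound just proved dominates the first branch of \Cref{thm:AppCountQSETH}. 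I expect the delicate point to be exactly this last bookkeeping --- reconciling the aggregate (instance-uniform) running time coming out of the width reduction with the instance-\emph{dependent} budgets appearing in $\gamma$-\#QSETH, and checking that the hard instances underlying $\gamma$-\#QSETH are indeed hit by the reduction --- together with the routine but necessary choice of $c=c(\delta)$ large enough to push the width-reduction overhead $4m/2^{(1-\delta)k}$ below an arbitrary constant, exactly as in the proof of \Cref{cor:countingksat}.
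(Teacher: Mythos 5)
Your reduction is exactly the one the paper intends — run $\reduceWidth$ (\cref{lemma:witdhreduce}) to decompose the input into count-preserving $k$-CNF leaves, run the hypothetical $\gamma$-\#$c\log n$-SAT solver on each leaf, and sum the outputs — and your observation that multiplicative-factor errors add to a multiplicative-factor error across the leaves is the key structural point justifying the remark after \cref{cor:countingksat}. The per-leaf time estimate ($\gamma\hat{h}_p>1 \Rightarrow \tfrac1\gamma < \hat{h}_p \Rightarrow \tfrac1\gamma\sqrt{(2^{n_p}-\hat{h}_p)/\hat{h}_p}\leq 2^{n_p}$) is correct, and the aggregate $\mathcal{O}(2^{n(1-\delta')})$ cost follows as in \cref{cor:countingksat}.

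The issue you flag at the end is, however, a genuine gap and your suggested closure runs in the wrong direction. To refute $\gamma$-\#QSETH (\cref{thm:AppCountQSETH}) you must exhibit an algorithm for $\gamma$-\#\CNFSAT{} that, for \emph{every} instance with $\gamma\hat{h}>1$, runs within $\mathcal{O}\bigl((\tfrac1\gamma\sqrt{(2^n-\hat{h})/\hat{h}})^{1-\delta'}\bigr)$; this threshold is much \emph{smaller} than $2^{n(1-\delta')}$ precisely on the hard instances that underlie $\gamma$-\#QSETH, which by the construction in the proof of \cref{thm:AppCountQSETH} satisfy $\hat{h}\in\{\ell,\ell'\}$ with $\ell\approx 2^n/(2(1-\gamma))$, giving a required budget of roughly $(1/\gamma)^{1-\delta}$. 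For, say, constant $\gamma$ this is $O(1)$, while your derived algorithm takes time $2^{n(1-\delta')}$, so no contradiction arises. The final step of your argument — noting $\tfrac1\gamma\sqrt{(2^n-\hat{h})/\hat{h}}\leq 2^n$ and concluding that the bound "dominates" — is backwards: this inequality says the required budget is \emph{at most} $2^{n(1-\delta)}$, whereas you need your algorithm to be \emph{below} the budget, not above it. Concretely, the obstruction is that $\reduceWidth$ produces exponentially many leaves whose per-leaf costs only telescope to a uniform $2^{n(1-\delta')}$, and there is no reason that this worst-case aggregate should be comparable to the instance-dependent budget $\tfrac1\gamma\sqrt{(2^n-\hat{h})/\hat{h}}$ of the original formula, because the relationship between $\{\hat{h}_p\}_p$ and $\hat{h}=\sum_p\hat{h}_p$ does not control the sum of the per-leaf budgets. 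A correct proof needs either a different (e.g.\ many-one, rate-preserving) reduction or a direct argument at the level of $f_{\ell,\ell'}$ on $k$-CNF truth tables; simply transporting the aggregate cost from \cref{cor:countingksat} is insufficient.
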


\section{Quantum strong simulation of quantum circuits}
\label{sec:HardnessStrongSimulation}

We use the phrase \emph{strong simulation problem} to mean strong simulation of {quantum} circuits which is defined as follows.\footnote{Note that this is different from the \emph{weak simulation} problem; a weak simulation \emph{samples} from probability distribution $p(x):=|\bra{0^n}C\ket{x}|^2$.}

\begin{defn}[The strong simulation problem] 
\label{def:StrongSimulation} Let $p\in \mathbb{N}$.
Given a quantum circuit $C$ on $n$ qubits and $x \in \{0,1\}^n$, the goal of \emph{strong simulation} with $p$-bit precision is to output the value of $|\bra{x}C\ket{0^n}|:=0.C_1C_2\ldots$ up to $p$-bit precision. That is, output $C_0.C_1\ldots C_{p-1}$.\footnote{Though in some papers the strong simulation problem requires that we output $\bra{x}C\ket{0^n}$ instead of $|\bra{x}C\ket{0^n}|$, we use this definition because it is more comparable to the definition of the weak simulation problem. Also, the lower bound we present holds for both of these definitions.}
\end{defn}

For a quantum circuit $C$, computing $|\bra{x}C\ket{0^n}|$ exactly, to a precision of $n$ bits, is \#P-hard \cite{CHM21, VDN10}. This means even a \emph{quantum} computer will likely require exponential time to strongly simulate another quantum circuit. In this section, we prove a more \emph{precise} quantum time bound for strongly simulating quantum circuits, both exactly and approximately; in the approximate case, we present complexity results for both multiplicative factor and additive error approximation. Our results extend the results by \cite{HNS20} in two directions: firstly, we give explicit (conditional) bounds proving that it is hard to strongly simulate quantum circuits using \emph{quantum} computers as well. Secondly, we also address the open question posed by \cite{HNS20} on the (conditional) hardness of strong simulation up to accuracy $O(2^{-n/2})$, however, our results are based on a hardness assumption different from SETH or Basic QSETH.

The results presented in this section are based on two main components. Firstly, on the observation that the reduction from \CNFSAT{} to the strong simulation problem given (in Theorem~3) by \cite{HNS20} encodes the count of the number of satisfying assignments; this fact allows us to use the same reduction to reduce other variants of \CNFSAT, such as \CountingCNFSAT{} or \ParityCNFSAT{}, to the strong simulation problem, moreover, the same reduction also allows us to reduce \appCountingCNFSAT{} and \addErrorCountingCNFSAT{} to analogous variants of the strong simulation problem, respectively. As the second main component, we use the quantum hardness of these variants of \CNFSAT{} problem discussed in \cref{sec:LowerBoundsVariants}.

We will first state the result of the exact quantum time complexity of the strong simulation problem and then use that result to later show how hard it is for a quantum computer to strongly simulate a quantum circuit with an additive error or a multiplicative factor approximation.

\begin{figure}
    \centering
    \scalebox{1.25}{
\begin{tikzpicture}[scale=1.000000,x=1pt,y=1pt]


\draw[decorate,decoration={brace,amplitude = 4.000000pt}, thick] (-10.000000,-25.750000) -- (-10.000000,45.750000);
\draw[color=black] (-12.000000,10.000000) node[left] {$n$};

\draw[color=white] (115.000000,0.000000) node[left] {$n$};

\draw[color=black] (0.000000,40.000000) -- (90.000000,40.000000);
\node[rectangle,
    draw = black,
    text = olive,
    fill = white,
    minimum width = 0.25cm, 
    minimum height = 0.5cm] (r) at (17.5,40) {$\textcolor{black} H$};

\node[rectangle,
    draw = black,
    text = olive,
    fill = white,
    minimum width = 0.25cm, 
    minimum height = 0.5cm] (r) at (72.5,40) {$\textcolor{black} H$};

\draw[color=black] (0.000000,20.000000) -- (90.000000,20.000000);
\node[rectangle,
    draw = black,
    text = olive,
    fill = white,
    minimum width = 0.25cm, 
    minimum height = 0.5cm] (r) at (17.5,20) {$\textcolor{black} H$};

\node[rectangle,
    draw = black,
    text = olive,
    fill = white,
    minimum width = 0.25cm, 
    minimum height = 0.5cm] (r) at (72.5,20) {$\textcolor{black} H$};

\draw[color=black] (0.000000,-20.000000) -- (90.000000,-20.000000);
\node[rectangle,
    draw = black,
    text = olive,
    fill = white,
    minimum width = 0.25cm, 
    minimum height = 0.5cm] (r) at (17.5,-20) {$\textcolor{black} H$};

\node[rectangle,
    draw = black,
    text = olive,
    fill = white,
    minimum width = 0.25cm, 
    minimum height = 0.5cm] (r) at (72.5,-20) {$\textcolor{black} H$};

\draw[color=black] (0.000000,-40.000000) -- (90.000000,-40.000000);

\draw[color=black] (0.000000,-60.000000) -- (90.000000,-60.000000);

\draw[color=black] (0.000000,-80.000000) -- (90.000000,-80.000000);
\node[rectangle,
    draw = black,
    text = olive,
    fill = white,
    minimum width = 0.25cm, 
    minimum height = 0.5cm] (r) at (72.5,-80) {$\textcolor{black} X$};

\node[rectangle,
    draw = black,
    text = olive,
    fill = white,
    minimum width = 0.25cm, 
    minimum height = 4.75cm] (r) at (45,-20) {$\textcolor{black}{C''_{\phi}}$};

\end{tikzpicture}
}
    \caption{The circuit $C_{\phi}$.}
    \label{fig:CircuitEncodingTotalSatAssignments}
\end{figure}
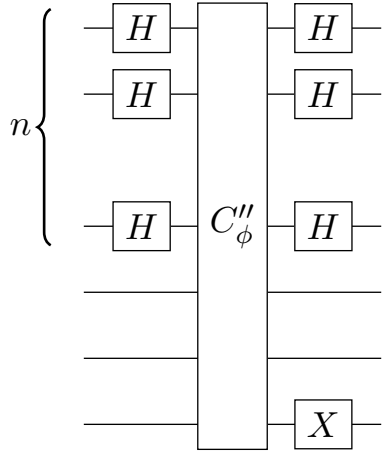

\begin{theorem}
\label{thm:QHardnessExactStrongSimulation}
For each constant $\delta>0$, there is no bounded-error quantum algorithm that solves the strong simulation problem (\cref{def:StrongSimulation}) up to a precision of $n+1$ bits 
in $\mathcal{O}(2^{n(1-\delta)})$ time, unless \#QSETH (see \cref{thm:countingQSETH}) is false. 
\end{theorem}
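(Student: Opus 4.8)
The plan is to prove the contrapositive: exhibit a polynomial-time reduction from \CountingCNFSAT{} to strong simulation, so that a fast strong-simulation algorithm refutes \#QSETH (\cref{thm:countingQSETH}). The point is that the \CNFSAT{}-to-strong-simulation reduction of \cite{HNS20} is \emph{count-encoding}: from a formula $\phi$ on $n$ variables it produces a quantum circuit one of whose amplitudes is exactly $|\phi|/2^n$, rather than merely $0$ versus nonzero. Concretely, given a \CNF{} formula $\phi$ on $n$ variables (we may assume $m\le cn^2$ clauses, as in \cref{thm:countingQSETH}), I would build in $\poly(n)$ time the circuit $C_\phi$ of \cref{fig:CircuitEncodingTotalSatAssignments}: on $n+1$ qubits it applies $H^{\otimes n}$ to the first register, then a reversible subcircuit $C''_\phi$ realizing $\ket{x}\ket{b}\mapsto\ket{x}\ket{b\oplus\phi(x)}$, then an $X$ on the last qubit, then $H^{\otimes n}$ again on the first register. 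The subcircuit $C''_\phi$ must be realized with $o(n)$ additional work qubits; I would evaluate $\phi$ with a running AND-flag and implement each clause as a single multiply-controlled $X$ gate decomposed with no clean work qubits, so that $C_\phi$ acts on $n+O(1)$ qubits and uses $\poly(n)$ gates.

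A one-line amplitude computation then gives $\bra{0^{n+1}}C_\phi\ket{0^{n+1}}=\tfrac{1}{2^n}\sum_{x\in\{0,1\}^n}\phi(x)=|\phi|/2^n$ (each Hadamard layer contributes $2^{-n/2}$, and after the $X$ the projection onto $\bra{0}$ on the last qubit selects exactly the satisfying assignments; the complementary convention merely yields $(2^n-|\phi|)/2^n$, which is equally good). Since all gates are exact, $|\bra{0^{n+1}}C_\phi\ket{0^{n+1}}|$ equals $|\phi|/2^n$ exactly, a dyadic rational in $[0,1]$ with at most $n$ bits after the binary point; hence its first $n+1$ binary digits $C_0.C_1\cdots C_n$ pin down $|\phi|$ ($C_0=1\Rightarrow|\phi|=2^n$, otherwise $|\phi|=\sum_{i=1}^n C_i2^{n-i}$).

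Assume for contradiction that, for some $\delta>0$, there is a bounded-error quantum algorithm solving strong simulation of every $N$-qubit circuit to $N+1$ bits of precision in time $\mathcal{O}(2^{N(1-\delta)})$. Given $\phi$ on $n$ variables, form $C_\phi$ (so $N=n+O(1)$), run the algorithm on $(C_\phi,0^N)$, obtain the first $N+1\ge n+1$ binary digits of $|\phi|/2^n$, and read off $|\phi|$. This takes time $\mathcal{O}(2^{(n+O(1))(1-\delta)})\cdot\poly(n)=\mathcal{O}(2^{n(1-\delta/2)})$, solving \CountingCNFSAT{} (even with $m\le cn^2$ clauses) faster than \#QSETH permits — a contradiction. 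The one step that needs care is the ancilla budget: a naive reversible evaluator of a width-unbounded \CNF{} with $m=\Theta(n^2)$ clauses would consume $\Theta(m)$ ancillas, which is polynomially and hence exponentially-in-the-exponent too many; keeping $C_\phi$ on $n+O(1)$ qubits (via ancilla-free multiply-controlled gates, or an $O(\log n)$-ancilla variant from logarithmic-space reversible formula evaluation) is what makes the exponent survive the reduction. The amplitude identity, the precision bookkeeping and the $\delta\mapsto\delta/2$ loss are all routine.
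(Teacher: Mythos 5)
Your proposal takes essentially the same route as the paper: build the Hadamard--$C''_\phi$--$X$--Hadamard circuit of \cref{fig:CircuitEncodingTotalSatAssignments} so that $\abs{\bra{0\cdots 0}C_\phi\ket{0\cdots 0}}=\abs{\phi}/2^n$, observe this is a dyadic rational with $n$ bits after the point, and conclude that an $n{+}1$-bit strong simulator in time $2^{N(1-\delta)}$ refutes \#QSETH. The amplitude identity, the precision bookkeeping, and the ``read off $\abs{\phi}$ from the bits'' step are all correct and match the paper's argument.

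The one claim I would not let stand as written is the \emph{primary} ancilla-budget argument. A ``running AND-flag'' $f \leftarrow f \wedge C_i(x)$ is not a reversible update: the preimages $(f=0,C_i=1)$ and $(f=1,C_i=0)$ collide, so you cannot simply fold each clause into a single flag qubit, and ancilla-free decompositions of individual multiply-controlled gates do not rescue this. So the ``$n+O(1)$ qubits'' figure is not justified. Your fallback — an $O(\log n)$-ancilla reversible evaluator of the CNF — is what actually works, and it is exactly what the paper invokes (the construction of width $k \le n + 2(\lceil\log n\rceil + \lceil\log m\rceil)$ from Section~4.1 of \cite{HNS20}). That puts the circuit on $n+O(\log n)$ qubits, not $n+O(1)$, so the parenthetical in your last paragraph is internally inconsistent; but it is precisely the case your own $\delta\mapsto\delta/2$ bookkeeping is designed for, since $2^{(n+O(\log n))(1-\delta)} = 2^{n(1-\delta)}\cdot\poly(n) = O(2^{n(1-\delta')})$ for any $\delta'<\delta$. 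In short: drop the ``running AND-flag with $O(1)$ ancillas'' claim and keep the logarithmic-space construction, and the proof coincides with the paper's.
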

The proof is similar to the proof of \cite[Theorem~3]{HNS20}. We restate it here for ease of reading.

\begin{proof}
Let $\phi$ be a \CNF{} formula on $n$ input variables and $m$ clauses. Let $C'_{\phi}$ denote the \emph{reversible} classical circuit, using TOFFOLI, CNOT and X gates, that \emph{tidily} computes $\phi(x)$ for all $x \in \{0,1\}^n$.\footnote{A classical circuit $C:\{0,1\}^{n+w(n)+1} \rightarrow \{0,1\}^{n+w(n)+1}$ \emph{reversibly and tidily computes} a function $f:\{0,1\}^n \rightarrow \{0,1\}$ if the following statements are true.
\begin{enumerate}
    \item Circuit $C$ \emph{reversibly} computes $f$ if $C$ consists of reversible gates, such as $\{\text{TOFFOLI, CNOT, NOT}\}$, and 
    \begin{equation}
        \forall x \in \{0,1\}^n, \exists W(x) \in \{0,1\}^{w(n)}, C(x,0^{w(n)}, b)=(x,W(x), b\oplus f(x)).
    \end{equation}
    \item Circuit $C$ \emph{tidily} computes $f$ if
    \begin{equation}
        \forall x \in \{0,1\}^n, \forall b \in \{0,1\}, C(x,0^{w(n)},b)=(x,0^{w(n)},b\oplus f(x)).
    \end{equation}
\end{enumerate}} One can construct circuit $C'_{\phi}$ of width $k$ with $n \leq k \leq n+2(\ceil{\log n} + \ceil{\log m})$ and size $s \leq 8 \times 3^{\ceil{\log n}+\ceil{\log m}}$ with sublinear space and polynomial-time overhead;  see \cite[Section~4.1]{HNS20}. Let $C''_{\phi}$ denote the quantum analogue of the classical reversible circuit $C'_{\phi}$, i.e., for the gates in $C'_{\phi}$, TOFFOLI and CNOT remain unchanged. Therefore, the width and size of $C''_{\phi}$ remains $k$ and $s$, respectively.

Clearly, $C''_{\phi}$ maps $ \mathcal{H}$ to $  \mathcal{H}$ where $\mathcal{H}$ denotes a $2^k$-dimensional Hilbert space. Then, we see that $C''_{\phi} \ket{x} \ket{0^{k-n}}=\ket{x}\ket{\phi(x)}\ket{0^{k-1-n}}$. Let $C_{\phi}$ denote the quantum circuit in \cref{fig:CircuitEncodingTotalSatAssignments}. Width of this circuit is still $k$ and size is $\mathcal{O}(s)$. One can see that $\bra{0^k}C_{\phi}\ket{0^k}$ encodes the fraction of satisfying assignments to formula $\phi$ (that is, $|\phi|/2^n$). 

If there exists a constant $\delta>0$ such that {strong quantum simulation} of circuit $C_{\phi}$ on basis state $\ket{x}=\ket{ 0^k}$ can be computed in $T=2^{k(1-\delta)}$ time up to $n+1$ bits of precision, then this could count the number of satisfying assignments formula of $\phi$ in time $T$ exactly. Plugging in the values of $k \leq n+2(\ceil{\log n}+\ceil{\log m})$ and $s \leq 8 \times 3^{\ceil{\log n}+\ceil{\log m}}=\poly(n,m) $ we get $T \leq \mathcal{O}(2^{n(1-\delta)})\cdot \poly(m)$ time. This would refute \#QSETH (\cref{thm:countingQSETH}). 
\end{proof}

Note that even if we only care about the first two bits $C_0.C_1$ of $|\bra{x}C\ket{0^n}|=C_0.C_1C_2\ldots$, it is still hard to determine which values $C_0$ and $C_1$ are, because it means we determine if the number of satisfying assignments is $\geq 2^n/2$ or not (if $C_0C_1=10$ or $01$, then $|\phi|\geq 2^n/2$, and if $C_0C_1=00$, then $|\phi|< 2^n/2$). Therefore, using exactly the same statement in the proof above, we obtain the following corollary that gives the same lower bound for $2$-bit precision, by using a different hardness assumption (which is Majority-QSETH). 

\begin{corollary}
\label{thm:StrongSimulationExactTwoBits}
For each constant $\delta>0$, there is no bounded-error quantum algorithm that solves the strong simulation problem (\cref{def:StrongSimulation}) up to $2$ bits of precision 
in $\mathcal{O}(2^{n(1-\delta)})$ time, unless Majority-QSETH (Part 1 of \cref{thm:MajorityQSETH}) is false. 
\end{corollary}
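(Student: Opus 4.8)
The plan is to reuse verbatim the circuit $C_\phi$ built in the proof of \cref{thm:QHardnessExactStrongSimulation}: from a \CNF{} formula $\phi$ on $n$ variables and $m$ clauses one constructs, in $\poly(n,m)$ time, a quantum circuit $C_\phi$ on $k \leq n + 2(\ceil{\log n} + \ceil{\log m})$ qubits, of size $\poly(n,m)$, satisfying $\bra{0^k} C_\phi \ket{0^k} = |\phi|/2^n$. The only new ingredient is the elementary observation flagged just above the corollary: the \emph{first two} bits of the binary expansion of $|\phi|/2^n$ already determine whether $|\phi| \geq 2^n/2$, so $2$ bits of precision suffice to decide \majorityCNFSAT{}.

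First I would nail down the bit-extraction claim. Write $|\phi|/2^n = C_0.C_1C_2\ldots$ in binary; since $|\phi|$ is an integer in $\{0,1,\ldots,2^n\}$ this expansion is finite and unambiguous. We have $C_0 = 1$ exactly when $|\phi| = 2^n$, and when $|\phi| < 2^n$ we have $C_0 = 0$ with $C_1 = 1$ precisely when $|\phi| \geq 2^n/2$ (the boundary value $|\phi| = 2^{n-1}$ gives $0.1000\ldots$, matching the non-strict inequality in the definition of \majorityCNFSAT{}). Hence $|\phi| \geq 2^n/2$ if and only if $C_0 \vee C_1 = 1$, and $C_0,C_1$ are exactly the two bits returned by a $2$-bit-precision strong simulation of $C_\phi$ on the basis state $\ket{0^k}$.

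For the reduction I would suppose, towards a contradiction, that some bounded-error quantum algorithm solves $2$-bit-precision strong simulation in $\mathcal{O}(2^{n(1-\delta)})$ time for a constant $\delta > 0$, and fix any $\delta' \in (0,\delta)$. If Majority-QSETH (Part~1 of \cref{thm:MajorityQSETH}) held, it would supply a constant $c$ such that no $\mathcal{O}(2^{n(1-\delta')})$-time bounded-error quantum algorithm solves \majorityCNFSAT{} on formulas with $m \leq cn^2$ clauses. But given such a $\phi$, build $C_\phi$ in $\poly(n)$ time, run the assumed simulation algorithm on $C_\phi$ with input $\ket{0^k}$, and output $C_0 \vee C_1$; by the previous paragraph this decides \majorityCNFSAT{} on $\phi$. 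Since $m \leq cn^2$ forces $k = n + \mathcal{O}(\log n)$, the running time is $\mathcal{O}(2^{k(1-\delta)}) = \mathcal{O}\!\left(2^{n(1-\delta)}\poly(n)\right) = \mathcal{O}(2^{n(1-\delta')})$ for all large $n$, contradicting Majority-QSETH.

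There is no real obstacle here: both the construction of $C_\phi$ and the identity $\bra{0^k}C_\phi\ket{0^k} = |\phi|/2^n$ are already in hand from \cref{thm:QHardnessExactStrongSimulation}, and the only bookkeeping point is the $k$-versus-$n$ gap in the exponent, which the slack $\delta - \delta'$ absorbs exactly as in that proof. The one thing worth a sanity check is that \cref{def:StrongSimulation} outputs $|\bra{x}C\ket{0^n}|$ — here a nonnegative real — so no phase information is lost and the two returned bits are literally the $C_0$ and $C_1$ analyzed above.
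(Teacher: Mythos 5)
Your proposal is correct and follows the same route the paper takes: reuse the circuit $C_\phi$ from \cref{thm:QHardnessExactStrongSimulation}, observe that $\bra{0^k}C_\phi\ket{0^k}=|\phi|/2^n$ is a nonnegative real whose first two binary digits determine whether $|\phi|\geq 2^n/2$, and conclude that a fast $2$-bit-precision strong simulator would decide \majorityCNFSAT{} in $\mathcal{O}(2^{n(1-\delta')})$ time, refuting Majority-QSETH. The paper states this as a brief remark preceding the corollary; your write-up just makes the bit-extraction and the $k$-versus-$n$ bookkeeping explicit.
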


One can also try to solve the strong simulation problem with additive-error approximation using the following definition.

\begin{defn}(Strong simulation with additive error $\Delta'$)
\label{def:AdditiveErrorStrongSimulation} Let $\Delta' \in [\frac{1}{2^{n+1}},1)$.
Given a quantum circuit $C$ on $n$ qubits and $x \in \{0,1\}^n$, the goal of strong simulation with additive error $\Delta'$ is to estimate $d'=|\bra{0^n}C\ket{x}|$ with additive error $\Delta'$.    
\end{defn}
Using the same reduction in the proof of \cref{thm:QHardnessExactStrongSimulation} and the conjectured hardness of the \addErrorCountingCNFSAT{} problem we can immediately get the following corollary.\footnote{Note that the value of $k$ in relation to $n$ is such that $2^k = \widetilde{\mathcal{O}}(2^n)$; here $k$ refers to the $k$ used in the proof of \Cref{thm:QHardnessExactStrongSimulation} instead of the $k$ of $k$-SAT.}

\begin{corollary}\label{cor:DeltaErrorQSim} For each constant $\delta>0$, there is no bounded-error quantum algorithm that solves strong simulation
with additive error $\Delta'=\frac{\Delta}{2^n} \in [\frac{1}{2^{n}},1)$ in  $\widetilde{\mathcal{O}}\left(\left( \sqrt{\frac{2^n}{\Delta}} + \frac{\sqrt{\hat{h} (2^n -\hat{h})}}{\Delta}\right)^{1-\delta}\right)$ time 
where $\hat{h}=\bra{0^n} C \ket{x}\cdot 2^n$, unless $\Delta$\textsc{-add-}\#QSETH (see \cref{thm:AddErrorQSETH}) is false.
\end{corollary}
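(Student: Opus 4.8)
The plan is to reuse verbatim the reduction in the proof of \cref{thm:QHardnessExactStrongSimulation}, exploiting the fact that the circuit $C_\phi$ constructed there encodes $|\phi|/2^n$ \emph{exactly} in the amplitude $\bra{0^k}C_\phi\ket{0^k}$ (not just the count), and to pair it with the conditional hardness of \addErrorCountingCNFSAT{} (\cref{thm:AddErrorQSETH}). Suppose, for contradiction, that for some constant $\delta>0$ there is a bounded-error quantum algorithm $\mathcal{B}$ that solves strong simulation with additive error $\Delta'$ within the time bound asserted in the statement of the corollary. Given a \CNF{} formula $\phi$ on $n$ variables and $m\le cn^2$ clauses, I would build $C_\phi$ exactly as in the proof of \cref{thm:QHardnessExactStrongSimulation} (see \cref{fig:CircuitEncodingTotalSatAssignments}): it has $k$ qubits with $n\le k\le n+2(\lceil\log n\rceil+\lceil\log m\rceil)$, hence $2^k=\widetilde{\mathcal{O}}(2^n)$; it has size $\poly(n,m)$; and it satisfies $\bra{0^k}C_\phi\ket{0^k}=|\phi|/2^n$.

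First I would run $\mathcal{B}$ on $(C_\phi,0^k)$ with additive-error parameter $\Delta'=\Delta/2^n$, which is admissible since $\Delta\in[1,2^n)$ forces $\Delta'\in[2^{-n},1)\subseteq[2^{-(k+1)},1)$ as $k\ge n$. Its output $d'$ satisfies $\lvert d'-|\phi|/2^n\rvert<\Delta'$, so the nearest integer $d$ to $d'2^n$ satisfies $\lvert d-|\phi|\rvert<\Delta'2^n+\tfrac12=\Delta+\tfrac12$; running $\mathcal{B}$ instead with parameter $\Delta'/2$ (admissible for $\Delta\ge 2$; the boundary case $\Delta=1$ is exact counting, already covered by \cref{thm:QHardnessExactStrongSimulation}) tightens this to $\lvert d-|\phi|\rvert<\Delta$. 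Thus $d$ is a valid answer to \addErrorCountingCNFSAT{} on $\phi$, restricted to formulas with $m\le cn^2$ clauses.

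The last thing to verify is the running time. Instantiating the claimed bound at $C=C_\phi$, $x=0^k$, we get $\hat h=\bra{0^k}C_\phi\ket{0^k}\cdot 2^n=|\phi|$, the number of satisfying assignments, so $\mathcal{B}$ produces the above \addErrorCountingCNFSAT{} answer in $\widetilde{\mathcal{O}}\bigl((\sqrt{2^n/\Delta}+\sqrt{\hat h(2^n-\hat h)}/\Delta)^{1-\delta}\bigr)$ time, plus the $\poly(n)$ cost of building $C_\phi$ and rescaling (note $\poly(n,m)=\poly(n)$ once $c$ is fixed). This is precisely the time bound of \cref{thm:AddErrorQSETH} up to the tilde; in the non-vacuous regime — where this quantity is superpolynomial, equivalently $\Delta$ is not within a polylog factor of $2^n$ — the polylogarithmic overhead can be absorbed into the slack, yielding for a slightly smaller constant $\delta'>0$ an $\mathcal{O}\bigl((\sqrt{2^n/\Delta}+\sqrt{\hat h(2^n-\hat h)}/\Delta)^{1-\delta'}\bigr)$-time algorithm for \addErrorCountingCNFSAT{} (with $m\le cn^2$ clauses), contradicting \cref{thm:AddErrorQSETH}; in the remaining regime the asserted lower bound is $\widetilde{\mathcal{O}}(1)$ and hence vacuous, exactly as in \cref{thm:AddErrorQSETH} itself.

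I do not expect a genuine obstacle: the argument is essentially mechanical once the reduction of \cref{thm:QHardnessExactStrongSimulation} and \cref{thm:AddErrorQSETH} are in hand. The one point that needs care is parameter bookkeeping — tracking the fixed $2^n$ scaling relating $\bra{0^k}C_\phi\ket{0^k}$ to $|\phi|$, checking that this scaling turns a $\Delta'$-additive amplitude estimate into a $\Delta$-additive count with $\Delta=\Delta'2^n$, and confirming that the running-time expression one reads off is exactly the \addErrorCountingCNFSAT{} bound rather than something weaker. A secondary, routine point is the integer-rounding correction together with the vacuous polylogarithmic regime.
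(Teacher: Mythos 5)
Your proposal is correct and takes essentially the same approach as the paper: the paper proves this corollary by a one-line remark (``using the same reduction in the proof of \cref{thm:QHardnessExactStrongSimulation} and the conjectured hardness of \addErrorCountingCNFSAT{}''), relying on the observation, stated in a footnote, that $2^k=\widetilde{\mathcal O}(2^n)$ for the $k$-qubit circuit $C_\phi$. You supply exactly the routine bookkeeping the paper leaves implicit — the $2^k$-versus-$2^n$ rescaling (which, once written out, in fact cancels exactly rather than only up to polylog factors), the integer-rounding correction for converting a $\Delta'$-additive amplitude estimate into a valid \addErrorCountingCNFSAT{} output, and the polylog/$\poly(n)$ overhead being absorbed into the $\delta$-slack — so this is a more detailed rendering of the same argument, not a different route.
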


It is beneficial to note that we only get (at best) a $\poly(n)$-time quantum lower bound for the strong simulation problem when $\Delta'=\frac{\Delta}{2^n}=\Theta(1)$. Fortunately, this lower bound matches the $\poly(n)$ time quantum upper bound for the strong simulation problem when $\Delta'=\Theta(1)$, see the end of this section (\Cref{thm:upper_qstrongsim}) for details. In fact, for some values of $\hat{h}$ our lower bounds are also tight in terms of $\Delta'$. 
Additionally, we can use a similar argument to show strong simulation results with multiplicative factor, defined as follows.

\begin{defn}(Strong simulation with multiplicative factor $\gamma$)
\label{def:MultiplicativeErrorStrongSimulation}
Let $\gamma>0$. 
Given a quantum circuit $C$ on $n$ qubits and $x\in\{0,1\}^n$, the goal of strong simulation with multiplicative factor $\gamma$ is to estimate the value $d'=|\bra{0^n}C\ket{x}|$ with  multiplicative error $\gamma$, i.e., output a value $d$ such that $(1-\gamma)d' < d < (1+\gamma)d'$. 
\end{defn}

The exact arguments in the proof of \cref{thm:QHardnessExactStrongSimulation} can be used to prove the following statement.

\begin{theorem}\label{thm:gammaQsim}
Let $\gamma \in [\frac{1}{2^n},0.4999)$. For each constant $\delta>0$, there is no bounded-error quantum algorithm that can solve the strong simulation problem with multiplicative error $\gamma$ 
in time
\begin{enumerate}
    \item $\mathcal{O}\left( \left( \frac{1}{\gamma}\sqrt{\frac{2^{n}-\hat{h}}{\hat{h}}}\right)^{1-\delta}\right)$ if $\gamma \hat{h} \geq 1$, where $\hat{h}=\bra{0^n}C \ket{x}\cdot 2^n$ for input $x \in \{0,1\}^n$;
    \item $\mathcal{O}(2^{n(1-\delta)})$, otherwise,
\end{enumerate}
    unless $\gamma$-\#QSETH (see \cref{thm:AppCountQSETH}, implied by \cref{thm:hardnessForSpecificEll}) is false.
\end{theorem}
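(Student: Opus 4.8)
The plan is to re-use, essentially verbatim, the reduction from the proof of \cref{thm:QHardnessExactStrongSimulation}, replacing the call to the exact strong-simulation oracle by a call to an oracle for strong simulation with multiplicative factor $\gamma$. Recall from that proof that, given a \CNF{} formula $\phi$ on $n$ variables with $m \le cn^2$ clauses, one builds a quantum circuit $C_\phi$ (\cref{fig:CircuitEncodingTotalSatAssignments}) on $k \le n + 2(\ceil{\log n} + \ceil{\log m})$ qubits, of size $\poly(n,m) = \poly(n)$, with the property that $|\bra{0^k}C_\phi\ket{0^k}| = \bra{0^k}C_\phi\ket{0^k} = |\phi|/2^n$; in particular $2^k = \widetilde{\mathcal{O}}(2^n)$.

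First I would assume, for contradiction, that for some constant $\delta > 0$ there is a bounded-error quantum algorithm $\mathcal{A}$ solving strong simulation with multiplicative factor $\gamma$ within the stated time bounds. Given $\phi$, I build $C_\phi$ in $\poly(n)$ time and run $\mathcal{A}$ on input $(C_\phi, 0^k)$; it returns $d$ with $(1-\gamma)\tfrac{|\phi|}{2^n} < d < (1+\gamma)\tfrac{|\phi|}{2^n}$. I then output $\rd{d \cdot 2^n}$. Since $|d\cdot 2^n - |\phi|| < \gamma |\phi| = \gamma\hat{h}$, this is a valid answer to \appCountingCNFSAT{}: when $\gamma\hat{h} \ge 1$ even $d\cdot 2^n$ itself is a multiplicative-$\gamma$ estimate of $|\phi|$, while when $\gamma\hat{h} < 1$ the rounding recovers $|\phi|$ exactly, which is exactly what \appCountingCNFSAT{} requires in that regime. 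For the running time, the quantity governing $\mathcal{A}$'s cost on $C_\phi$ is $\bra{0^k}C_\phi\ket{0^k}\cdot 2^k = |\phi|\cdot 2^{k-n}$, which (up to the standard identification licensed by $2^k = \widetilde{\mathcal O}(2^n)$) plays the role of $\hat h$; since the same factor $2^{k-n}$ multiplies $2^k$, it cancels, giving
\[
\frac{1}{\gamma}\sqrt{\frac{2^k - |\phi|\cdot 2^{k-n}}{|\phi|\cdot 2^{k-n}}} \;=\; \frac{1}{\gamma}\sqrt{\frac{2^n - |\phi|}{|\phi|}} \;=\; \frac{1}{\gamma}\sqrt{\frac{2^n - \hat{h}}{\hat{h}}} .
\]

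Combining these, in the regime $\gamma\hat{h} \ge 1$ the reduction yields a $\widetilde{\mathcal{O}}\big((\tfrac{1}{\gamma}\sqrt{(2^n-\hat{h})/\hat{h}}\,)^{1-\delta}\big)$-time algorithm for \appCountingCNFSAT{}, contradicting the first part of \cref{thm:AppCountQSETH}; in the regime $\gamma\hat{h} < 1$ it yields a $\widetilde{\mathcal{O}}(2^{n(1-\delta)})$-time exact \CountingCNFSAT{} algorithm, contradicting the second part of \cref{thm:AppCountQSETH} (equivalently \cref{thm:countingQSETH}). Since \cref{thm:AppCountQSETH} is itself a consequence of $\gamma$-\#QSETH through \cref{thm:hardnessForSpecificEll}, this is the desired conclusion. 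The one step I would treat with care — and the only real obstacle — is absorbing the $\poly(n)$ slack coming from the ancilla qubits and the $\poly(n)$ circuit-construction overhead into the $(\,\cdot\,)^{1-\delta}$ exponent: exactly as in the proofs of \cref{thm:QHardnessExactStrongSimulation} and \cref{cor:countingksat}, $\poly(n)$ factors are swallowed after shrinking $\delta$ by an arbitrarily small amount, and (unlike that slack) the cancellation in the displayed identity above is exact rather than merely up to polylog error. Everything else is a direct transcription of the exact-precision argument.
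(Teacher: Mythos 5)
Your proposal is correct and follows the paper's proof essentially verbatim: both assume a fast multiplicative-error strong-simulation algorithm, feed it the circuit $C_\phi$ from \cref{thm:QHardnessExactStrongSimulation} whose amplitude $\bra{0^k}C_\phi\ket{0^k}=|\phi|/2^n$ encodes the satisfying-assignment count, and convert the returned estimate into a multiplicative approximation of $|\phi|$ to contradict $\gamma$-\#QSETH (\cref{thm:AppCountQSETH}). Your explicit verification that the $2^{k-n}$ factor cancels between the circuit's $\hat{h}$ and the formula's $\hat{h}$, and the brief remark on rounding, are slightly more careful than the paper's write-up but constitute the same reduction.
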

\begin{proof}
Let $\mathcal{A}$ denote an algorithm that for a given $s$-sized quantum circuit $C$ on $k$ qubits and a given $x \in \{0,1\}^k$, for some constants $\delta$ and $\delta'$, computes $\bra{0^k} C \ket{x}$ with  multiplicative error $\gamma$ either 
\begin{enumerate}
    \item in $\mathcal{O}\left( \left( \frac{1}{\gamma}\sqrt{\frac{2^{k}-\hat{h}}{\hat{h}}}\right)^{1-\delta}\right)$ time, whenever $\gamma \hat{h} \geq 1$, or
    \item in $\mathcal{O}(2^{k(1-\delta')})$ time, otherwise.
\end{enumerate}

Then, given a \CNF{} formula $\phi$ on $n$ input variables and $m$ clauses, one can do the following to approximately count the number of satisfying assignments to $\phi$: first use the $\poly(n,m)$ reduction as in the proof of \cref{thm:QHardnessExactStrongSimulation} to construct the quantum circuit $C_{\phi}$ of size $s=\poly(n,m)$ and width $k\leq n+2(\ceil{\log n}+\ceil{\log m})$. Then run algorithm $\mathcal{A}$ on quantum circuit $C_{\phi}$ and $x=0^k$ as inputs. The output would then be a $\gamma$-multiplicative approximation of the count of the number of satisfying assignments to $\phi$. Depending on the value of $\gamma \cdot \hat{h}$, the running time of this entire process is either $\mathcal{O}\left( \poly(n,m)+ \left( \frac{1}{\gamma}\sqrt{\frac{2^{k}-\hat{h}}{\hat{h}}}\right)^{1-\delta} \right)=\mathcal{O}\left( \left( \frac{1}{\gamma}\sqrt{\frac{2^{n}-\hat{h}}{\hat{h}}}\right)^{1-\delta} \cdot \poly(m) \right)$ for some constant $\delta>0$, or 
$\mathcal{O}(2^{k(1-\delta')})=\mathcal{O}(2^{n(1-\delta')}\cdot \poly(m))$ time for some constant $\delta'>0$. Either way refutes $\gamma$-\#QSETH (\cref{thm:AppCountQSETH}).
\end{proof}

\paragraph{Quantum upper bounds for strong simulation}
\label{Appsec:UpperBoundAddStrongSimulationProblem}

Here, we include a quantum algorithm for strong simulation with additive error $\Delta'$ for completeness. 

    


\begin{theorem}\label{thm:upper_qstrongsim}
Let $\Delta' \in [\frac{1}{2^{n+1}},1)$. There exists a quantum algorithm that solves strong simulation with additive error $\Delta'$ (\cref{def:AdditiveErrorStrongSimulation}) in $\poly(n,|C|)\cdot\frac{1}{\Delta'}$ time, where $|C|$ is the size (the number of quantum elementary gates it contains) of input circuit $C$.
\end{theorem}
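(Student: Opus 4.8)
The plan is to reduce the task to a single amplitude-estimation call on a suitably constructed $(n+1)$-qubit unitary. Given the circuit $C$ and $x\in\{0,1\}^n$, first I would build $U$ as follows: apply the bit-flips $X^{x_1}\otimes\cdots\otimes X^{x_n}$ to the first register to prepare $\ket{x}$ from $\ket{0^n}$, then apply $C$ to the first register, and finally apply an $n$-controlled-NOT that flips the $(n{+}1)$-st qubit (initialized to $\ket 0$) iff the first register reads $0^n$; this multi-controlled gate costs $O(n)$ elementary gates with a handful of ancillas, which we uncompute. Then
\begin{equation*}
U\ket{0^n}\ket{0}=\bra{0^n}C\ket{x}\,\ket{0^n}\ket{1}+\sum_{y\neq 0^n}\big(C\ket{x}\big)_y\ket{y}\ket{0}=\sqrt a\,\ket{\phi_1}\ket{1}+\sqrt{1-a}\,\ket{\phi_0}\ket{0},
\end{equation*}
where $\sqrt a=\big|\bra{0^n}C\ket{x}\big|=d'$ (the phase of $\bra{0^n}C\ket{x}$ is absorbed into $\ket{\phi_1}$) and $\ket{\phi_0}$ is the normalized rejected part. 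The size of $U$ is $|C|+O(n)=\poly(n,|C|)$ elementary gates.

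Next I would invoke the amplitude-estimation subroutine (\Cref{thm:AmplitudeEstimation}) with $M=\lceil 100\pi/\Delta'\rceil$. It uses $\mathcal{O}(M)$ applications of $U$ and $U^\dagger$ and $\widetilde{\mathcal{O}}(M)$ further elementary gates, and outputs a value $\lambda$ that, with probability at least $9/10$, satisfies $|\sqrt a-\lambda|\le 100\pi/M\le\Delta'$. Since $\sqrt a=d'$, the output $\lambda$ (clamped to $[0,1]$) is an estimate of $d'$ with additive error at most $\Delta'$, as required. The total running time is $\mathcal{O}(M)\cdot\poly(n,|C|)+\widetilde{\mathcal{O}}(M)=\poly(n,|C|)\cdot\frac{1}{\Delta'}$, where the hypothesis $\Delta'\ge 2^{-(n+1)}$ is only used to note that $M$ is finite; repeating $\mathcal{O}(1)$ times and taking the median boosts the success probability to any desired constant.

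Two minor points need attention, but neither is a real obstacle. First, \Cref{thm:AmplitudeEstimation} as stated assumes $0<a<1$; the boundary cases $a\in\{0,1\}$, i.e.\ $d'\in\{0,1\}$, can be detected (for instance by a few direct computational-basis measurements of $U\ket{0^n}\ket 0$) and answered exactly, or handled by mixing in a known $O(\Delta')$-weight component before running the estimation. Second — and this is the only place where the design needs care — amplitude estimation here estimates the \emph{amplitude} $\sqrt a=d'$ directly, not the acceptance \emph{probability} $a=(d')^2$; this is precisely what makes $M=\Theta(1/\Delta')$ applications suffice to pin $d'$ down to additive error $\Delta'$, whereas naive repeated sampling of the acceptance probability and then taking a square root would cost $\Theta(1/\Delta'^2)$ and degrade badly near $d'=0$. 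Beyond that, the proof is a routine assembly of standard primitives.
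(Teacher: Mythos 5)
Your proof is correct and takes essentially the same route as the paper: construct a unitary whose amplitude on a marked branch is $|\bra{0^n}C\ket{x}|$ and invoke the amplitude-estimation subroutine (\Cref{thm:AmplitudeEstimation}) with $M = \Theta(1/\Delta')$. The only differences are cosmetic — you get away with $n{+}1$ qubits by hard-wiring $X^x$ rather than keeping $x$ in a separate register and comparing (the paper uses $2n{+}1$), you track $|\bra{0^n}C\ket{x}|$ which matches \Cref{def:AdditiveErrorStrongSimulation} verbatim (the paper's proof text momentarily switches to $|\bra{x}C\ket{0^n}|$), and your remark on the $a\in\{0,1\}$ boundary is a careful extra that the paper silently glosses over.
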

\begin{proof}
Given a quantum circuit $C$ on $n$ input variables and $x \in \{0,1\}^n$, the task (of strong simulation with additive error $\Delta'$) is to estimate the value of $|\bra{x}C\ket{0^n}|$ with $\Delta'\in [\frac{1}{2^{n+1}},1)$ additive-error approximation. 

Let $\ket{\psi}= C\ket{0^n}:=\sum_{i \in \{0,1\}^n} \alpha_i \ket{i}$. 
Let $U'$ denote the unitary $U':\ket{i}\ket{x}\ket{b} \rightarrow \ket{i}\ket{x}\ket{b\oplus (i=x)}$ for $i,x \in \{0,1\}^n$ and $b \in \{0,1\}$. It is easy to verify that combining $C$ and $U'$ we can construct a unitary $U$ on $2n+1$ qubits such that 
\begin{equation*}
    U\ket{0^{2n}}\ket{0} = \alpha_x\ket{x}\ket{x}\ket{1}+\sum_{i\neq x}\alpha_i \ket{i}\ket{x}\ket{0},
\end{equation*}
and $|\alpha_x|=|\bra{x}C\ket{0^n}|$. Using the amplitude estimation algorithm (\cref{thm:AmplitudeEstimation}), we can estimate $|\bra{x}C\ket{0^n}|$ to an additive error of $\Delta'$ in $\poly(n,|C|)\cdot\frac{1}{\Delta'}$ time.
\end{proof}

\section{Quantum lower bound for lattice counting and q-count problems}\label{sec:lattice}

In this section, we would like to connect $k$-SAT to variants of lattice problems and then use the QSETH lower bound we have in \cref{sec:CountingKSAT} to give quantum fine-grained complexity for those lattice problems. 

Fine-grained complexity of lattice problems is quite widely studied in the classical case~\cite{BGS17,DBLP:conf/soda/AggarwalBGS21,AggarwalC21,AggarwalS18,BennettP20,BennettPT22}. Lots of variants of lattice problems have been considered before, and the most well-studied one is the closest vector problem (with respect to $\ell_p$ norm).

CVP$_p$ is known to have a $2^n$ SETH lower bound for any $p\not\in 2\intg$~\cite{BGS17,DBLP:conf/soda/AggarwalBGS21}, and for even $p$, there seems a barrier for showing a fine-grained reduction from $k$-SAT to CVP~\cite{AK22CVPimpossible}. Kannan gave a $n^{\mathcal{O}(n)}$-time algorithm for solving CVP$_p$ for arbitrary $p\geq 1$~\cite{Kan83IP}, while the best-known algorithm for solving CVP$_p$ with noneven $p$ is still $n^{cn}$ for some constant $c$. To get a conditional quantum lower bound for CVP$_p$ for noneven $p$, given there is already a classical reduction from $k$-SAT to CVP$_p$ using $2^k\cdot \poly(n)$ time (for noneven $p$)~\cite{BGS17,DBLP:conf/soda/AggarwalBGS21}, either one can directly use the QSETH framework by Aaronson et al.~\cite{BasicQSETH-Aaronson-2020} to get a $2^{(0.5-\epsilon)n}$ lower bound, or we can use \cref{cor:countingksat} to get the same lower bound in our QSETH framework.\footnote{Basic QSETH assumption is weaker than the QSETH assumption in Aaronson et al~\cite{BasicQSETH-Aaronson-2020}, so our lower bound under basic QSETH assumption~(\cref{conj:ACqseth,cor:basicQSETH}) will also imply a lower bound under their quantum SETH framework.}

A natural question is invoked here: Can we have a $2^{(0.5+\epsilon)n}$ quantum SETH lower bound for any (variants of) lattice problems? The answer is yes by using the framework and the problems introduced in \cref{sec:CountingKSAT} and by considering the counting variant of lattice problems.  We begin by introducing the (approximate) lattice counting problem and some other related problems as follows:

\begin{defn}[Lattice counting problem]
Let $\gamma\geq 0$ and $1\leq p \leq \infty$. The $\gamma$-approximate Vector Counting Problem $\gamma$-VCP$_p$ is the counting problem defined as follows: The input is a basis $\mathbf{B}\in \mathbb{R}^{d\times n}$ of a lattice $\cL(\mathbf{B})$, target vector $\vect{t}\in \mathbb{R}^{d}$, and radius $r\in \mathbb{R}_+$. The goal of this problem is to output a value $C$ satisfying $|(\cL-\vect{t})\cap r\cdot B^{d}_p|\leq C\leq (1+\gamma)\cdot|(\cL-\vect{t})\cap r\cdot B^{d}_p|$. If $\gamma=0$, we simply denote the problem as VCP$_p$.
\end{defn}

The (approximate) lattice counting problem was first introduced by Stephens-Davidowitz as a promise problem~\cite{SteDiscreteGaussian16}, and here we slightly modify the definition to make it a counting problem. We also generalize the lattice vector counting problem to the $q$-count problem, as follows.

\begin{defn}[Lattice $q$-count problem]
Let $\gamma\geq 1$, $1\leq p \leq \infty$, and $q\in [2^n]\setminus \{1\}$. The lattice $q$-count Problem $\#_q$-VCP$_p$ is the lattice $q$-count problem defined as follows: The input is a basis $\mathbf{B}\in \mathbb{R}^{d\times n}$ of a lattice $\cL(\mathbf{B})$, target vector $\vect{t}\in \mathbb{R}^{d}$, and radius $r\in \mathbb{R}_+$. The goal of this problem is to output a value $C=|(\cL-\vect{t})\cap r\cdot B^{d}_p|\mod q$. If $q=2$, then we simply denote the problem as $\oplus$VCP$_p$.
\end{defn}

One can consider the above two problems as the counting and the q-counting version of CVP$_p$, respectively. To connect these problems to the counting $k$-SAT problem, we should first introduce the following geometric tool introduced by Bennett, Golovnev, and Stephens-Davidowitz~\cite{BGS17}.

\begin{defn}[Isolating parallelepiped]
Let $k$ be an integer between $3$ and $n$ and $1\leq p \leq \infty$. We say that $V \in \mathbb{R}^{2^k\times k}$ and $\vect{u}\in \mathbb{R}^{2^n}$ define a $(p, k)$-isolating parallelepiped if $\|\vect{u}\|_p> 1$ and $\|V\vect{x}-\vect{u}\|_p$ = 1 for all $\vect{x}\in \{0,1\}^k\setminus \{0_k\}$.
\end{defn}

For the sake of completeness, we will explain how to connect a $k$-CNF formula to an instance of VCP using the above object. The proof is very similar to the proof of~\cite[Theorem~3.2]{BGS17}. 
 
\begin{theorem}\label{thm:CNFtoVCP}
Let $k$ be an integer between $3$ and $n$. Suppose we have a $(p, k)$-isolating parallelepiped $(V,\vect{u})$ for some $p = p(n) \in [1, \infty)$ and can make quantum queries to oracles $O_V:\ket{i}\ket{s}\ket{0}\rightarrow \ket{i}\ket{s}\ket{V_{is}}$ and $O_{\vect{u}}: \ket{i}\ket{0}\rightarrow\ket{i}\ket{\vect{u}_i}$ for $i\in [2^k]$ and $s\in [k]$. Then for every given input oracle $O_\Phi:\ket{j}\ket{w}\ket{0}\rightarrow \ket{j}\ket{k}\ket{C_w(j)}$ of k-CNF formula (with $m$ clauses) $\psi=C_1\land C_2 \land \cdots \land C_m$ for $w\in [m]$ and $j\in[n]$, one can output oracles $O_\mathbf{B}:\ket{h}\ket{j}\ket{0}\rightarrow \ket{h}\ket{j}\ket{\mathbf{B}_j(h)}$ of basis $\mathbf{B}\in \mathbb{R}^{(m\cdot 2^k+n)\times n}$ for each $h\in[m\cdot 2^k+n]$ and $j\in[n]$, $O_{\vect{t}}:\ket{h}\ket{0}\rightarrow\ket{h}\ket{t_h}$ of target vector $\vect{t}\in \mathbb{R}^{m\cdot 2^k+n}$ for each $h\in[m\cdot 2^k+n]$, and radius $r$ such that $|$sol($\psi)|=$VCP$_p(\mathbf{B},\vect{t},r)$, using $\poly(n,m)$ queries to $O_V$, $O^\dagger_V$, $O_{\vect{u}}$, $O^\dagger_{\vect{u}}$, $O_\Phi$, $O_\Phi^\dagger$, and elementary gates.
\end{theorem}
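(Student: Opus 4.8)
I would follow the classical reduction of Bennett, Golovnev and Stephens-Davidowitz \cite[Theorem~3.2]{BGS17} almost verbatim, adding two ingredients: (i) the observation that the same lattice instance \emph{counts} solutions, i.e.\ that lattice vectors near $\vect{t}$ are in bijection with satisfying assignments (not merely that one exists iff $\psi$ is satisfiable), and (ii) an implementation of every object by reversible, coherent subroutines so that the reduction becomes an oracle-to-oracle transformation. Concretely, write $\psi=C_1\wedge\cdots\wedge C_m$ over $x_1,\dots,x_n$ and let $\vect{a}^{(w)}$ be the assignment to $C_w$'s $\le k$ variables that falsifies it ($a^{(w)}_j=0$ if $x_j$ occurs positively in $C_w$, $a^{(w)}_j=1$ if negatively). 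Since $x_j\oplus a^{(w)}_j=a^{(w)}_j+(1-2a^{(w)}_j)z_j$ is affine in $z_j$, on the variables of $C_w$ the map $\vect{z}\mapsto\vect{z}\oplus\vect{a}^{(w)}$ is realized by a diagonal $\pm1$ matrix $D_w$ together with the shift $\vect{a}^{(w)}$. I would then take $\mathbf{B}\in\mathbb{R}^{(m\cdot 2^k+n)\times n}$ to be the vertical stack of an $n\times n$ block $\beta I_n$ (for a scalar $\beta$ fixed below) and, for each clause $w$, a $2^k\times n$ block whose column belonging to the $s$-th variable of $C_w$ is the $s$-th column of $V$ times $(1-2a^{(w)}_s)$ and whose remaining columns are zero (using the restriction of $(V,\vect{u})$ to its first $k'$ columns when $C_w$ has width $k'<k$; this still has the isolating property $\|V'\vect{x}-\vect{u}\|_p=1$ on $\{0,1\}^{k'}\setminus\{0\}$, with $2^k$ rather than $2^{k'}$ coordinates, which is immaterial, so each clause block still has $2^k$ rows). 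The target is $\vect{t}=\big((\beta/2)\vect{1}_n,\ \vect{u}-V\vect{a}^{(1)},\ \dots,\ \vect{u}-V\vect{a}^{(m)}\big)$ and the radius is $r=\big(n(\beta/2)^p+m\big)^{1/p}$.

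\textbf{Correctness.} Every lattice vector is $\mathbf{B}\vect{z}$ for a unique $\vect{z}\in\mathbb{Z}^n$, and the $\beta I_n$ block makes $\mathbf{B}$ injective, so it is a genuine basis. For $\vect{z}\in\{0,1\}^n$ the top block of $\mathbf{B}\vect{z}-\vect{t}$ contributes exactly $n(\beta/2)^p$ to $\|\cdot\|_p^p$, and the clause-$w$ block equals $V(\vect{z}_w\oplus\vect{a}^{(w)})-\vect{u}$, which by the isolating-parallelepiped property has $\ell_p^p$-contribution $1$ if $\vect{z}$ satisfies $C_w$ and $\|\vect{u}\|_p^p>1$ if not; hence $\|\mathbf{B}\vect{z}-\vect{t}\|_p=r$ exactly when $\vect{z}\in\sol(\psi)$ and $>r$ otherwise. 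For $\vect{z}\in\mathbb{Z}^n\setminus\{0,1\}^n$ some integer $z_j\notin\{0,1\}$ forces $|\beta z_j-\beta/2|^p\ge(3\beta/2)^p$, so the top block alone contributes at least $n(\beta/2)^p+(\beta/2)^p(3^p-1)$ while the clause blocks contribute $\ge 0$; choosing $\beta$ so that $(\beta/2)^p(3^p-1)>m$ (any $\beta>2m$ works, since $p\ge1$, so $\beta$ is $\poly(m)$) makes $\|\mathbf{B}\vect{z}-\vect{t}\|_p>r$. Therefore the lattice vectors in the closed ball $\vect{t}+r\cdot B^{m\cdot 2^k+n}_p$ are precisely $\{\mathbf{B}\vect{z}:\vect{z}\in\sol(\psi)\}$, i.e.\ $\mathrm{VCP}_p(\mathbf{B},\vect{t},r)=|\sol(\psi)|$.

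\textbf{Oracle implementation.} The numbers $\beta$ and $r$ depend only on $n,m,p$ and are output directly. For $O_{\mathbf{B}}$ on $\ket{h}\ket{j}$: if $h\le n$, output $\beta$ if $h=j$ and $0$ otherwise, with no queries; if $h>n$, write $h-n=(w-1)2^k+i$, and note $\mathbf{B}_{hj}$ is $0$ unless $x_j$ occurs in $C_w$, in which case it is $(1-2a^{(w)}_s)V_{i,s}$ where $s=s(j,w)$ is the position of $j$ among the variables of $C_w$ and the sign comes from the literal's polarity. I would compute $s(j,w)=|\{j'\le j:\ x_{j'}\text{ occurs in }C_w\}|$ by sweeping $j'=1,\dots,n$, each step querying $O_\Phi$ on $\ket{j'}\ket{w}$, conditionally incrementing a counter ancilla, and immediately uncomputing with $O_\Phi^\dagger$; then query $O_V$ once on $\ket{i}\ket{s}$, form the signed product into the output register, uncompute the $V$-register with $O_V^\dagger$, and reverse the sweep to clear the counter. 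For $O_{\vect{t}}$ on $\ket{h}$: output $\beta/2$ if $h\le n$, else with $h-n=(w-1)2^k+i$ compute $\vect{u}_i-\sum_s V_{i,s}a^{(w)}_s$ by looping $s=1,\dots,k$, locating the $s$-th variable of $C_w$ by the same $O(n)$-query sweep, reading its polarity, querying $O_V$ on $\ket{i}\ket{s}$ and $O_{\vect{u}}$ on $\ket{i}$, accumulating, and uncomputing. All of this uses $\poly(n,m)$ queries to $O_V,O_V^\dagger,O_{\vect{u}},O_{\vect{u}}^\dagger,O_\Phi,O_\Phi^\dagger$ and $\poly(n,m)$ elementary arithmetic gates.

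\textbf{Main obstacle.} Since the geometry is exactly that of \cite{BGS17}, the substantive points are the single scalar inequality $(\beta/2)^p(3^p-1)>m$ that keeps all integer points outside $\{0,1\}^n$ strictly beyond radius $r$ (this is what upgrades the decision statement to the \emph{counting} statement, including the boundary case that satisfying assignments sit exactly on the sphere of the closed ball), and — the genuinely new part — performing the ``index of $x_j$ inside clause $C_w$'' bookkeeping reversibly and coherently. The latter is what forces the $O(n)$-query sweeps with careful uncomputation, but it comfortably fits the $\poly(n,m)$ budget. A minor pitfall to handle explicitly is that clauses of width $k'<k$ must use the restriction of $(V,\vect{u})$ to its first $k'$ columns, which keeps each clause block of size $2^k$ so that the dimension count $m\cdot 2^k+n$ is exact.
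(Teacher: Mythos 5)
Your proposal is correct and follows essentially the same route as the paper: the BGS17 isolating-parallelepiped construction, upgraded to a counting statement by noting that satisfying assignments land exactly on the sphere of radius $r$ while everything else lies strictly outside, together with a $\poly(n,m)$-query coherent implementation of the basis and target oracles. The only cosmetic differences are that you leave the identity-block scalar $\beta$ abstract (the paper fixes $\beta=2m^{1/p}$, $r=(mn+m)^{1/p}$) and you explicitly treat clauses of width $k'<k$ by restricting $V$ to its first $k'$ columns, a point the paper leaves implicit.
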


\begin{proof}
Let $d = m\cdot 2^k+n$ and $V=\{\vect{v_1},\ldots,\vect{v_k}\}$ with $\vect{v_s}\in \mathbb{R}^{2^k}$ for every $s\in[k]$. The basis $\mathbf{B} \in \mathbb{R}^{d\times n}$ and target vector $\vect{t} \in \mathbb{R}^d$ in the output instance have the form:
\begin{align*}
   \mathbf{B}= \begin{pmatrix}
\mathbf{B}_1\\[\jot]
\vdots\\[\jot]
\mathbf{B}_m\\[\jot]
2\cdot m^{1/p}\cdot I_n\\[\jot]
\end{pmatrix},\hspace{5mm} \vect{t}= \begin{pmatrix}
\vect{t}_1\\[\jot]
\vdots\\[\jot]
\vect{t}_m\\[\jot]
m^{1/p}\cdot \mathbf{1}_n\\[\jot]
\end{pmatrix},
\end{align*}
with blocks $\mathbf{B}_w \in \mathbb{R}^{2^k\times n}$ that correspond to the clause $C_w = \vee_{s=1}^k \ell_{w,s}$ and $\vect{t}_w \in \mathbb{R}^{2^n}$ for each $w\in[m]$. For each $w\in[m]$ and $j\in [n]$, the $j$th column $(\mathbf{B}_w)_j$ of block $\mathbf{B}_w$ is
\begin{align*}
    (\mathbf{B}_w)_j=\begin{cases}
\vect{v_w}, \text{        if $x_j$ is the $w$th literal of clause $w$},\\
-\vect{v_w}, \text{      if $\neg x_j$ is the $w$th literal of clause $w$,}\\
0_{2^d}, \text{     otherwise,}
\end{cases}
\end{align*}
and $\vect{t}_w=\vect{u}-\sum\limits_{s\in N_w}\vect{v}_s$, where $N_w= \{s \in [k] : \ell_{w,s} \text{ is negative}\}$ is the set of the indices of negative
literals in $C_w$. Also set $r=(mn+m)^{1/p}$.

Define the unitary $U_e:\ket{a}\ket{b}\ket{0}\rightarrow \ket{a}\ket{b}\ket{\delta_{|a|,|b|}}$ for each $a,b \in [d]$, and the unitary as for each $x_1,\ldots,x_n\in \{0,1\},$
\[
U_{loc}: \ket{x_1}\ket{x_2}\ldots\ket{x_n} \ket{0} \rightarrow 
    \begin{dcases}
    \ket{x_1}\ket{x_2}\ldots\ket{x_n} \ket{s}, & \text{if only $x_s=1$ and all others are } 0,\\
    \ket{x_1}\ket{x_2}\ldots\ket{x_n} \ket{0}, &\text{otherwise,}
\end{dcases}
\]
which can both be implemented via $\poly(n)$ elementary gates up to negligible error. Also, we define the unitary as follows: for each $w\in[m]$, $j\in [n]$,
\[
U_{s\ell}: \ket{w}\ket{j}\ket{0} \rightarrow 
    \begin{dcases}
    \ket{w}\ket{j}\ket{s}, & \text{if $x_j$ is the $s$th literal of clause $C_w$} ,\\
    \ket{w}\ket{j}\ket{-s}, & \text{if $\neg x_j$ is the $s$th literal of clause $C_w$} ,\\
    \ket{w}\ket{j}\ket{0}, &\text{otherwise,}
\end{dcases}
\]
which can also be implemented by using $\poly(m,n)$ applications of $O_\Phi$ and $\poly(m,n)$ many elementary gates.

We are now ready to show how to implement $O_\mathbf{B}$ and $O_{\vect{t}}$. For input $\ket{h}\ket{j}\ket{0}\ket{0}\ket{0}\ket{0}$, let $\ket{h}=\ket{h_1}\ket{h_2}$ where $h_2$ is the last $n$ bits of $h$ and $h_1$ is the remaining prefix. Then we first apply $U_e$ to the first, third, and fifth registers to obtain $\ket{h_1}\ket{h_2}\ket{j}\ket{0}\ket{\delta_{h_1,0}}\ket{0}\ket{0}$. After that, apply $U_{s\ell}$ to the first, third, and sixth registers, and apply $O_V$ to the second, third, and seventh registers we get
$$
\ket{h_1}\ket{h_2}\ket{j}\ket{0}\ket{\delta_{h_1,0}}\ket{s}\ket{V_{js}},
$$
where $V_{js}=(\vect{v}_s)_j$. Finally, adding another ancilla register, we can store the value $V_{js}\cdot \delta_{h_1,0}+2m^{1/p}(1-\delta_{h_1,0})\delta_{h-m\cdot2^k,j}$ in the last register. Uncomputing the fourth to seventh registers, we have $$
\ket{h_1}\ket{h_2}\ket{j}\ket{V_{js}\cdot \delta_{h_1,0}+2m^{1/p}(1-\delta_{h_1,0})\delta_{h-m\cdot2^k,j}}=\ket{h}\ket{j}\ket{(\vect{v}_s)_j\cdot \delta_{h_1,0}+2m^{1/p}(1-\delta_{h_1,0})\delta_{h-m\cdot2^k,j}},
$$
and $(\vect{v}_s)_j\cdot \delta_{h_1,0}+2m^{1/p}(1-\delta_{h_1,0})\delta_{h-m\cdot2^k,j}$ is exactly the coefficient of $\mathbf{B}_j(h)$. One can see we only use $\poly(n,m)$ queries to $O_V$, $O^\dagger_V$, $O_\Phi$, $O_\Phi^\dagger$ and a similar number of elementary gates. We can also construct $O_{\vect{t}}$ using a similar strategy, which can also be done using at most $\poly(n,m)$ queries to $O_V$, $O^\dagger_V$, $O_\Phi$, $O_\Phi^\dagger$, $O_u$, $O^\dagger_u$, and elementary gates.

To see the correctness, 
consider $\vect{y}\in \mathbb{Z}^n$. If $\vect{y}\notin \{0,1\}^n$, then
$$
\|\mathbf{B}\vect{y}-\vect{t}\|^p_p \geq \|2m^{1/p}I_n\vect{y}-m^{1/p}\mathbf{1}_n\|^p_p\geq m(n+2).
$$
On the other hand, if $\vect{y}\in \{0,1\}^n$, then for each $\mathbf{B}_w$
\begin{align*}
    \|\mathbf{B}_w\vect{y}-\vect{t}\|^p_p&= \|\sum\limits_{s\in P_w}\vect{y}_{\text{ind}({\ell_{w,s}})}\cdot \vect{v}_s-\sum\limits_{s\in N_w}\vect{y}_{\text{ind}({\ell_{w,s}})}\cdot \vect{v}_s-(u-\sum\limits_{s\in N_w}\vect{y}_{\text{ind}({\ell_{w,s}})}\cdot \vect{v}_s)\|^p_p\\
    &= \|\sum\limits_{s\in P_w}\vect{y}_{\text{ind}({\ell_{w,s}})}\cdot \vect{v}_s+\sum\limits_{s\in N_w}(1-\vect{y}_{\text{ind}({\ell_{w,s}})})\cdot \vect{v}_s-u\|^p_p\\
    &= \|\sum\limits_{s\in S_w(\vect{y})}\vect{v}_s-u\|^p_p,
\end{align*}
where ${{\text{ind}({\ell_{w,s}})}}$ is the index of
the variable underlying $\ell_{w,s}$ (that is, ${\text{ind}({\ell_{w,s}})}=j$ if $\ell_{w,s}=x_j$ or $\neg x_j$), $P_w= \{s \in [k] : \ell_{w,s} \text{ is positive}\}$ is the set of the indices of positive literals in $C_w$, and $S_w(\vect{y})= \{s\in P_w:\vect{y}_{\text{ind}(\ell_{w,s})}=1\}\cup \{s\in N_w:\vect{y}_{\text{ind}(\ell_{w,s})}=0\}$ is the
indices of literals in $C_w$ satisfied by $\vect{y}$.
 Because $(V,\vect{u})$ is a $(p,k)$-isolating parallelepiped, if $|S_w(\vect{y})|\neq 0$, then $\|\sum\limits_{s\in S_w(\vect{y})}\vect{v}_s-\vect{u}\|^p_p=1$, and it will be greater than $1$ otherwise. Also, $|S_w(\vect{y})|\neq 0$ if and only if $C_w$ is satisfied. Therefore, we have that for every satisfying assignment $\vect{y}$, 
$$
\|\mathbf{B}\vect{y}-\vect{t}\|^p_p=\sum\limits_{k=1}^m \|\mathbf{B}_k\vect{y}-\vect{t}\|^p_p+mn=m+mn,
$$
and for all other unsatisfying assignments $\vect{y}'$, $\|\mathbf{B}\vect{y'}-\vect{t}\|^p_p>m+mn$. As a result, every satisfying assignment of $\psi=C_1\land C_2 \land \cdots \land C_m$ is encoded as a lattice point of lattice $\cL(\mathbf{B})$ with distance $r=(mn+m)^{1/p}$ to the target vector $t$, which implies that the answer of counting-SAT with input $\psi$ is equal to VCP$_p(\mathbf{B},\vect{t},r)$.
\end{proof}



The theorem above shows how to connect the counting $k$-SAT problem to the vector counting problem given access to a $(p,k)$-isolating parallelepiped. However, it is not always the case that we can compute such an isolating parallelepiped efficiently. 
Aggarwal, Bennett, Golovnev, and Stephens-Davidowitz~\cite{BGS17,DBLP:conf/soda/AggarwalBGS21} showed the existence of isolating parallelepiped for some $p,k$ and provided an efficient algorithm for computing them.

\begin{theorem}[\cite{DBLP:conf/soda/AggarwalBGS21}]\label{thm:ppcomp}
    For $k \in \mathbb{Z}_+$ and computable $p=p(n) \in [1,\infty)$ if $p$ satisfies either (1) $p \notin 2\mathbb{Z}$ or (2) $p\geq k$, the there exists a $(p, k)$-isolating parallelepiped $V \in\mathbb{R}^{2^k\times k}$, $\vect{u}\in \mathbb{R}^{2^k} $and it is computable in time $\poly(2^k,n)$.
\end{theorem}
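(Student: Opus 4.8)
The plan: this statement is an external geometric construction (from \cite{BGS17,DBLP:conf/soda/AggarwalBGS21}) rather than a consequence of anything earlier in the paper, so I would prove it directly, reducing the construction of a $(p,k)$-isolating parallelepiped to a one-dimensional interpolation problem. Index the $2^k$ coordinates of $\mathbb{R}^{2^k}$ by the subsets $A\subseteq[k]$. For each $A$ I would pick scalars $\rho_A,c_A\in\mathbb{R}$ and set the $A$-entry of the $s$-th column of $V$ to $\rho_A\cdot[s\in A]$ and the $A$-entry of $\vect{u}$ to $\rho_A c_A$; then for $\vect{x}\in\{0,1\}^k$ one has $(V\vect{x}-\vect{u})_A=\rho_A\big(|A\cap\mathrm{supp}(\vect{x})|-c_A\big)$, so
\[
\|V\vect{x}-\vect{u}\|_p^p=\sum_{A\subseteq[k]}|\rho_A|^p\,\big||A\cap\mathrm{supp}(\vect{x})|-c_A\big|^p .
\]
If I further require $\rho_A,c_A$ to depend only on $|A|$ — say $\rho_A=\rho_i,\ c_A=c_i$ when $|A|=i$ — then a short counting step (there are $\binom{t}{j}\binom{k-t}{i-j}$ sets $A$ with $|A|=i$ and $|A\cap\mathrm{supp}(\vect{x})|=j$) shows the right-hand side depends on $\vect{x}$ only through $t:=|\mathrm{supp}(\vect{x})|$ and equals $\sum_{i=0}^k w_i\,G_i(t)$, where $w_i=|\rho_i|^p\ge0$ and $G_i(t)=\sum_{j}\binom{t}{j}\binom{k-t}{i-j}|j-c_i|^p$ is a polynomial in $t$ of degree at most $i$. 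Since $\|V\vect{x}-\vect{u}\|_p=1$ on nonzero $\vect{x}$ and $\|\vect{u}\|_p^p=\sum_i w_iG_i(0)$, the whole problem becomes: choose $c_i\in\mathbb{R}$ and $w_i\ge0$ with $\sum_i w_iG_i(t)=1$ for $t=1,\dots,k$ and $\sum_i w_iG_i(0)>1$.

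The next step is the interpolation, and this is where the hypothesis on $p$ enters. The coefficient of $t^i$ in $G_i$ is $\frac{1}{i!}\sum_{j=0}^{i}(-1)^{i-j}\binom{i}{j}|j-c_i|^p$, i.e. a nonzero multiple of the $i$-th finite difference of $x\mapsto|x-c_i|^p$ at $0$. I would argue that the $c_i$ can be chosen so that this is nonzero for every $0\le i\le k$, which makes $G_0,\dots,G_k$ have degrees $0,1,\dots,k$ and hence a basis of the degree-$\le k$ polynomials, equivalently of all functions on $\{0,\dots,k\}$. The target is then forced to have the form $H(t)=1+\lambda\prod_{j=1}^k(t-j)$ with $\lambda$ of sign $(-1)^k$ (so that $H(0)>1$), and one expresses it in the $G_i$-basis. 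To make the finite difference nonzero: if $p\ge k$ take $c_i=0$ for $i\ge1$ and $c_0\neq0$, since then $|x|^p=x^p$ is a genuine polynomial of degree $p\ge k\ge i$ and the difference is a positive multiple of a Stirling number; if $p\notin 2\mathbb{Z}$ use instead that $x\mapsto|x|^p$ is not a polynomial at all, so for generic integer-translate shifts the relevant generalized-Vandermonde-type matrices are nonsingular. I expect the non-polynomiality argument for noneven $p$ to be the main obstacle: one must rule out that $|x-c|^p$ agrees, for the chosen $c$, with a polynomial of degree $<i$ on $\{0,1,\dots,i\}$, which is the genuinely analytic part of \cite{BGS17,DBLP:conf/soda/AggarwalBGS21} (it uses, e.g., the behaviour of the $\lceil p\rceil$-th derivative of $|x|^p$ across the origin).

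Two points remain. First, one must check that the solved coefficients come out with $w_i\ge0$; I would handle this using the freedom still available — the target $H$ lies in a one-parameter family, the shifts $c_i$ can still be tuned, and one may also enlarge the construction by allowing columns that are not $0/1$-valued (yielding more basis functions and hence slack in an underdetermined system) — choosing parameters in general position so that all weights are strictly positive, and noting that a uniform rescaling of the whole instance preserves the isolating property if a cleanup is needed. Second, $\|\vect{u}\|_p>1$ is exactly the requirement $\sum_i w_iG_i(0)>1$, which is arranged directly. Finally, for computability: the construction is the solution of a linear system of size $\poly(2^k)$ whose entries are values $|j-c_i|^p$ for small integers $j$; since $p=p(n)$ is computable these are computable to any desired precision in $\poly(2^k,n)$ time, the system is solved in $\poly(2^k)$ arithmetic operations, and $V,\vect{u}$ are read off in $\poly(2^k,n)$ time, with any precision loss absorbed into the strict inequality $\|\vect{u}\|_p>1$.
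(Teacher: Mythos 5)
This theorem is imported from \cite{DBLP:conf/soda/AggarwalBGS21} and the paper does not reprove it, so there is no in-paper argument to compare against; I will instead assess your reconstruction on its own terms.

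Your outline correctly recovers the shape of the Bennett--Golovnev--Stephens-Davidowitz construction: index coordinates by subsets $A\subseteq[k]$ (equivalently $\vect{z}\in\{0,1\}^k$), make the row depend on $A$ only through $|A|$ so that $\|V\vect{x}-\vect{u}\|_p^p$ is a function of $t=|\mathrm{supp}(\vect{x})|$, expand in the degree-$\le i$ polynomials $G_i(t)$, observe that the leading coefficient of $G_i$ is (up to $1/i!$) the $i$-th finite difference of $x\mapsto|x-c_i|^p$, and note that this finite difference is nonzero exactly when $p\notin 2\mathbb{Z}$ or $p\ge k$ -- which is the analytic heart of the cited result, and which you honestly flag as the part you are not proving. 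The combinatorial identities you use (the $\binom{t}{j}\binom{k-t}{i-j}$ count, the leading-coefficient formula, the forced form $H(t)=1+\lambda\prod_{j=1}^k(t-j)$ with $\lambda$ of sign $(-1)^k$) are all correct.

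The genuine gap is the nonnegativity of the weights $w_i=|\rho_i|^p$. Two problems. First, the claim that the system is ``underdetermined'' with ``slack'' is wrong: once the $c_i$ are chosen so that $\deg G_i=i$ exactly, $(G_0,\dots,G_k)$ is a \emph{basis} of the degree-$\le k$ polynomials, so for each choice of $\lambda$ the coefficients $w_i$ in $H=\sum_i w_i G_i$ are \emph{uniquely determined} -- there is no free-dimensional solution space to pick a positive point from. Second, ``general position'' gives you nonzero $w_i$, not nonnegative $w_i$; sign control is precisely what requires real work. In the ABGS argument this is handled by explicit choices of the shifts (and of $\lambda$) together with a sign analysis of the resulting linear system, not by genericity. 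Your proposed escape hatch of ``allowing columns that are not $0/1$-valued'' is also not part of the construction and would break the clean reduction to the Hamming-weight variable $t$, so it does not obviously restore degrees of freedom in a usable form. To close the gap you would need to pin down a concrete choice of $c_0,\dots,c_k$ (and the sign/magnitude of $\lambda$) and actually verify $w_i\ge 0$ for all $i$, which is where the cited proof spends its effort.

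Minor remarks: the precision bookkeeping at the end is fine in spirit (a slight numerical perturbation can only be absorbed into the strict inequality $\|\vect{u}\|_p>1$, and one should also note that the isolating equalities $\|V\vect{x}-\vect{u}\|_p=1$ only need to hold up to a controlled error since \cref{thm:CNFtoVCP} uses them through a strict threshold comparison); and in the $p\ge k$ branch you do need $c_0\ne 0$ as you say, since $\Delta^0[|x|^p](0)=0$ would kill $G_0$.
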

Therefore, by choosing $k=\Theta(\log n)$ and combining \cref{cor:countingksat} and \cref{thm:CNFtoVCP,thm:ppcomp}, we can directly show a $2^n$-QSETH lower bound for VCP$_p$ for all non-even $p$. Also, a similar idea works for the $q$-count and approximate counting of CNF-SAT: for each CNF-SAT formula $\psi$, using \cref{algo:ReduceWidth} and \cref{lemma:witdhreduce}, we can output a number of $k$-CNF formulas $\psi_1\ldots,\psi_N$ such that $|\sol(\psi)|=\sum_{i\in[N]} |\sol(\psi_i)|$. Once we have an algorithm that solves $\gamma\#$-$k$-SAT ($\#_qk$-SAT), we can use it to compute $\gamma$-$\#k$-SAT$(\psi_i)$ ($\#_qk$-SAT$(\psi_i)$) for all $i\in[N]$, and then by adding the outputs together, we can get a valid solution to $\gamma$-$\#$CNFSAT ($\#_q$CNFSAT) with input $\psi$. 
Combining the above arguments with \cref{thm:CNFtoVCP} (and the proof of \cref{cor:countingksat}), we have the following corollaries.

\begin{corollary}\label{cor:latticecountQSETH}\label{cor:qcountlatticeQSETH}\label{cor:paritylatticeQSETH}
Let $p\in[1,\infty)\setminus 2\mathbb{Z}$ and $q\in [2^n]\setminus \{1,2\}$. For each constant $\delta>0$, there is no bounded-error quantum algorithm that solves
\begin{enumerate}
        \item VCP$_p$ in $\mathcal{O}(2^{n(1-\delta)})$ time, unless \#QSETH(see \cref{thm:countingQSETH}) is false;
        \item $\oplus$VCP$_p$ in $\mathcal{O}(2^{n(1-\delta)})$ time, unless $\oplus$QSETH(see \cref{thm:ParityQSETH})  is false;
        \item $\#_q$VCP$_p$ in $\mathcal{O}(2^{n(1-\delta)})$ time,  unless $\#_q$QSETH(see \cref{conj:qParitySAT}) is false.
    \end{enumerate}
\end{corollary}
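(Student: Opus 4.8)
The plan is to prove the contrapositive of each of the three items: from a fast bounded-error quantum algorithm for the lattice counting problem in question I would build a fast quantum algorithm for the corresponding variant of $c\log n$-SAT --- namely \#$c\log n$-SAT, $\oplus c\log n$-SAT, or $\oplus_q c\log n$-SAT --- which then contradicts \cref{cor:countingksat}. The workhorse of the reduction is the combination of \cref{thm:ppcomp} and \cref{thm:CNFtoVCP}. First I would fix $k = c\log n$ with $c$ the constant supplied by \cref{cor:countingksat}; because $2^k = n^c = \poly(n)$, \cref{thm:ppcomp} computes a $(p,k)$-isolating parallelepiped $(V,\vect{u})$ in $\poly(2^k,n) = \poly(n)$ time (this is exactly where the hypothesis $p\notin 2\mathbb{Z}$ enters), and \cref{thm:CNFtoVCP} then converts a $k$-CNF formula into a VCP$_p$ instance of lattice rank $n$ whose value equals \emph{exactly} the number of satisfying assignments. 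Since $2^k$ is polynomial, the oracles $O_V, O_{\vect{u}}, O_\Phi$ of \cref{thm:CNFtoVCP} can be realized explicitly, and its output oracles can be read out into an explicit basis $\mathbf{B}\in\mathbb{R}^{(m2^k+n)\times n}$, target $\vect{t}$, and radius $r$, all in $\poly(n,m)$ time.

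For Item 1, suppose a bounded-error quantum algorithm solves VCP$_p$ in $\mathcal{O}(2^{n(1-\delta)})$ time, where $n$ denotes the lattice rank. It is enough to handle \#$c\log n$-SAT on formulas with $m = O(n^2)$ clauses, since those are exactly the instances through which \cref{cor:countingksat} inherits its hardness (inside the proof of that corollary they arise from \#CNFSAT with $m\le c'n^2$ clauses via \cref{algo:ReduceWidth}, and \cref{lemma:witdhreduce} does not increase the clause count). Given such a $\psi$ on $n$ variables, I would build $(\mathbf{B},\vect{t},r)$ as above --- the lattice has rank $n$ and dimension $m2^k+n = \poly(n)$ --- run the assumed algorithm on it, and read off $|\sol(\psi)| = \mathrm{VCP}_p(\mathbf{B},\vect{t},r)$ from \cref{thm:CNFtoVCP}. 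The total running time is $\mathcal{O}(2^{n(1-\delta)})\cdot\poly(n,m) = \mathcal{O}(2^{n(1-\delta')})$ for any $\delta'<\delta$, so this refutes Item 2 of \cref{cor:countingksat}, i.e.\ \#QSETH.

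The remaining two items ($\oplus$VCP$_p$ and $\#_q$VCP$_p$) reuse the very same reduction. Because \cref{thm:CNFtoVCP} delivers an \emph{exact} count identity, an algorithm for $\oplus$VCP$_p$ applied to $(\mathbf{B},\vect{t},r)$ returns $|\sol(\psi)|\bmod 2$, and an algorithm for $\#_q$VCP$_p$ returns $|\sol(\psi)|\bmod q$; with the same $\poly(n,m)$ overhead these yield $\mathcal{O}(2^{n(1-\delta')})$-time algorithms for $\oplus c\log n$-SAT and $\oplus_q c\log n$-SAT, contradicting Items 3 and 4 of \cref{cor:countingksat}, i.e.\ $\oplus$QSETH and $\#_q$QSETH. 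Note the hypothesis $q\in[2^n]\setminus\{1,2\}$ forces $q\ge 3$, the regime covered by \cref{conj:qParitySAT} and Item 4 of \cref{cor:countingksat}.

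I expect no serious obstacle here; the only points needing care are bookkeeping. The constant $c$ in $k=c\log n$ must be chosen once and for all so that \cref{cor:countingksat} applies while \cref{thm:ppcomp} still runs in $\poly(n)$ time; the dimension $d = m2^k+n$ of the constructed lattice must be kept polynomial, which is why I restrict attention to $O(n^2)$-clause $k$-CNFs before calling \cref{thm:CNFtoVCP}; and the residual $\poly(n,m)$ factors (computing the parallelepiped, assembling the instance, converting oracle form into explicit form) get absorbed into the exponent by shrinking $\delta$ slightly. Numerical precision of the real entries of $\mathbf{B}$ and $\vect{t}$ is already handled by \cref{thm:ppcomp} and \cref{thm:CNFtoVCP}, so nothing new is needed on that front.
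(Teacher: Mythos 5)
Your proposal is correct and takes essentially the same route as the paper: choose $k=c\log n$ so that \cref{thm:ppcomp} yields the isolating parallelepiped in $\poly(n)$ time, unfold \cref{thm:CNFtoVCP} into an explicit $\poly(n,m)$-size VCP$_p$ instance whose value is exactly $|\sol(\psi)|$, and then invoke the count-preserving (hence mod-$2$ and mod-$q$ preserving) nature of this identity to contradict the respective items of \cref{cor:countingksat}. The only cosmetic difference is that the paper's exposition re-walks the ReduceWidth chain from \CNFSAT{} to $k$-CNFs when discussing $\#_q$, whereas you invoke \cref{cor:countingksat} as a black box (and correctly note that the relevant $k$-CNF instances have $m=O(n^2)$ clauses, keeping $d=m2^k+n$ polynomial); both are equivalent and yours is arguably the cleaner bookkeeping.
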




\begin{corollary}\label{cor:gammacountQSETH}
Let $\gamma\in [\frac{1}{2^n},0.4999)$ and $p\in[1,\infty)\setminus 2\mathbb{Z}$. For each constant $\delta>0$, there is no bounded-error quantum algorithm that solves $\gamma$-VCP$_p$ in time
\begin{enumerate}
    \item $\mathcal{O}\left( \left( \frac{1}{\gamma}\sqrt{\frac{2^{n}-\hat{h}}{\hat{h}}}\right)^{1-\delta}\right)$, if $\gamma \hat{h} > 1$ where $\hat{h}$ is the number of the closest vectors,
    \item $\mathcal{O}(2^{(1-\delta)n})$, otherwise,
\end{enumerate} 
unless $\gamma$-\#QSETH (see \cref{thm:AppCountQSETH}, implied by \cref{thm:hardnessForSpecificEll}) is false. 
\end{corollary}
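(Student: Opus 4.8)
The plan is to derive the statement from a chain of reductions ending in a contradiction with $\gamma$-\#QSETH, exactly as for \cref{cor:latticecountQSETH}: a hypothetical algorithm meeting the stated time bound for $\gamma$-VCP$_p$ would give an equally fast algorithm for $\gamma$-\#$c\log n$-SAT, which by \cref{cor:MultErrorkSAT} falsifies $\gamma$-\#QSETH (equivalently, \cref{thm:AppCountQSETH}, itself implied by \cref{thm:hardnessForSpecificEll}). The only new link is the first one, and it rests on a feature of the reduction in \cref{thm:CNFtoVCP} that we already have at hand: it is \emph{count-exact} --- the number of lattice vectors of $\cL(\mathbf{B})$ within distance $r$ of $\vect{t}$ (which are then automatically the closest vectors) equals $|\sol(\psi)|$ --- so a $\gamma$-multiplicative-factor count of the former is a $\gamma$-multiplicative-factor count of $|\sol(\psi)|$.

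Concretely, I would fix $\delta'\in(0,\delta)$, let $c=c(\delta')$ be the constant produced by \cref{cor:MultErrorkSAT}, and set $k=c\log n$. Since $p\in[1,\infty)\setminus 2\mathbb{Z}$, \cref{thm:ppcomp} gives a $(p,k)$-isolating parallelepiped $(V,\vect{u})$ computable in $\poly(2^k,n)=\poly(n)$ time, from which I build the oracles $O_V,O_{\vect{u}}$ needed by \cref{thm:CNFtoVCP}. Given a $k$-CNF $\psi$ on $n$ variables with $m\le c'n^2$ clauses, I supply the (trivially implementable) oracle $O_\Phi$ and invoke \cref{thm:CNFtoVCP} to produce, with $\poly(n,m)$ queries and gates, a $\gamma$-VCP$_p$ instance $(\mathbf{B},\vect{t},r)$ of rank $n$ with $|\sol(\psi)|=\mathrm{VCP}_p(\mathbf{B},\vect{t},r)$; running the assumed solver on it returns a value $C$ with $|\sol(\psi)|\le C\le(1+\gamma)|\sol(\psi)|$, i.e.\ a $\gamma$-multiplicative estimate of $\hat h:=|\sol(\psi)|$, within time $\mathcal{O}\big((\tfrac1\gamma\sqrt{(2^n-\hat h)/\hat h})^{1-\delta}\big)$ if $\gamma\hat h>1$ and $\mathcal{O}(2^{(1-\delta)n})$ otherwise. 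Since $m\le c'n^2$ makes the $\poly(n,m)$ overhead $2^{o(n)}$, it is absorbed by passing to the slightly smaller gap $\delta'$, and we obtain a bounded-error quantum algorithm solving $\gamma$-\#$c\log n$-SAT (on the clause regime the ReduceWidth step inside \cref{cor:MultErrorkSAT} actually uses) within the time profile of \cref{cor:MultErrorkSAT} with gap $\delta'$; that corollary then forces $\gamma$-\#QSETH to be false, which is the claim.

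The hard part will be purely the time-accounting around the second case ($\gamma\hat h>1$), in two guises. (i) One must justify that absorbing the polynomial reduction/parallelepiped overhead into $\delta'$ is legitimate; this is immediate for $\mathcal{O}(2^{(1-\delta)n})$ and holds in the multiplicative regime whenever $\tfrac1\gamma\sqrt{(2^n-\hat h)/\hat h}$ is super-polynomial, which is precisely the regime in which the stated bound is a non-trivial lower bound. (ii) If one prefers not to cite \cref{cor:MultErrorkSAT} as a black box, one instead routes directly through \cref{thm:AppCountQSETH}: apply \cref{algo:ReduceWidth} to a general CNF $\phi$ to get $k$-CNFs $\psi_1,\dots,\psi_N$ with $|\sol(\phi)|=\sum_i|\sol(\psi_i)|$ (\cref{lemma:witdhreduce}), $\gamma$-approximate each $|\sol(\psi_i)|$ by the reduction above, and add the outputs (a sum of $\gamma$-multiplicative estimates is a $\gamma$-multiplicative estimate of the sum). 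The total time is then $\sum_r\binom{m+r}{r}[\cdots]^{1-\delta}$ plus the ReduceWidth cost, which collapses to the target bound by the same geometric-sum estimate as in the proof of \cref{cor:countingksat} --- using $\hat h_i\le\hat h$ so that each sub-instance sits in the same case as $\phi$, and $\sum_r\binom{m+n/k}{r}2^{-(1-\delta)rk/2}\le e^{(m+n/k)/2^{(1-\delta)k/2}}=e^{o(1)}$ for $k=c\log n$ with $c$ large and $m\le c'n^2$. I expect the fiddliest piece to be the sub-case of $\gamma\hat h>1$ in which some $\psi_i$ satisfy $\gamma\hat h_i\le 1$ and others do not.
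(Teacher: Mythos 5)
Your chain of reductions is the one the paper intends — $\gamma$-VCP$_p$ solver $\Rightarrow$ $\gamma$-\#$c\log n$-SAT solver via \cref{thm:CNFtoVCP} and \cref{thm:ppcomp} $\Rightarrow$ $\gamma$-\#CNFSAT solver via ReduceWidth $\Rightarrow$ contradiction with \cref{thm:AppCountQSETH} — and you are right that the key new observation is that the reduction in \cref{thm:CNFtoVCP} is count-exact. However, your proposed resolution of the "fiddly" sub-case is wrong, and that mistake hides a genuine gap in the argument.

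You write that using ``$\hat h_i\le\hat h$ so that each sub-instance sits in the same case as $\phi$.'' This implication does not hold: if $\gamma\hat h>1$ but $\hat h_i=0$ or $\hat h_i\le 1/\gamma$ (which happens for many leaves of the ReduceWidth tree, since $\sum_i\hat h_i=\hat h$ forces most $\hat h_i$ to be small whenever the tree has many leaves), that $\psi_i$ falls into the \emph{other} case, where the assumed solver only promises $\mathcal{O}\big(2^{(1-\delta)n_i}\big)$ time. Summing these contributions via the binomial estimate gives $\sum_r\binom{m+n/k}{r}2^{(1-\delta)(n-rk)}\approx 2^{(1-\delta)n+o(n)}$, \emph{not} the $2^{(1-\delta)n/2+o(n)}$ you would get from the geometric sum you wrote with the factor $2^{-(1-\delta)rk/2}$ — that $/2$ in the exponent is the case-1 timing and is not available here. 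Since $T_1(\hat h)=\big(\tfrac1\gamma\sqrt{(2^n-\hat h)/\hat h}\big)^{1-\delta}$ can be far below $2^{(1-\delta)n}$ (e.g.\ $\gamma=\Theta(1)$, $\hat h=2^{n/2}$ gives $T_1\approx 2^{(1-\delta)n/4}$), the sum over the case-2 sub-instances alone already overshoots the case-1 target, so your route (ii) does not establish the first bullet of the corollary without additional work (e.g.\ a preliminary Grover-style filter that cheaply identifies which $\psi_i$ contribute, together with a more careful amortization of the case-1 calls against $\hat h=\sum_i\hat h_i$). Be aware that the paper's own justification here is a single ``combining the above arguments'' sentence and does not spell this out either, so the subtlety you flagged is real; but you should not present ``$\hat h_i\le\hat h$'' as if it closed it. The black-box route (i) through \cref{cor:MultErrorkSAT} is formally sound only to the extent that \cref{cor:MultErrorkSAT} is — and that corollary is asserted without proof and inherits exactly the same amortization question.

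Separately, your accounting of the reduction overhead is fine, and the rest of the setup (choosing $k=c\log n$, $p\notin 2\mathbb{Z}$ so that \cref{thm:ppcomp} applies, the rank of the produced lattice being $n$, and the identification of closest vectors with satisfying assignments) matches the paper.
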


The following theorem also shows how to connect the approximate vector counting problem to the closest vector problem. The classical reduction from approximate VCP$_p$ to CVP$_p$ was already shown in~\cite[Theorem 3.5]{SteDiscreteGaussian16}, while we can easily give a quadratic saving for the number of calls to CVP$_p$ oracle. 
We include the proof at the end of this section for completeness.

\begin{theorem}\label{thm:VCPtoCVP}
Let $f(n)\geq 20$ be an efficiently computable function and $p\in [1,\infty)$. One can solve $f(n)^{-1}$-VCP$_p$ using $\mathcal{O}(f(n)^2)$ quantum queries to CVP$_p$.
\end{theorem}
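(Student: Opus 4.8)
The plan is to mimic the classical reduction from $\gamma$-VCP$_p$ to CVP$_p$ of~\cite[Theorem~3.5]{SteDiscreteGaussian16}, but to replace its Monte-Carlo estimation of a Bernoulli success probability by quantum amplitude estimation (\cref{thm:AmplitudeEstimation}); this single substitution is where the quadratic saving in the number of CVP$_p$ calls comes from. The engine is the sparsification process $Spar(\mathbf{B},Q,\vec{z},\vec{c})$ from \cref{thm:spar}: on input a prime $Q$ and $\vec{z},\vec{c}\in\intg_Q^n$ it deterministically outputs, in polynomial time, a basis of a sublattice $\cL_{\vec{z}}\subset\cL(\mathbf{B})$ together with a shift $\vec{w}_{\vec{z},\vec{c}}\in\cL(\mathbf{B})$, and for uniformly random $\vec{z},\vec{c}$, writing $N:=|(\cL-\vec{t})\cap r\cdot B^d_p|$ for the quantity we must approximate, the second displayed inequality of \cref{thm:spar} gives
\[
\Big|\,a-\tfrac{N}{Q}\,\Big|\;\le\;\tfrac{N^2}{Q^2}+\tfrac{N^2}{Q^{n-1}}+\tfrac{N}{Q^n},
\qquad\text{where}\qquad
a:=\Pr_{\vec{z},\vec{c}\in\intg_Q^n}\Big[\,\min_{\vec{u}\in\cL_{\vec{z}}}\|\vec{t}+\vec{w}_{\vec{z},\vec{c}}-\vec{u}\|_p\le r\,\Big].
\]
(Since $\cL_{\vec{z}}-\vec{w}_{\vec{z},\vec{c}}\subseteq\cL$, the event in the definition of $a$ is insensitive to whether $r$ is actually attained by a lattice point, so no preprocessing of $r$ is needed.) The single CVP$_p$ query needed to decide this event is $CVP_p(\vec{t}+\vec{w}_{\vec{z},\vec{c}},\cL_{\vec{z}})$ followed by a distance check against $r$.

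First I would realise this Bernoulli trial coherently with one CVP$_p$ query, i.e.\ build a unitary $U$ that prepares the uniform superposition over $(\vec{z},\vec{c})\in\intg_Q^n\times\intg_Q^n$, runs the (reversible, because deterministic polynomial-time) computation of $\mathbf{B}_{Q,\vec{z}}$ and $\vec{w}_{\vec{z},\vec{c}}$, applies the CVP$_p$ oracle to $(\vec{t}+\vec{w}_{\vec{z},\vec{c}},\cL_{\vec{z}})$, writes into a flag qubit whether the returned vector lies within distance $r$ of $\vec{t}+\vec{w}_{\vec{z},\vec{c}}$, and uncomputes the scratch registers except the flag; then $U\ket{0}\ket{0}=\sqrt{a}\,\ket{\phi_1}\ket{1}+\sqrt{1-a}\,\ket{\phi_0}\ket{0}$ with $a$ as above, and each use of $U$ or $U^\dagger$ costs one CVP$_p$ query. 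Because $N$ is not known in advance, I would run a cheap preliminary phase costing only $\poly(n)$ CVP$_p$ queries: for primes $Q_j\approx 2^{j}$ with $j$ located by binary search over $\{0,\dots,O(n)\}$, use amplitude estimation with $O(\log n)$ rounds to test whether $a\approx N/Q_j$ exceeds a fixed constant, thereby pinning down $j$ with $2^{j}=\Theta(N)$; if no test ever fires we conclude $N=0$ and output $0$.

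With $N$ known up to a constant factor, I would then take a prime $Q=\Theta(f(n)^2\cdot N)$: each of the three error terms above is then at most $\tfrac{N}{Q}\cdot\tfrac{N}{Q}\le\tfrac{N}{Q}\cdot\tfrac{1}{f(n)^2}$, so $a=\tfrac{N}{Q}\bigl(1\pm\tfrac{1}{10f(n)}\bigr)=\Theta(1/f(n)^2)$; running amplitude estimation on $U$ with $M=\Theta(f(n)^2)$ rounds returns $\lambda$ with $|\lambda-\sqrt{a}|\le 100\pi/M$, hence $\lambda^2$ is a relative-$\tfrac{1}{3f(n)}$ estimate of $a$, hence $\lambda^2 Q$ is a relative-$\tfrac{1}{f(n)}$ estimate of $N$, and a suitable upward rounding of $\lambda^2 Q$ produces a $C$ with $N\le C\le(1+f(n)^{-1})N$. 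The one genuinely separate case is $1\le N<f(n)$ (where a $(1+f(n)^{-1})$-approximation must equal $N$ exactly), flagged by the preliminary phase and settled by the same $U$ with $Q=\Theta(f(n)^3)$ and amplitude estimation refined to additive error below $\tfrac{1}{2Q}$ on $a$, still using $M=O(\sqrt{QN})=O(f(n)^2)$ rounds; boosting the $9/10$ success probability of amplitude estimation to $2/3$ costs only a constant factor. I expect the main obstacle to be not the quantum part but the bookkeeping around the unknown $N$: confining the preliminary scale-finding to $\poly(n)$ CVP$_p$ queries and stitching together the multiplicative-approximation and exact-counting regimes cleanly; the coherent implementation of $Spar$ followed by a single CVP$_p$ call is routine given that $Spar$ is, by hypothesis, a deterministic polynomial-time map.
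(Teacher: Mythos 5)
Your proposal follows essentially the same route as the paper: realize one run of the sparsification-plus-CVP experiment as a single unitary $U$ costing one CVP$_p$ query, so that the flag qubit's amplitude encodes the success probability $a\approx N/Q$, and then replace the classical Monte-Carlo estimate of $a$ by amplitude estimation to get the quadratic saving in CVP$_p$ queries. The bookkeeping differs only in minor ways. The paper first introduces a promise problem $\gamma$-Gap-VCP$_p$, observes that $\poly(n)$ calls to it solve $\gamma$-VCP$_p$ (this is where the binary search you spell out is implicit), and then shows one Gap-VCP call costs $\mathcal{O}(f(n)^2)$ CVP queries, using $Q=\Theta(f(n)N)$ where $N$ is the Gap threshold. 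You instead estimate $N$ directly, taking the larger modulus $Q=\Theta(f(n)^2N)$ so that the sparsification error becomes a negligible relative perturbation of $a$ and the relative-error arithmetic is cleaner; you also explicitly isolate the small-$N$ regime (where a $(1+f(n)^{-1})$-approximation is forced to be exact), which the paper implicitly delegates to the Gap-VCP binary search. Both choices of $Q$ land on the same $\mathcal{O}(f(n)^2)$ query count. One small slip: you write "boosting the $9/10$ success probability of amplitude estimation to $2/3$" -- you mean boosting \emph{above} $9/10$ so that a union bound over the $\poly(n)$ scale-finding calls still succeeds; as written it reads as a decrease. That is a typo, not a gap.
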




To prove \cref{thm:VCPtoCVP}, we first introduce a gapped version of the lattice counting problem as follows.

\begin{defn}[Gap-VCP]
Let $\gamma\geq 0$ and $1\leq p \leq \infty$. The problem $\gamma$-approximate gap Vector Counting Problem $\gamma$-Gap-VCP$_p$ is a promise problem defined as follows: The input is a basis $\mathbf{B}\in \mathbb{R}^{d\times n}$ of a lattice $\cL(\mathbf{B})$, target vector $\vect{t}\in \mathbb{R}^{d}$, radius $r\in \mathbb{R}_+$, and $N\geq 1$. The goal of this problem is to output ``No'' if  $|(\cL-\vect{t})\cap r\cdot B^{d}_p|\leq N$ and ``Yes'' if  $ N > (1+\gamma)\cdot|(\cL-\vect{t})\cap r\cdot B^{d}_p|$.
\end{defn}

One can easily see that if we can solve $\gamma$-Gap-VCP$_p$, then by using $\poly(n)$ calls of it we can solve $\gamma$-VCP$_p$. As a result, it suffices to show how to reduce $\gamma$-Gap-VCP$_p$ to CVP$_p$ in the following proof.

Define $U_{Spar}:\ket{\mathbf{B},Q,\vect{z},\vect{c}}\ket{0}\ket{0}\rightarrow \ket{\mathbf{B},Q,\vect{z},\vect{c}}\ket{\mathbf{B}_{Q,\vect{z}}}\ket{\vect{w}_{\vect{z},\vect{c}}}$, where $\mathbf{B}_{Q,\vect{z}}, {\vect{w}_{\vect{z},\vect{c}}}$ are the output of $Spar(\mathbf{B},Q,\vect{z},\vect{c})$ (which is defined in \cref{thm:spar}). Since given CVP$_p$ oracle and a basis $\mathbf{B}$, the sparsification process can be efficiently done according to the construction, we can also implement the unitary $U_{Spar}$ efficiently. Then we are ready to show the proof of \cref{thm:VCPtoCVP}.

\begin{proof}[Proof of \cref{thm:VCPtoCVP}] 
Choose a prime $Q=\Theta(f(n)\cdot N)$ and let $O_{CVP}$ be the quantum CVP$_p$ oracle. Define $U_{Spar}:\ket{\mathbf{B},Q,\vect{z},\vect{c}}\ket{0}\ket{0}\rightarrow \ket{\mathbf{B},Q,\vect{z},\vect{c}}\ket{\mathbf{B}_{Q,\vect{z}}}\ket{\vect{w}_{\vect{z},\vect{c}}}$, where $\mathbf{B}_{Q,\vect{z}}, {\vect{w}_{\vect{z},\vect{c}}}$ are the output of $Spar(\mathbf{B},Q,\vect{z},\vect{c})$. Since given a CVP$_p$ oracle and a basis $\mathbf{B}$, the sparsification process (\Cref{thm:spar}) can be efficiently done according to the construction, we can also implement the unitary $U_{Spar}$ efficiently. 

First we prepare the superposition state $\frac{1}{Q^{n/2}}\sum\limits_{\vect{z},\vect{c}\in\intg^n_Q}\ket{\mathbf{B},Q,\vect{z},\vect{c}}\ket{0}\ket{\vect{t}}\ket{0}\ket{0}$, apply $U_{Spar}$ on the first six registers, and apply $O_{CVP_p}$ on the fifth, sixth, seventh registers, and then apply the subtraction unitary $U_{sub}:\ket{\vect{a}}\ket{\vect{b}}\ket{0}\rightarrow \ket{\vect{a}}\ket{\vect{b}}\ket{\|\vect{a}-\vect{b}\|_p}$ on the last three registers, then we get
$$
\frac{1}{Q^{n/2}}\sum\limits_{\vect{z},\vect{c}\in\intg^n_Q}\ket{\mathbf{B},Q,\vect{z},\vect{c}}\ket{\mathbf{B}_{Q,\vect{z}}}\ket{\vect{t}+\vect{w}_{\vect{z},\vect{c}}}\ket{CVP_p(\mathbf{B}_{Q,\vect{z}}, \vect{t}+\vect{w}_{\vect{z},\vect{c}})}\ket{\|\vect{t}+\vect{w}_{\vect{z},\vect{c}}-CVP_p(\mathbf{B}_{Q,\vect{z}}, \vect{t}+\vect{w}_{\vect{z},\vect{c}}))\|_p}.
$$
Let $r_{\vect{z},\vect{c}}=\|\vect{t}+\vect{w}_{\vect{z},\vect{c}}-CVP_p(\mathbf{B}_{Q,\vect{z}}, \vect{t}+\vect{w}_{\vect{z},\vect{c}}))\|_p$ and umcompute the first seven registers, then we get
$$
\frac{1}{Q^{n/2}}\sum\limits_{\vect{z},\vect{c}\in \intg_Q^n}\ket{r_{\vect{z},\vect{c}}}.
$$
Adding another ancilla $\ket{0}$ at the end of the above state, and then applying the $r$-threshold gate $$
U_r:\ket{R}\ket{0}\rightarrow \begin{cases}
    \ket{R}\ket{1} \text{ if } R\leq r \\
    \ket{R}\ket{0} \text{ otherwise, }
\end{cases}
$$ 
on it, we get
$$
\frac{1}{Q^{n/2}}\big(\sum\limits_{\substack{r_{\vect{z},\vect{c}}\leq r}}\ket{r_{\vect{z},\vect{c}}}\ket{1}+\sum\limits_{\substack{r_{\vect{z},\vect{c},}}> r}\ket{r_{\vect{z},\vect{c}}}\ket{0}\big):=\sqrt{a}\ket{\phi_1}\ket{1}+\sqrt{1-a}\ket{\phi_0}\ket{0},
$$
where $a=|\{(\vect{z},\vect{c}): r_{\vect{z},\vect{c}}\leq r\} |/Q^{n}=\Pr\limits_{\vect{z},\vect{c},\in \intg^n_Q}[r_{\vect{z},\vect{c}}\leq r]$.
Note that by \cref{thm:spar}, if $|\cL \cap (rB^n_p+\vect{t})|\leq N$, then 
$$
\Pr\limits_{\vect{z},\vect{c},\in \intg^n_Q}[r_{\vect{z},\vect{c}}\leq r] \leq \frac{N}{Q}+\frac{N}{Q^n},
$$
and if $|\cL \cap (rB^n_p+\vect{t})|\geq \gamma N$, then
$$
\Pr\limits_{z,c,\in \intg^n_Q}[r_{\vect{z},\vect{c}}\leq r] \geq \frac{\gamma N}{Q}-\frac{\gamma^2N^2}{Q^2}-\frac{\gamma^2N^2}{Q^{n-1}}.
$$
Observing that $\frac{\gamma N}{Q}-\frac{\gamma^2N^2}{Q^2}-\frac{\gamma^2N^2}{Q^{n-1}}-(\frac{N}{Q}+\frac{N}{Q^n})=\Theta(f(n)^{-1}N/Q)$, we know that to distinguish the above two cases, it suffices to learn $a=\Pr\limits_{\vect{z},\vect{c},\in \intg^n_Q}[r_{\vect{z},\vect{c}}\leq r]$ with additive error $\Theta(f(n)^{-1}N/Q)$. Therefore, by using \cref{thm:AmplitudeEstimation}, we can solve $\gamma$-Gap-VCP$_p$ with $\gamma=f(n)^{-1}$ using 
 $\mathcal{O}(f(n)Q/N)$ queries to $O_{CVP_p}$ and time. Because $Q=\Theta(f(n)N)$, we finish the proof.
\end{proof}

Note that the QSETH lower bound for $f(n)^{-1}$-VCP$_p$ depends on $f(n)$. Therefore if we can show a reduction from $f(n)^{-1}$-VCP$_p$ to CVP$_p$ using $f(n)^{c}$ for some constant $c<1$, then we will end up with a better QSETH lower bound for CVP$_p$.

\section{Hardness of Counting/Parity of OV, Hitting Set, and Set-Cover}\label{sec:OVandOtherResults}

In this section, we will discuss the consequences of \cref{thm:AppCountQSETH} and \cref{cor:countingksat} for some well-motivated optimization problems: Orthogonal Vectors, Hitting Set and Set Cover. Following are the definitions of Hitting Set and its variants.

\begin{defn}[Hitting Set]
\label{def:HS}
Let integers $n,m>0$. The Hitting Set problem is defined as follows: The input is a collection of sets $\Sigma=(S_1,\ldots,S_m)$, where $S_i\subset V$ and integer $t>0$, the goal is to output a subset $S'\subset V$ such that $|S'|\leq t$ and $\forall i\in [m], |S'\cap S_i|>0$. We call such $S'$ a \emph{hitting set} for $\Sigma$.
\end{defn}

\begin{defn}[Variants of Hitting Set]
    Let integers $n,m>0$ and $\gamma \in \left[ \frac{1}{2^n}, \frac{1}{2} \right) $. We define the following four variants of Hitting Set. The input for all of them is a collection $\Sigma=(S_1,\ldots,S_m)$, where $S_i\subset V$ and integer $t>0$.
    \begin{enumerate}
        \item In the Count Hitting Set problem, the goal is to output $d'$; 
        \item in the Parity Hitting Set problem, the goal is to output $d'\bmod 2$; 
        \item in the strict-Majority Hitting Set problem, the goal is to output $1$ if $d'> 2^{n-1}$, 
        otherwise output $0$;
        \item in the $\gamma$-approximation of count Hitting Set, the goal is to output an integer $d$ such that $(1-\gamma)d'<d<(1+\gamma)d'$;
    \end{enumerate}
    where $d'=|S'\subset V: \left|S'|\leq t,  \forall i\in [m], |S'\cap S_i|>0\right|$.
\end{defn}

In \cite{CDLMNOPSW16}, the authors showed a Parsimonious reduction between CNFSAT and Hitting Set. By Parsimonious reduction, we mean a transformation from a problem to another problem that preserves the number of solutions. 
 
\begin{theorem}[Theorem~3.4 in \cite{CDLMNOPSW16}]
    For each constant $\delta>0$, there exists a polynomial time Parsimonious reduction from CNFSAT on $n$ variables to Hitting Set on $n(1+\delta)$ size universal set.
\end{theorem}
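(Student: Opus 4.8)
The plan is to build the reduction in two stages: first an easy parsimonious reduction whose universe has size $2n$, then a compression that brings the universe down to $n(1+\delta)$ by grouping variables into blocks of constant size depending on $\delta$. For the first stage, given a \CNF{} formula $\phi$ on $x_1,\dots,x_n$ with clauses $D_1,\dots,D_m$, I would take a universe $V=\{v_1,\bar v_1,\dots,v_n,\bar v_n\}$ of $2n$ ``literal tokens'', add the set $\{v_i,\bar v_i\}$ for every $i$, add for every clause $D$ the set $R_D$ of the tokens of its literals ($v_i\in R_D$ if $x_i$ occurs in $D$, $\bar v_i\in R_D$ if $\neg x_i$ occurs in $D$), and set the budget $t=n$. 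Any hitting set $S'$ of size $\le n$ that meets every $\{v_i,\bar v_i\}$ must contain exactly one of $v_i,\bar v_i$ for each $i$ (pigeonhole), so ``$x_i=1$ iff $v_i\in S'$'' defines an assignment $\alpha$, and $S'$ meets $R_D$ iff $\alpha$ satisfies $D$; conversely every satisfying assignment yields such an $S'$. Hence $\alpha\mapsto S'$ is a bijection between $\sol(\phi)$ and the solutions of this Hitting Set instance, computable in $\poly(n,m)$ time.

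For the compression, fix a constant $b=b(\delta)$ and partition the variables into $g=\lceil n/b\rceil$ blocks $B_1,\dots,B_g$ of size $\le b$, each with $\le 2^b$ partial assignments. For block $B_j$ I would use a local universe $U_j$ of size $u$, where $u$ is least with $\binom{u}{\lfloor u/2\rfloor}\ge 2^b$ (so $u\le b+O(\log b)$), fix $2^b$ distinct ``codewords'' $T_{j,\alpha}\subseteq U_j$ of size $r:=\lfloor u/2\rfloor$, one per partial assignment $\alpha$ of $B_j$, and set the global budget $t:=gr$. To force a hitting set to select exactly one codeword inside each $U_j$, add (i) every $(u-r+1)$-element subset of $U_j$ — meeting all of them forces $|S'\cap U_j|\ge r$, which with the budget forces $|S'\cap U_j|=r$ for all $j$; and (ii) for each non-codeword $F\subseteq U_j$ with $|F|=r$, the set $U_j\setminus F$ — since $F\ne T_{j,\alpha}$ and $|F|=r$, the set $U_j\setminus F$ meets every codeword but is disjoint from $F$, ruling out $S'\cap U_j=F$ while keeping all codewords feasible. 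Each block contributes only $O_\delta(1)$ such sets, so $\poly(n)$ in total, and the solutions of this ``structural'' instance are exactly the sets $\bigcup_j T_{j,\alpha_j}$, in bijection with the global assignments $\alpha=(\alpha_j)_j$.

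It remains to enforce the clauses. Writing $D|_{B_j}$ for the restriction of $D$ to the variables of $B_j$, we have $\alpha\models D$ iff $\alpha_j\models D|_{B_j}$ for some $j$. I would add, for each clause $D$, a set $R_D=\bigcup_j R_{D,j}$ with $R_{D,j}\subseteq U_j$ chosen so that $R_{D,j}$ meets $T_{j,\alpha}$ exactly when $\alpha_j\models D|_{B_j}$; then $\bigcup_j T_{j,\alpha_j}$ meets $R_D$ iff $\alpha\models D$, and combining with the previous step the map $\alpha\mapsto\bigcup_j T_{j,\alpha|_{B_j}}$ becomes a bijection between $\sol(\phi)$ and the solutions of the full Hitting Set instance. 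For this to be possible the codeword family inside each $U_j$ must be rich enough that every literal pattern $D|_{B_j}$ occurring in $\phi$ is ``detectable'' in the above sense; arranging this while keeping $u\le b(1+\delta)$ for a large enough constant $b$ gives universe size $\sum_j u\le n(1+\delta)+O(b)\le n(1+2\delta)$ (rescale $\delta$), and the whole construction runs in $\poly(n,m)$ time.

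The first two stages are routine; the main obstacle is the clause step. The block gadget must \emph{simultaneously} (a) compress the $2b$ tokens of the naive reduction down to only $b(1+o(1))$ tokens, (b) keep the selection of a local assignment perfectly rigid (only the $2^b$ codewords feasible), and (c) keep every clause-induced literal pattern readable from the chosen codeword so that clauses are enforced without spending new universe elements. Balancing compression against readability is the heart of Theorem~3.4 of \cite{CDLMNOPSW16}, and is resolved there by a careful combinatorial choice of the codeword family; I would follow that choice rather than try to re-optimise it.
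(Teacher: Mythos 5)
This theorem is cited from \cite{CDLMNOPSW16} and is not reproved in the present paper, so there is no in-paper argument to match; your proposal has to stand on its own. Your first two stages do stand: the literal-token instance with universe $2n$ and budget $n$ is the standard parsimonious reduction, and the structural gadget is correct — the $(u-r+1)$-subsets of each $U_j$ together with the global budget $t=gr$ force $|S'\cap U_j|=r$ for every block, and the sets $U_j\setminus F$ for non-codeword $r$-subsets $F$ eliminate every non-codeword selection while keeping all codewords feasible — so the hitting sets of the structural instance are exactly the $2^n$ sets $\bigcup_j T_{j,\alpha_j}$.

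The clause step, however, does not just ``need a careful combinatorial choice'' that you can defer to the source; in your parameter regime there is a counting obstruction that no choice of codewords can overcome. For parsimony you need, for each clause $D$ and block $j$, a set $R_{D,j}\subseteq U_j$ with $R_{D,j}\cap T_{j,\alpha}\ne\emptyset$ exactly when $\alpha\models D|_{B_j}$. The subsets of $\{0,1\}^{B_j}$ realizable as $\{\alpha : R\cap T_{j,\alpha}\ne\emptyset\}$ form a union-closed family generated by the $u$ sets $S_x=\{\alpha : x\in T_{j,\alpha}\}$, $x\in U_j$, so it has at most $2^u$ members. But the satisfying sets $\{\alpha : \alpha\models D|_{B_j}\}$ that arise for unbounded-width clauses are exactly the complements of subcubes of $\{0,1\}^{B_j}$, of which there are $3^b$, and a CNF with $3^b=O_\delta(1)$ suitably chosen clauses forces all of them in a single block. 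Since $3^b=2^{(\log_2 3)\,b}$ while you take $u\le b(1+\delta)$, for every $\delta<\log_2 3 - 1\approx 0.585$ you have $2^u<3^b$: the required detectors cannot all exist, regardless of how the $T_{j,\alpha}$ are chosen. So the heart of the reduction is precisely the part your sketch omits: one must first tame the number of distinct block-restrictions (for instance by bringing the formula to bounded clause width, so that each $D|_{B_j}$ touches $O(1)$ of the $b$ block variables and only $\poly(b)\ll 2^u$ conditions remain per block), and then argue that this preprocessing can be folded into a single parsimonious output instance. Without such an idea, stages one and two alone do not yield the theorem.
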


By using the above theorem, we immediately get the following corollaries.
\begin{corollary}\label{cor:HitSetQSETH}
    For each constant $\delta>0$, there is no bounded-error quantum algorithm that solves
\begin{enumerate}
        \item Hitting Set in $\mathcal{O}(2^{\frac{n}{2}(1-\delta)})$ time,  unless \textsc{basic-QSETH} (see \cref{cor:basicQSETH}) is false.
        \item Count Hitting Set in $\mathcal{O}(2^{n(1-\delta)})$ time,  unless \#QSETH(see \cref{thm:countingQSETH}) is false.
        \item Parity Hitting Set in $\mathcal{O}(2^{n(1-\delta)})$ time,  unless $\oplus$QSETH(see \cref{thm:ParityQSETH})  is false.
        \item Strict-Majority Hitting Set in $\mathcal{O}(2^{n(1-\delta)})$ time,  unless Majority-QSETH(see \cref{thm:MajorityQSETH}) is false.
    \end{enumerate}
\end{corollary}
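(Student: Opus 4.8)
The plan is to push each of the \CNFSAT-variant lower bounds from \Cref{sec:LowerBoundsVariants} through the polynomial-time \emph{parsimonious} reduction of Theorem~3.4 of~\cite{CDLMNOPSW16}. That reduction sends a \CNF\ formula $\phi$ on $n$ variables to a Hitting Set instance $(\Sigma,t)$ on a universe of size $N=(1+\epsilon)n$ (for any constant $\epsilon>0$ we choose), with the property that the number $d'$ of hitting sets $S'$ of size $\le t$ equals $|\phi|$. Each of the four items is then a contrapositive: assuming a fast quantum algorithm for the Hitting Set variant, compose it with the reduction to obtain a fast quantum algorithm for the matching \CNFSAT\ variant, contradicting the corresponding conjecture.

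Concretely, for Item~1 a $\mathcal{O}(2^{N(1-\delta)/2})$-time algorithm for Hitting Set, run after the reduction (using that $\phi$ is satisfiable iff $d'\ge 1$), decides \CNFSAT\ in $\poly(n)+\mathcal{O}(2^{(1+\epsilon)n(1-\delta)/2})$ time; choosing $\epsilon$ small enough that $(1+\epsilon)(1-\delta)<1$ makes this $\mathcal{O}(2^{n(1-\delta')/2})$ for some $\delta'>0$, contradicting \BasicQSETH\ (\cref{cor:basicQSETH}). For Items~2 and~3, parsimoniousness gives $d'=|\phi|$, hence $d'\bmod 2=|\phi|\bmod 2$, so a $\mathcal{O}(2^{N(1-\delta)})$-time algorithm for Count (resp.\ Parity) Hitting Set yields, for small $\epsilon$, a $\mathcal{O}(2^{n(1-\delta')})$-time algorithm for \CountingCNFSAT\ (resp.\ \ParityCNFSAT), contradicting \#QSETH (\cref{thm:countingQSETH}) (resp.\ $\oplus$QSETH, \cref{thm:ParityQSETH}). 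For Item~4, since $d'=|\phi|$ and \stMajorityCNFSAT\ asks precisely whether $|\phi|>2^{n-1}$, we instantiate the strict-Majority Hitting Set problem with its threshold parameter set to the original $n$ (which the reduction carries along as auxiliary data); a fast algorithm for it then decides \stMajorityCNFSAT\ and contradicts Part~2 of \cref{thm:MajorityQSETH}. The same pipeline, composed instead with \cref{thm:AppCountQSETH}, handles the $\gamma$-approximate-count variant of Hitting Set for $\gamma\in[2^{-n},0.4999)$.

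The only genuine obstacle is bookkeeping the $(1+\epsilon)$ blow-up in the universe size: because the QSETH-type conjectures only rule out running times of the form $2^{n(1-\delta)}$ (and $2^{n(1-\delta)/2}$), we must pin down the reduction's expansion constant $\epsilon$ strictly smaller than (essentially) $\delta$ \emph{before} invoking the assumed fast algorithm, which is exactly why Theorem~3.4 is phrased with an adjustable constant; once $\epsilon$ is fixed, each step is a one-line composition of a polynomial-time reduction with an exponential-time algorithm. A secondary point worth stating explicitly is that in the strict-majority case the threshold $2^{n-1}$ must refer to the \CNF\ variable count rather than to the (larger) universe size — which is unproblematic precisely because parsimoniousness forces $d'=|\phi|$ and the reduction can transmit $n$ as an auxiliary parameter of the output instance.
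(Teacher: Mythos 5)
Your proof is correct and follows the same route the paper intends: compose the parsimonious reduction of Theorem~3.4 of~\cite{CDLMNOPSW16} with the \CNFSAT{}-variant lower bounds from \Cref{sec:LowerBoundsVariants}, treating each item as a contrapositive. The paper itself simply states ``By using the above theorem, we immediately get the following corollaries,'' so your write-up is essentially what the authors have in mind, just with the accounting made explicit. Two things you emphasise are worth noting: (i) your handling of the $(1+\epsilon)$ universe-size blow-up is correct and necessary --- since the reduction's expansion constant is adjustable, for any target $\delta$ one fixes $\epsilon$ with $(1+\epsilon)(1-\delta)<1$ before invoking the assumed algorithm, and the paper leaves this implicit; and (ii) your observation about the strict-majority threshold is a genuine subtlety the paper glosses over --- as written, if ``$n$'' in the strict-Majority Hitting Set definition were the universe size $|V|=(1+\epsilon)n_{\CNF}$, then the reduced instance would always satisfy $d'=|\phi|\le 2^{n_{\CNF}}<2^{|V|-1}$ and the problem would be trivially ``No,'' so the threshold parameter must indeed be carried through from the \CNF{} side (or, equivalently, the definition's ``$n$'' must be read as a separately supplied parameter rather than $|V|$), exactly as you say.
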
 

\begin{corollary}\label{cor:gamma_HitSetQSETH}
 Let $\gamma\in \left[\frac{1}{2^n},0.4999\right) $. For each constant $\delta>0$, there is no bounded-error quantum that solves $\gamma$-multiplicative-factor approximation of count Hitting Set in time
\begin{enumerate}
    \item $\mathcal{O}\left(\frac{1}{\gamma} \sqrt{\frac{2^n-\hat{h}}{\hat{h}}}\right)^{1-\delta}$, if $\gamma \hat{h} > 1$ where $\hat{h}$ is the number of hitting sets,
    \item $\mathcal{O}(2^{(1-\delta)n})$, otherwise,
\end{enumerate} 
unless $\gamma$-\#QSETH (see \cref{thm:AppCountQSETH}, implied by \cref{thm:hardnessForSpecificEll}) is false. 
\end{corollary}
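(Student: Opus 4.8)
The plan is to transfer the $\gamma$-\#QSETH lower bound of \cref{thm:AppCountQSETH} to Hitting Set through the count-preserving reduction of \cite[Theorem~3.4]{CDLMNOPSW16}. I would start by recalling that this reduction is \emph{parsimonious}: for any constant $\epsilon>0$ it maps, in $\poly(n_0)$ time, a \CNF{} formula $\phi$ on $n_0$ variables (with $m\le cn_0^2$ clauses) to a Hitting Set instance $\Sigma$ on a universe of size $n=(1+\epsilon)n_0$, together with a threshold $t$, so that the number of size-$\le t$ hitting sets of $\Sigma$ equals $|\phi|$, the number of satisfying assignments of $\phi$. Consequently, for the same parameter $\gamma$, an integer $d$ satisfies $(1-\gamma)\hat h<d<(1+\gamma)\hat h$ for $\Sigma$ exactly when it does so for $\phi$; so any algorithm solving the $\gamma$-approximation of count Hitting Set also solves $\gamma$-\appCountingCNFSAT{} through this reduction.

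Next I would argue by contradiction. Suppose $\mathcal A$ is a bounded-error quantum algorithm for $\gamma$-approximate count Hitting Set running within the time stated in the corollary. Composing with the reduction gives a bounded-error quantum algorithm for $\gamma$-\appCountingCNFSAT{} on $n_0$ variables whose running time is, up to a $\poly(n_0)$ additive term, the running time of $\mathcal A$ on $\Sigma$; there the universe has size $n=(1+\epsilon)n_0$ and the number of hitting sets is $\hat h=|\phi|\le 2^{n_0}=2^{n/(1+\epsilon)}$. In the regime $\gamma\hat h>1$, using $2^{n}-\hat h=\Theta(2^{n})$ (since $\hat h\le 2^{n/(1+\epsilon)}$), this running time is
\[
\mathcal O\!\left(\Big(\tfrac{1}{\gamma}\,2^{\epsilon n_0/2}\sqrt{\tfrac{2^{n_0}}{\hat h}}\Big)^{1-\delta}\right),
\]
and in the complementary regime it is $\mathcal O\!\big(2^{(1+\epsilon)n_0(1-\delta)}\big)$. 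For the complementary regime, choosing $\epsilon$ small enough that $(1+\epsilon)(1-\delta)<1-\delta'$ for some $\delta'>0$ immediately contradicts the second part of \cref{thm:AppCountQSETH}. For the regime $\gamma\hat h>1$, I would invoke that the conclusion of \cref{thm:AppCountQSETH} is only non-vacuous when its forbidden running time is super-polynomial, which (by the proof of \cref{thm:AppCountQSETH}) happens on instances where the count $\hat h$ is $\Theta(2^{n_0})$ and $1/\gamma$ is super-polynomial in $n_0$; on such instances $\tfrac{1}{\gamma}\sqrt{2^{n_0}/\hat h}=\Theta(1/\gamma)$ is $2^{\Omega(n_0)}$, so for $\epsilon$ sufficiently small relative to $\delta$ the spurious factor $2^{\epsilon n_0(1-\delta)/2}$ is absorbed into the slack, i.e.\ $\big(\tfrac{1}{\gamma}2^{\epsilon n_0/2}\sqrt{2^{n_0}/\hat h}\big)^{1-\delta}\le\big(\tfrac{1}{\gamma}\sqrt{(2^{n_0}-\hat h)/\hat h}\big)^{1-\delta'}$ for a suitable $\delta'>0$, contradicting the first part of \cref{thm:AppCountQSETH}. (The parameter ranges match: $\gamma\in[2^{-n},0.4999)$ for $\Sigma$ gives $\gamma\in[2^{-(1+\epsilon)n_0},0.4999)$ for $\phi$, and any such $\gamma$ below $2^{-n_0}$ forces $\gamma\hat h\le 1$ and lands in the complementary branch.)

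The main obstacle I anticipate is exactly this running-time bookkeeping across the universe blow-up $n\mapsto(1+\epsilon)n$: naively the reduction delivers a lower bound phrased in terms of $2^{n/(1+\epsilon)}$, whereas the corollary is phrased in terms of $2^{n}$, so one must verify that the exponential ``$1-\delta$'' slack swallows the extra $2^{\Theta(\epsilon n)}$ factor — which it does precisely in the regime where $\gamma$-\#QSETH is a meaningful assumption (its forbidden time being itself $2^{\Omega(n)}$). Everything else — that the reduction is polynomial-time and exactly count-preserving, that its overhead is negligible against exponential running times, and that the same argument handles the ``otherwise'' branch — is routine.
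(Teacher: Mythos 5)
You take essentially the same route the paper sketches — compose the parsimonious CNFSAT-to-Hitting-Set reduction of \cite{CDLMNOPSW16} with the $\gamma$-\#QSETH lower bound of \cref{thm:AppCountQSETH} — and you correctly identify the central difficulty: the reduction blows the universe up from $n_0$ variables to $n=(1+\epsilon)n_0$, so re-expressing the assumed Hitting Set running time in terms of $n_0$ introduces a multiplicative factor of about $2^{\epsilon n_0(1-\delta)/2}$. Your handling of the second (``otherwise'') branch is sound: since that bound depends only on $n$, choosing $\epsilon$ small enough that $(1+\epsilon)(1-\delta)<1-\delta'$ closes the argument exactly as in \cref{cor:HitSetQSETH}.

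The gap is in the first branch, and it lies in the sentence that tries to absorb the spurious factor. You assert that the conclusion of \cref{thm:AppCountQSETH} is only non-vacuous when $1/\gamma$ is super-polynomial, ``which happens on instances where $\hat h=\Theta(2^{n_0})$ and $1/\gamma$ is super-polynomial,'' and then silently upgrade this to $1/\gamma=2^{\Omega(n_0)}$. That upgrade is false: super-polynomial does not mean exponential. If $1/\gamma$ is, say, $n_0^{\log n_0}$ or $2^{\sqrt{n_0}}$, the forbidden CNFSAT time $\approx \gamma^{-(1-\delta')}$ is indeed super-polynomial, but it is still $2^{o(n_0)}$, so the factor $2^{\epsilon n_0(1-\delta)/2}$ dominates and no choice of constants $\epsilon,\delta'$ can make the composed running time fit under the CNFSAT bound. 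Concretely, with $\hat h\approx 2^{n_0}/2$ and constant $\gamma$ (e.g.\ $\gamma=0.1$), the $\gamma$-\#QSETH bound is $O(1)$ while the upper bound you obtain on the composed algorithm's time is $\approx 2^{\epsilon n_0(1-\delta)/2}$, so the chain of inequalities simply does not close. There is also a logical soft spot in the ``only non-vacuous'' framing: to refute \cref{thm:AppCountQSETH} one must exhibit an algorithm fast \emph{on every instance}, so one is not free to discard the instances where the target bound happens to be small — one must show the composed algorithm is fast there too, which the reduction does not give you. In fairness, the paper itself offers no written proof and calls the corollary ``immediate,'' so the obstacle you are hitting is genuine and not cured there either; a correct statement would either restrict to $\gamma\le 2^{-cn}$ for a constant $c$ depending on $\delta$, or phrase the Hitting Set bound in terms of $n/(1+\epsilon)$ — but your proposed resolution of the blow-up does not work as written.
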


We also use our results from \cref{sec:LowerBoundsVariants} to show conditional lower bounds for problems in the complexity class $\textsf{P}$. More specifically, we study Orthogonal Vectors problem and its variants defined as follows.

\begin{defn}[Orthogonal Vectors (\OV)]
Let $d,n$ be natural numbers. The Orthogonal Vectors problem is defined as follows: The input is two lists A and B, each consisting of $n$ vectors from $\{0,1\}^d$. The goal is to find  vectors $a\in A, b\in B$ for which $\langle a,b \rangle=0$. We call such pair $(a,b)$ a pair of \emph{orthogonal vectors}.
\end{defn}

\begin{defn}[Variants of \OV]
   Let integers $d,n>0$, $\gamma \in \left[ \frac{1}{n^2}, 0.4999 \right) $, and the input is two lists $A$ and $B$ each consisting of $n$ vectors from $\{0,1\}^d$.
    \begin{enumerate}
        \item In the Count $\OV$ problem, the goal is to output $d'$; 
        \item in the Parity $\OV$ problem, the goal is to output $d'\bmod 2$; 
        \item in the strict-Majority $\OV$, the goal is to output $1$ if $d'>n^2/2$, 
        otherwise output $0$;
        \item in the $\gamma$-approximation of count $\OV$, the goal is to output an integer $d$ such that $(1-\gamma)d'<d<(1+\gamma)d'$; 
    \end{enumerate}
    where $d'=|(a,b): a\in A,b\in B, \langle a,b\rangle =0|$.
\end{defn}

Orthogonal Vectors is an important computational problem that lies in the complexity class $\textsf{P}$.  It turns out to be one of the central problems to show fine-grained hardness of problems in $\textsf{P}$~\cite{Vas15,abboud2018more}. Williams showed a reduction from CNF-SAT to $\OV$~\cite{williams2005new}. We observe that Williams's reduction is Parsimonious. Therefore we can also show quantum conditional lower bounds for counting versions of $\OV$ using our QSETH conjectures. In \cite{BPS21}, the authors showed $\mathcal{O}(n)$-hardness for $\OV$ under basic-QSETH assumption, and here we give $\mathcal{O}(n^2)$-hardness for counting versions of $\OV$, which might be useful for showing quantum conditional lower bounds for (variants of) other problems (like string problems or dynamic problems, see~\cite[Figure~1]{Vas15}).

\begin{corollary}\label{cor:OVQSETH}
For each constant $\delta>0$, there is no bounded-error quantum algorithm that solves
\begin{enumerate}
        \item Count $\OV$ in $\mathcal{O}(n^{2-\delta})$ time,  unless \#QSETH (see \cref{thm:countingQSETH}) is false;
        \item  Parity $\OV$ in $\mathcal{O}(n^{2-\delta})$ time, unless $\oplus$QSETH (see \cref{thm:ParityQSETH}) is false;
        \item Majority $\OV$ in $\mathcal{O}(n^{2-\delta})$ time, unless Majority-QSETH (see \cref{thm:MajorityQSETH}) is false.
    \end{enumerate}
\end{corollary}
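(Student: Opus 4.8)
The plan is to compose the classical reduction from \CNFSAT{} to \OV{} with the \acQSETH{}-based hardness of the counting, parity, and strict-majority variants of \CNFSAT{} proved in \Cref{thm:countingQSETH,thm:ParityQSETH,thm:MajorityQSETH}. Recall Williams's reduction~\cite{williams2005new}: given a \CNF{} formula $\phi$ on $n$ variables and $m$ clauses (and WLOG $n$ even), partition the variables into two halves $x_L, x_R$ of size $n/2$; for every assignment $\alpha\in\{0,1\}^{n/2}$ to $x_L$ build the vector $a_\alpha\in\{0,1\}^m$ whose $j$-th coordinate is $0$ exactly when clause $C_j$ contains a literal satisfied by $\alpha$, and symmetrically build $b_\beta$ for every $\beta\in\{0,1\}^{n/2}$ over $x_R$; set $A=(a_\alpha)_\alpha$ and $B=(b_\beta)_\beta$. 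Then $\langle a_\alpha, b_\beta\rangle$ is the number of clauses left unsatisfied by the joint assignment $(\alpha,\beta)$, so $\langle a_\alpha, b_\beta\rangle=0$ iff $(\alpha,\beta)\models\phi$. Hence the reduction is parsimonious: the number of orthogonal index-pairs $d'$ equals $|\phi|$, the two lists have $N:=2^{n/2}$ vectors, the dimension is $d=m$, and writing the lists down costs $\poly(n,m)\cdot 2^{n/2}=\widetilde{\mathcal{O}}(N)$ classical time. (If one insists the listed vectors be distinct, append the $n/2$ bits of $\alpha$ to $a_\alpha$ together with $n/2$ zeros to $b_\beta$ in fresh coordinates, and vice versa; no inner product changes.)

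Items~1 and~2 follow by contradiction. Suppose Count $\OV$ has a bounded-error quantum algorithm running in $\mathcal{O}(N^{2-\delta})$ time. Given a \CNF{} formula $\phi$ with $m\le cn^2$ clauses, produce the \OV{} instance above (in $\widetilde{\mathcal{O}}(2^{n/2})$ time, with dimension $d=m\le cn^2=\polylog(N)$), run the algorithm, and output $|\phi|=d'$. The total running time is $\mathcal{O}((2^{n/2})^{2-\delta})+\widetilde{\mathcal{O}}(2^{n/2})=\mathcal{O}(2^{n(1-\delta/2)})$, contradicting \#QSETH (\Cref{thm:countingQSETH}, which is already stated for formulas with $m\le cn^2$ clauses). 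Running the same reduction and returning the value modulo $2$ yields $\mathcal{O}(2^{n(1-\delta/2)})$-time \ParityCNFSAT{}, contradicting $\oplus$QSETH (\Cref{thm:ParityQSETH}).

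For item~3, observe that the reduction produces $N^2=2^n$ index-pairs in total, so the threshold condition "$d'>N^2/2$" is literally "$|\phi|>2^n/2$", i.e.\ the $\stMajorityCNFSAT$ predicate; an $\mathcal{O}(N^{2-\delta})$-time algorithm for (strict-)Majority $\OV$ therefore gives an $\mathcal{O}(2^{n(1-\delta/2)})$-time algorithm for $\stMajorityCNFSAT{}$, contradicting Majority-QSETH (Part~2 of \Cref{thm:MajorityQSETH}). The same template also handles the $\gamma$-multiplicative-factor variant: since $d'=|\phi|$ and $N^2=2^n$, a $\gamma$-approximation of $d'$ is a $\gamma$-approximation of $|\phi|$, and $\frac{1}{\gamma}\sqrt{(N^2-\hat{h})/\hat{h}}=\frac{1}{\gamma}\sqrt{(2^n-\hat{h})/\hat{h}}$, so a sufficiently fast $\gamma$-approximate Count $\OV$ algorithm refutes $\gamma$-\#QSETH (\Cref{thm:AppCountQSETH}).

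I expect no essential obstacle here: this is composition of a parsimonious polynomial-time reduction with exponential-time hypotheses. The two points that need care are (i) the parameter change $N=2^{n/2}$, so that merely beating the quadratic \OV{} running time becomes a $2^{(1-\Omega(1))n}$ algorithm for the corresponding \CNFSAT{} variant, with the $\poly(n,m)$ list-construction overhead absorbed; and (ii) the observation that Williams's reduction maps clauses to coordinates and never introduces new clauses, so the clause bound $m\le cn^2$ is preserved and the restricted-formula forms of \#QSETH, $\oplus$QSETH and Majority-QSETH in \Cref{thm:countingQSETH,thm:ParityQSETH,thm:MajorityQSETH} suffice.
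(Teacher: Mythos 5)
Your proposal is correct and follows exactly the route the paper takes (the paper simply states that Williams's split-and-expand reduction from \CNFSAT{} to \OV{} is parsimonious and leaves the rest implicit; you have spelled out the verification). The only point worth flagging is cosmetic: the paper's corollary says ``Majority $\OV$'' but the variant actually defined is strict-Majority $\OV$ (threshold $d'>n^2/2$), and you have correctly matched that to Part~2 of \Cref{thm:MajorityQSETH} (the $\stMajorityCNFSAT$ part) via the parsimonious identity $d'=|\phi|$ and $N^2=2^n$; all parameter bookkeeping (list size $N=2^{n/2}$, dimension $m\le cn^2=\polylog N$, and the $\widetilde{\mathcal{O}}(N)$ construction overhead being dominated) is handled properly.
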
 

\begin{corollary}\label{cor:gamma_OVQSETH}
 Let $\gamma\in \left[\frac{1}{2^n},0.4999\right) $. For each constant $\delta>0$, there is no bounded-error quantum algorithm that solves $\gamma$-multiplicative-factor approximation of count $\OV$ in time
\begin{enumerate}
    \item $\mathcal{O}\left(\frac{1}{\gamma} \sqrt{\frac{n^2-\hat{h}}{\hat{h}}}\right)^{1-\delta}$, if $\gamma \hat{h} > 1$ where $\hat{h}$ is the number of pairs of orthogonal vectors,
    \item $\mathcal{O}(n^{(2-\delta)})$, otherwise,
\end{enumerate} 
unless $\gamma$-\#QSETH (see \cref{thm:AppCountQSETH}, implied by \cref{thm:hardnessForSpecificEll}) is false. 
\end{corollary}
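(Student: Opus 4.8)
The plan is to derive \cref{cor:gamma_OVQSETH} from $\gamma$-\#QSETH (\cref{thm:AppCountQSETH}) by composing Williams's reduction from \CNFSAT{} to \OV{} with a hypothetical fast quantum algorithm for $\gamma$-approximately counting orthogonal pairs. Recall the reduction: given a \CNF{} formula $\phi$ on $N$ variables and $m$ clauses, partition the variables into two halves and, for each partial assignment $\alpha$ to the first (resp.\ second) half, form the vector in $\{0,1\}^m$ whose $i$-th coordinate is $1$ exactly when $\alpha$ fails to satisfy clause $C_i$. This yields two lists of $\nu := 2^{N/2}$ vectors in $\{0,1\}^m$, and $\langle u_\alpha, v_\beta\rangle$ counts the clauses satisfied by neither $\alpha$ nor $\beta$; hence $(u_\alpha,v_\beta)$ is orthogonal iff the combined assignment $(\alpha,\beta)$ satisfies $\phi$. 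The first step is to record the observation (flagged in the surrounding text) that this reduction is \emph{Parsimonious}: counting pairs by list position, the number of orthogonal pairs equals $|\phi|$ \emph{exactly}, even when two distinct partial assignments happen to produce identical vectors.

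Next I would instantiate this quantitatively. With $\nu = 2^{N/2}$ we get $\nu^2 = 2^N$, and $\hat h$ (the number of orthogonal pairs in the produced instance) equals the number $|\phi|$ of satisfying assignments; in particular the dichotomy $\gamma\hat h>1$ versus $\gamma\hat h\le 1$ is preserved, and the stated range of $\gamma$ corresponds, under $\nu^2=2^N$, to the range $[1/2^N,0.4999)$ covered by \cref{thm:AppCountQSETH}. Assume for contradiction that for some $\delta>0$ there is a bounded-error quantum algorithm $\mathcal B$ solving $\gamma$-approximate-count-\OV{} in time $\mathcal O\!\big((\tfrac1\gamma\sqrt{(\nu^2-\hat h)/\hat h})^{1-\delta}\big)$ when $\gamma\hat h>1$ and in $\mathcal O(\nu^{2-\delta})$ otherwise. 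Running the reduction (which may restrict to $m\le cN^2$, matching the hypothesis of \cref{thm:AppCountQSETH}, so that $\poly(m)=\poly(N)$) and then $\mathcal B$ yields a value within a $(1\pm\gamma)$ factor of $|\phi|$. Substituting $\nu^2=2^N$ and $\hat h=|\phi|$, the total running time is $\mathcal O\!\big((\tfrac1\gamma\sqrt{(2^N-\hat h)/\hat h})^{1-\delta}\big)$ (resp.\ $\mathcal O(2^{N(1-\delta)})$) plus the construction overhead $\widetilde{\mathcal O}(\nu\cdot\poly(N))=\widetilde{\mathcal O}(2^{N/2}\poly(N))$, and the latter is absorbed into the former by shrinking $\delta$, exactly as in \cref{cor:OVQSETH}. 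This contradicts $\gamma$-\#QSETH. The $\gamma\hat h\le 1$ branch is the exact-counting regime and follows the same way from the $2^{N(1-\delta)}$ clause of \cref{thm:AppCountQSETH} (equivalently, from \cref{thm:countingQSETH}).

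The proof is thus largely mechanical: once the parsimony of Williams's reduction is in hand, everything reduces to substituting $\nu^2=2^N$ and $\hat h=|\phi|$ into \cref{thm:AppCountQSETH}. The one genuinely essential ingredient is that parsimony verification (which is the observation the text attributes to itself); the only point requiring a little care afterward is the bookkeeping in the previous paragraph — checking that the $\widetilde{\mathcal O}(2^{N/2}\poly(N))$ cost of writing down the \OV{} instance never dominates the target time in the regimes where the claimed lower bound is nontrivial — which is the same tradeoff already handled for Count-\OV{} in \cref{cor:OVQSETH} and for strong simulation in \cref{thm:gammaQsim}.
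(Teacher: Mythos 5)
Your proof is correct and takes essentially the same approach as the paper, which only sketches this corollary by noting that Williams's reduction from \CNFSAT{} to \OV{} is parsimonious and then citing $\gamma$-\#QSETH. The details you supply — the observation that orthogonal pairs are counted by list position so $\hat{h}=|\phi|$ exactly, the substitution $\nu^2=2^N$, and the bookkeeping that the $\widetilde{\mathcal{O}}(2^{N/2}\poly(N))$ construction overhead only matters in regimes where the claimed bound is already sublinear in the \OV{} instance size (hence vacuous) — are exactly what the paper leaves implicit.
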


We can also give a quantum conditional lower bound for the parity Set-Cover problem defined as follows.

\begin{defn}[parity Set-Cover]
    For any integers $n,m>0$, the parity Set-Cover problem is defined as follows: The input is a collection $\Sigma=(S_1,\ldots,S_m)$, where $S_i\subset V$ and integer $t>0$, the goal is to output $\left|\{\mathcal{F}\subset \Sigma: \bigcup_{S\in \mathcal{F}}S=V, |\mathcal{F}|\leq t \}\right|\mod 2$.
\end{defn}

In \cite{CDLMNOPSW16}, the authors showed an efficient reduction from parity Hitting Set to parity Set-Cover. Using the third item of \cref{cor:HitSetQSETH} we get the following corollary:

\begin{corollary}\label{cor:paritySetCoverQSETH}
    For each constant $\delta >0$, there is no bounded-error quantum algorithm that solves parity Set-Cover in $2^{n(1-\delta)}$ time, unless $\oplus$QSETH (see \cref{thm:ParityQSETH}) is false.
\end{corollary}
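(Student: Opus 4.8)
The plan is to derive \cref{cor:paritySetCoverQSETH} by composing the parity-preserving reduction from parity Hitting Set to parity Set-Cover of~\cite{CDLMNOPSW16} with Item~3 of~\cref{cor:HitSetQSETH}, following the same template used for the other corollaries in this section. First I would recall that this reduction is \emph{Parsimonious}: it induces a bijection between the size-$\le t$ hitting sets of the input instance and the size-$\le t$ sub-families covering the universe of the output instance, so in particular it preserves the parity of the number of solutions, which is all the parity variant requires. Second, the reduction runs in time polynomial in the instance size, and --- crucially for the quantum setting --- it is a single classical preprocessing step applied to the input before any solver is invoked, so it composes with a quantum algorithm with no adaptivity subtleties.

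Then I would run the usual contradiction argument. Suppose that for some constant $\delta>0$ there is a bounded-error quantum algorithm $\mathcal{A}$ solving parity Set-Cover in $\mathcal{O}(2^{n(1-\delta)})$ time. Given an instance of parity Hitting Set arising from the reduction underlying~\cref{cor:HitSetQSETH} (that is, Theorem~3.4 of~\cite{CDLMNOPSW16} applied to a CNF on $n$ variables with at most $cn^2$ clauses, so that all relevant parameters stay $\poly(n)$), apply the parity-preserving reduction of~\cite{CDLMNOPSW16} to turn it into a parity Set-Cover instance, then run $\mathcal{A}$ on that instance and report its output. By parsimony this output is exactly the parity of the number of size-$\le t$ hitting sets, and the total running time is $\poly(n)+\mathcal{O}\bigl(2^{n'(1-\delta)}\bigr)$, where $n'$ is the hard parameter of the output Set-Cover instance. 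Provided $n' = n(1+o(1))$, this is $\mathcal{O}\bigl(2^{n(1-\delta')}\bigr)$ for some constant $\delta'>0$, contradicting Item~3 of~\cref{cor:HitSetQSETH} and hence refuting $\oplus$QSETH (\cref{thm:ParityQSETH}). Therefore no such $\mathcal{A}$ exists unless $\oplus$QSETH is false.

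The step that genuinely needs care --- and the main obstacle --- is verifying that the hard parameter is not inflated along the way: one must check both that the CNF-to-Hitting-Set reduction underlying~\cref{cor:HitSetQSETH} (Theorem~3.4 of~\cite{CDLMNOPSW16}) only blows the universe size up by a $1+o(1)$ factor (it does, using the $m\le cn^2$ clause bound), and that the Hitting-Set-to-Set-Cover reduction of~\cite{CDLMNOPSW16} likewise keeps the relevant parameter of the Set-Cover instance at $n(1+o(1))$ on this class of instances. Once that parameter bookkeeping is pinned down, the rest is automatic: ``Parsimonious'' hands us the parity statement for free, the polynomial-time classical reduction is trivially quantizable, and the $\poly(n)$ overhead is absorbed into the slack $\delta \to \delta'$ in the exponent.
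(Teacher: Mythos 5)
Your approach is the same as the paper's: compose the efficient reduction of~\cite{CDLMNOPSW16} from parity Hitting Set to parity Set-Cover with Item~3 of \cref{cor:HitSetQSETH}, absorbing the polynomial overhead into the exponent slack. One factual caveat: you describe the Hitting-Set-to-Set-Cover reduction as \emph{parsimonious} (a bijection between solutions), but the paper only asserts an efficient reduction at the \emph{parity} level, and conspicuously derives no counting lower bound for Set-Cover (contrast with Items~2 and~4 of \cref{cor:HitSetQSETH}), which suggests the reduction may preserve parity but not exact counts. Since you correctly note that parity preservation is all the argument requires, this slip does not affect the soundness of your conclusion, but the stronger parsimony claim should not be asserted without verifying it directly in~\cite{CDLMNOPSW16}.
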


\section{Discussion and open questions}
\label{sec:Conclusion}
We believe that this paper opens up the possibility of concluding quantum time lower bounds for many other problems, both for other variants of \CNFSAT{} and also for problems that are not immediately related to \CNFSAT{}. While this is a natural broad future direction to explore, we also mention the following few directions for future work which are more contextual to this paper.
\begin{itemize}
    \item One of the motivations to use \acQSETH{} in this paper is so that we can `tie' certain conjectures, that would have otherwise been standalone conjectures, to one main conjecture. But in the process, we conjecture compression obliviousness of several properties. It would be nice if we could also have an `umbrella' conjecture that allows one to establish compression obliviousness of several properties. For e.g., it would be nice if we could show that compression obliviousness of a natural property like \propertyCount{} or \propertyParity{} implies compression obliviousness of say \propertyAddErrorCount{}.
     
    \item It will be interesting to see if it is possible to use the QSETH framework (or the \acQSETH{} conjecture) to give a single exponential lower bound for \#CVP in Euclidean norm. 

    \item Using Boolean function Fourier analysis, we were able to show that the existence of (quantum-secure) PRFs imply that majority and parity are compression oblivious, whenever the input is given by a formula or circuit. This proof technique could plausibly be extended to larger sets of functions that have a similar structure, e.g., a natural candidate would be to show an equivalent statement for symmetric functions with non-negligible mass on high-degree Fourier coefficients.
    
    Additionally, extending this result to majority / parity for \acQSETH{}, i.e.\ CNF or DNF input, would be another step towards grounding the (necessary) assumption that such properties are compression oblivious.
\end{itemize}

\bibliographystyle{alpha}
\bibliography{Lattice.bib}

\newcommand{\etalchar}[1]{$^{#1}$}
\begin{thebibliography}{BHMT02}

\bibitem[ABDN18]{abboud2018more}
Amir Abboud, Karl Bringmann, Holger Dell, and Jesper Nederlof.
\newblock More consequences of falsifying seth and the orthogonal vectors conjecture.
\newblock In {\em Proceedings of the 50th Annual ACM SIGACT Symposium on Theory of Computing}, pages 253--266, 2018.

\bibitem[ABGS21]{DBLP:conf/soda/AggarwalBGS21}
Divesh Aggarwal, Huck Bennett, Alexander Golovnev, and Noah Stephens{-}Davidowitz.
\newblock Fine-grained hardness of {CVP(P)} - everything that we can prove (and nothing else).
\newblock In {\em {SODA}}, pages 1816--1835. {SIAM}, 2021.

\bibitem[ABL{\etalchar{+}}22]{ABLPS22}
Andris Ambainis, Harry Buhrman, Koen Leijnse, Subhasree Patro, and Florian Speelman.
\newblock Matching triangles and triangle collection: Hardness based on a weak quantum conjecture, 2022.

\bibitem[AC21]{AggarwalC21}
Divesh Aggarwal and Eldon Chung.
\newblock A note on the concrete hardness of the shortest independent vector in lattices.
\newblock {\em Inf. Process. Lett.}, 167:106065, 2021.

\bibitem[ACL{\etalchar{+}}20]{BasicQSETH-Aaronson-2020}
Scott Aaronson, Nai-Hui Chia, Han-Hsuan Lin, Chunhao Wang, and Ruizhe Zhang.
\newblock On the quantum complexity of closest pair and related problems.
\newblock In {\em Proceedings of the 35th Computational Complexity Conference}, CCC '20, Dagstuhl, DEU, 2020. Schloss Dagstuhl--Leibniz-Zentrum fuer Informatik.

\bibitem[AK22]{AK22CVPimpossible}
Divesh Aggarwal and Rajendra Kumar.
\newblock Why we couldn't prove {SETH} hardness of the closest vector problem for even norms, and of the subset sum problem!
\newblock {\em CoRR}, abs/2211.04385, 2022.

\bibitem[AS18]{AggarwalS18}
Divesh Aggarwal and Noah Stephens{-}Davidowitz.
\newblock (gap/s)eth hardness of {SVP}.
\newblock In {\em Proceedings of the 50th Annual {ACM} {SIGACT} Symposium on Theory of Computing, {STOC}}, pages 228--238. {ACM}, 2018.

\bibitem[AW21]{AW21}
Shyan Akmal and R.~Ryan Williams.
\newblock {MAJORITY-3SAT} (and related problems) in polynomial time.
\newblock {\em CoRR}, abs/2107.02748, 2021.

\bibitem[BBC{\etalchar{+}}01]{BBCMW01}
Robert Beals, Harry Buhrman, Richard Cleve, Michele Mosca, and Ronald de~Wolf.
\newblock Quantum lower bounds by polynomials.
\newblock {\em J. {ACM}}, 48(4):778--797, 2001.

\bibitem[BGS17]{BGS17}
Huck Bennett, Alexander Golovnev, and Noah Stephens{-}Davidowitz.
\newblock On the quantitative hardness of {CVP}.
\newblock In {\em 58th {IEEE} Annual Symposium on Foundations of Computer Science, {FOCS} 2017, Berkeley, CA, USA, October 15-17, 2017}, pages 13--24. {IEEE} Computer Society, 2017.

\bibitem[BHMT02]{BHMT02}
Gilles Brassard, Peter H{\o}yer, Michele Mosca, and Alain Tapp.
\newblock Quantum amplitude amplification and estimation, 2002.

\bibitem[BLPS22]{BLPS22}
Harry Buhrman, Bruno Loff, Subhasree Patro, and Florian Speelman.
\newblock Limits of quantum speed-ups for computational geometry and other problems: Fine-grained complexity via quantum walks.
\newblock In Mark Braverman, editor, {\em 13th Innovations in Theoretical Computer Science Conference, {ITCS} 2022, January 31 - February 3, 2022, Berkeley, CA, {USA}}, volume 215 of {\em LIPIcs}, pages 31:1--31:12. Schloss Dagstuhl - Leibniz-Zentrum f{\"{u}}r Informatik, 2022.

\bibitem[BP20]{BennettP20}
Huck Bennett and Chris Peikert.
\newblock Hardness of bounded distance decoding on lattices in {\(\mathscr{l}\)}\({}_{\mbox{p}}\) norms.
\newblock In {\em 35th Computational Complexity Conference, {CCC}}, volume 169 of {\em LIPIcs}, pages 36:1--36:21. Schloss Dagstuhl - Leibniz-Zentrum f{\"{u}}r Informatik, 2020.

\bibitem[BPS21]{BPS21}
Harry Buhrman, Subhasree Patro, and Florian Speelman.
\newblock {A Framework of Quantum Strong Exponential-Time Hypotheses}.
\newblock In Markus Bl\"{a}ser and Benjamin Monmege, editors, {\em 38th International Symposium on Theoretical Aspects of Computer Science (STACS 2021)}, volume 187 of {\em Leibniz International Proceedings in Informatics (LIPIcs)}, pages 19:1--19:19, Dagstuhl, Germany, 2021. Schloss Dagstuhl -- Leibniz-Zentrum f{\"u}r Informatik.

\bibitem[BPT22]{BennettPT22}
Huck Bennett, Chris Peikert, and Yi~Tang.
\newblock Improved hardness of {BDD} and {SVP} under gap-(s)eth.
\newblock In {\em 13th Innovations in Theoretical Computer Science Conference, {ITCS}}, volume 215 of {\em LIPIcs}, pages 19:1--19:12. Schloss Dagstuhl - Leibniz-Zentrum f{\"{u}}r Informatik, 2022.

\bibitem[BRSV17]{DBLP:conf/stoc/BallRSV17}
Marshall Ball, Alon Rosen, Manuel Sabin, and Prashant~Nalini Vasudevan.
\newblock Average-case fine-grained hardness.
\newblock In {\em {STOC}}, pages 483--496. {ACM}, 2017.

\bibitem[BRSV18]{BRSV-PoW18}
Marshall Ball, Alon Rosen, Manuel Sabin, and Prashant~Nalini Vasudevan.
\newblock Proofs of work from worst-case assumptions.
\newblock In Hovav Shacham and Alexandra Boldyreva, editors, {\em Advances in Cryptology -- CRYPTO 2018}, pages 789--819, Cham, 2018. Springer International Publishing.

\bibitem[CCK{\etalchar{+}}25]{CCKPS25Qnaturalproof}
Yanlin Chen, Yilei Chen, Rajendra Kumar, Subhasree Patro, and Florian Speelman.
\newblock Fine-grained complexity via quantum natural proofs.
\newblock arXiv:\href{https://arxiv.org/abs/2504.10363}{2504.10363}, 2025.

\bibitem[CDL{\etalchar{+}}16]{CDLMNOPSW16}
Marek Cygan, Holger Dell, Daniel Lokshtanov, D{\'a}niel Marx, Jesper Nederlof, Yoshio Okamoto, Ramamohan Paturi, Saket Saurabh, and Magnus Wahlstr{\"o}m.
\newblock On problems as hard as cnf-sat.
\newblock {\em ACM Transactions on Algorithms (TALG)}, 12(3):1--24, 2016.

\bibitem[CHM21]{CHM21}
Jordan~S. Cotler, Hsin-Yuan Huang, and Jarrod~R. McClean.
\newblock Revisiting dequantization and quantum advantage in learning tasks.
\newblock {\em ArXiv}, abs/2112.00811, 2021.

\bibitem[CIP06]{Calabro-DualityWidthClause-2006}
C.~Calabro, R.~Impagliazzo, and R.~Paturi.
\newblock A duality between clause width and clause density for sat.
\newblock In {\em 21st Annual IEEE Conference on Computational Complexity (CCC'06)}, pages 7 pp.--260, 2006.

\bibitem[DL21]{DL21}
Holger Dell and John Lapinskas.
\newblock Fine-grained reductions from approximate counting to decision.
\newblock {\em {ACM} Trans. Comput. Theory}, 13(2):8:1--8:24, 2021.

\bibitem[DLV20]{DLW20averageFineGrained}
Mina Dalirrooyfard, Andrea Lincoln, and Virginia {Vassilevska Williams}.
\newblock New techniques for proving fine-grained average-case hardness.
\newblock In {\em 61st {IEEE} Annual Symposium on Foundations of Computer Science, {FOCS}}, pages 774--785, 2020.

\bibitem[GR18]{GR18}
Oded Goldreich and Guy~N. Rothblum.
\newblock Counting t-cliques: Worst-case to average-case reductions and direct interactive proof systems.
\newblock In Mikkel Thorup, editor, {\em 59th {IEEE} Annual Symposium on Foundations of Computer Science, {FOCS} 2018, Paris, France, October 7-9, 2018}, pages 77--88. {IEEE} Computer Society, 2018.

\bibitem[Gro96]{grover1996fast}
Lov~K Grover.
\newblock A fast quantum mechanical algorithm for database search.
\newblock In {\em Proceedings of the twenty-eighth annual ACM symposium on Theory of computing}, pages 212--219, 1996.

\bibitem[HKW24]{huang2024finegrained}
Jeremy~Ahrens Huang, Young~Kun Ko, and Chunhao Wang.
\newblock On the (classical and quantum) fine-grained complexity of log-approximate cvp and max-cut.
\newblock {\em arXiv preprint arXiv:2411.04124}, 2024.

\bibitem[HNS20]{HNS20}
Cupjin Huang, Michael Newman, and Mario Szegedy.
\newblock Explicit lower bounds on strong quantum simulation.
\newblock {\em IEEE Transactions on Information Theory}, 66(9):5585--5600, 2020.

\bibitem[IP01]{IP01}
Russell Impagliazzo and Ramamohan Paturi.
\newblock On the complexity of k-sat.
\newblock {\em Journal of Computer and System Sciences}, 62(2):367--375, 2001.

\bibitem[IPZ01]{IPZ01}
Russell Impagliazzo, Ramamohan Paturi, and Francis Zane.
\newblock Which problems have strongly exponential complexity?
\newblock {\em Journal of Computer and System Sciences}, 63(4):512--530, 2001.

\bibitem[Kan83]{Kan83IP}
Ravi Kannan.
\newblock Improved algorithms for integer programming and related lattice problems.
\newblock In David~S. Johnson, Ronald Fagin, Michael~L. Fredman, David Harel, Richard~M. Karp, Nancy~A. Lynch, Christos~H. Papadimitriou, Ronald~L. Rivest, Walter~L. Ruzzo, and Joel~I. Seiferas, editors, {\em Proceedings of the 15th Annual {ACM} Symposium on Theory of Computing}, pages 193--206, 1983.

\bibitem[LLW19]{DBLP:conf/crypto/LaVigneLW19}
Rio LaVigne, Andrea Lincoln, and Virginia~Vassilevska Williams.
\newblock Public-key cryptography in the fine-grained setting.
\newblock In {\em {CRYPTO} {(3)}}, volume 11694 of {\em Lecture Notes in Computer Science}, pages 605--635. Springer, 2019.

\bibitem[NW99]{NW99}
Ashwin Nayak and Felix Wu.
\newblock The quantum query complexity of approximating the median and related statistics.
\newblock In Jeffrey~Scott Vitter, Lawrence~L. Larmore, and Frank~Thomson Leighton, editors, {\em Proceedings of the Thirty-First Annual {ACM} Symposium on Theory of Computing, May 1-4, 1999, Atlanta, Georgia, {USA}}, pages 384--393. {ACM}, 1999.

\bibitem[Pat92]{Paturi92}
Ramamohan Paturi.
\newblock On the degree of polynomials that approximate symmetric boolean functions (preliminary version).
\newblock In {\em Proceedings of the Twenty-Fourth Annual ACM Symposium on Theory of Computing}, STOC '92, pages 468--474, New York, NY, USA, 1992. Association for Computing Machinery.

\bibitem[Sch05]{Schuler-AlgoCNF-2005}
Rainer Schuler.
\newblock An algorithm for the satisfiability problem of formulas in conjunctive normal form.
\newblock {\em J. Algorithms}, 54(1):40--44, jan 2005.

\bibitem[Sch22]{Sch22}
Daan Schoneveld.
\newblock Quantum fine-grained complexity: Hitting-set and related problems.
\newblock Bachelors thesis, Universiteit van Amsterdam, 2022.

\bibitem[{Ste}16]{SteDiscreteGaussian16}
Noah {Stephens-Davidowitz}.
\newblock Discrete gaussian sampling reduces to cvp and svp.
\newblock In {\em SODA}, 2016.

\bibitem[{Vas}15]{Vas15}
Virginia {Vassilevska Williams}.
\newblock Hardness of easy problems: Basing hardness on popular conjectures such as the strong exponential time hypothesis (invited talk).
\newblock In {\em 10th International Symposium on Parameterized and Exact Computation (IPEC 2015)}. Schloss Dagstuhl-Leibniz-Zentrum fuer Informatik, 2015.

\bibitem[VDN10]{VDN10}
Maarten Van Den~Nes.
\newblock Classical simulation of quantum computation, the gottesman-knill theorem, and slightly beyond.
\newblock {\em Quantum Info. Comput.}, 10(3):258--271, mar 2010.

\bibitem[Wil05]{williams2005new}
Ryan Williams.
\newblock A new algorithm for optimal 2-constraint satisfaction and its implications.
\newblock {\em Theoretical Computer Science}, 348(2-3):357--365, 2005.

\bibitem[Wil18]{Vas19}
Virginia~Vassilevska Williams.
\newblock On some fine-grained questions in algorithms and complexity.
\newblock In {\em Proceedings of the international congress of mathematicians: Rio de janeiro 2018}, pages 3447--3487. World Scientific, 2018.

\end{thebibliography}

\appendix

\end{document}